\newtheorem{thm}{Theorem}
\newtheorem{prop}{Proposition}
\newtheorem{lemma}{Lemma}
\newtheorem{corol}{Corollary}
\theoremstyle{remark}
\newtheorem{remark}{Remark} 
\theoremstyle{definition}
\numberwithin{equation}{section}
\newcommand{\tr}{\mathop{\mathrm{tr}}\nolimits}
\renewcommand{\Re}{\mathop{\mathrm{Re}}\nolimits}
\renewcommand{\Im}{\mathop{\mathrm{Im}}\nolimits}
\newcommand{\ad}{\mathop{\mathrm{ad}}\nolimits}
\newcommand{\rank}{\mathop{\mathrm{rank}}\nolimits}
\newcommand{\ran}{\mathop{\mathrm{ran}}\nolimits}
\renewcommand{\d}{\mathrm{d}}
\newcommand{\rmd}{\mathrm{d}}
\newcommand{\rmi}{\mathrm{i}}
\newcommand{\rme}{\mathrm{e}}
\newcommand{\ket}[1]{|{#1}\rangle}
\newcommand{\bra}[1]{\langle{#1}|}
\renewcommand{\bullet}{\,\,\begin{picture}(-1,1)(-1,-3)\circle*{2}\end{picture}\ \,\,}
\definecolor{dgreen}{rgb}{0,0.5,0}
\definecolor{delete}{cmyk}{0.5,0,0,0}
\begin{document}
\title[Quantum Zeno Dynamics from General Quantum Operations]{Quantum Zeno Dynamics from General Quantum Operations}
\date{June 30, 2020}
\author{Daniel Burgarth}
\orcid{0000-0003-4063-1264}
\affiliation{Center for Engineered Quantum Systems, Dept.\ of Physics \& Astronomy, Macquarie University, 2109 NSW, Australia}
\author{Paolo Facchi}
\orcid{0000-0001-9152-6515}
\affiliation{Dipartimento di Fisica and MECENAS, Universit\`a di Bari, I-70126 Bari, Italy}
\affiliation{INFN, Sezione di Bari, I-70126 Bari, Italy}
\author{Hiromichi Nakazato}
\orcid{0000-0002-5257-7309}
\affiliation{Department of Physics, Waseda University, Tokyo 169-8555, Japan}
\author{Saverio Pascazio}
\orcid{0000-0002-7214-5685}
\affiliation{Dipartimento di Fisica and MECENAS, Universit\`a di Bari, I-70126 Bari, Italy}
\affiliation{INFN, Sezione di Bari, I-70126 Bari, Italy}
\author{Kazuya Yuasa}
\orcid{0000-0001-5314-2780}
\affiliation{Department of Physics, Waseda University, Tokyo 169-8555, Japan}
\maketitle
\begin{abstract}
We consider the evolution of an arbitrary quantum dynamical semigroup of a finite-dimensional quantum system under frequent kicks, where each kick is a generic quantum operation. We develop a generalization of the Baker-Campbell-Hausdorff formula allowing to reformulate such pulsed dynamics as a continuous one. This reveals an adiabatic evolution. We obtain a general type of quantum Zeno dynamics, which unifies all known manifestations in the literature as well as describing new types.
\end{abstract}


\section{Introduction}
Physics is a science that is often based on approximations. From high-energy physics to the quantum world, from relativity to thermodynamics, approximations not only help us to solve equations of motion, but also to reduce the model complexity and focus on important effects. Among the largest success stories of such approximations are the effective generators of dynamics (Hamiltonians, Lindbladians), which can be derived in quantum mechanics and condensed-matter physics. The key element in the techniques employed for their derivation is the separation of different time scales or energy scales.

Recently, in quantum technology, a more active approach to condensed-matter physics and quantum mechanics has been taken. Generators of dynamics are \emph{reversely engineered} by tuning system parameters and device design. This allows the creation of effective generators useful for many information-theoretic tasks, such as adiabatic quantum computing \cite{ref:AdiabaticQuantComp-RMP}, reservoir engineering \cite{ref:ReservoirEngineeringZoller}, quantum gates \cite{ref:GoogleQuantumSupremacy}, to name a few.

A key player for such approximations has been the adiabatic theorem \cite{Messiah,ref:KatoAdiabatic}. It exploits a clear separation of slow and fast time scales and has fascinated generations of physicists due to its simplicity, its beauty, and its intriguing geometric interpretations. 
In its original formulation, the adiabatic theorem deals with generators of dynamics. On the other hand, in quantum technology, we often deal with \textit{discrete} dynamics such as fixed gates and quantum maps. It is not always straightforward, and sometimes seemingly impossible, to translate between continuous and discrete descriptions. This difficulty can be seen more clearly in the case of non-Markovian quantum channels: these are physical operations [completely positive and trace-preserving (CPTP) maps \cite{ref:NielsenChuang10th}] for which there are no physical (e.g.~Lindbladian) generators [a non-Markovian quantum channel cannot be realized by the time-ordered integral of an infinitesimal CPTP generator of the Gorini-Kossakowski-Lindblad-Sudarshan (GKLS) form \cite{ref:DynamicalMap-Alicki,ref:GKLS-DariuszSaverio}]. Such maps typically arise from unitary dynamics on larger (but finite-dimensional) spaces, such as two-level fluctuators forming the environment of solid-state qubits. This makes them rather common in experiments \cite{ref:GuideSupercondQubits}.

The key question we pose in this article is if discrete dynamics can give rise to a limit evolution? We provide a positive answer to this question by providing a general mapping from pulsed to continuous dynamics. Such connections had been noted before only for a specific unitary case. For the generic situation, we need to develop a more powerful framework. This is because in the nonunitary case one has to take into account both nondiagonalizability and noninvertibility of the maps.

We focus on finite-dimensional systems and provide a route to connecting strong-coupling limits and frequent pulsed dynamics using three key ingredients. The first is a slight generalization of the Baker-Campbell-Hausdorff (BCH) theorem \cite{ref:Hall-BCH} (Lemma~\ref{thm:kicktofield} in Sec.~\ref{sec:KeyLemmas}). We employ operator logarithms to describe discrete maps by (potentially unphysical) generators and unify the product of exponentials as a new exponential up to the first order in a sense explained later. This only works for invertible maps (there is no logarithm of a noninvertible map since the logarithm of the vanishing eigenvalue does not exist), so the second ingredient is a delicate error estimate allowing us to take logarithms of only the invertible part of a generic map: the remaining part decays anyway (Lemma~\ref{prop:Peripheral} in Sec.~\ref{sec:KeyLemmas}). We believe that these key lemmas might find applications in other areas of quantum technology. The third and ultimate ingredient is a strong-coupling theorem developed by the present authors recently in Ref.\ \cite{ref:unity1} (see also Ref.\ \cite{ref:unity1-view}), which can be applied to unphysical generators (Theorem 1 of Ref.\ \cite{ref:unity1}). Even though the logarithm of a physical operation is not of the GKLS form in general, one can deal with it by the adiabatic theorem proved in Ref.\ \cite{ref:unity1}. Through the adiabatic theorem, low-energy components are eliminated and the physicality of the generator is restored. Our approach therefore works for arbitrary quantum maps without unnecessary structural assumptions.

Our generalization goes in two main directions: 1) the unitary dynamics  $\rme^{-\rmi t H}$ is generalized to an arbitrary quantum semigroup $\rme^{t \mathcal{L}}$; 2) the projective measurement $P$ is generalized to an arbitrary quantum operation $\mathcal{E}$. Moreover, we also generalize to 3) kicked dynamics of cycles of quantum operations $\{\mathcal{E}_1,\ldots,\mathcal{E}_m\}$.
This unifies many applications, such as the quantum Zeno effect (QZE) and dynamical decoupling, 
and provides a deeper relationship through adiabaticity. 
The main result of the present work is Theorem~\ref{thm:CPTPBB} in Sec.~\ref{sec:MainTheorem}.
It reveals that in quantum technology one has more freedom than previously thought to achieve effective generators (see the next section for a brief summary of the previous results).
In addition, we derive explicit bounds on matrix functions and the BCH formula for nondiagonalizable matrices (Lemma~\ref{lem:MatrixFuncBound} and Proposition~\ref{prop:BoundBCH} in Appendix~\ref{app:BCH}), which may be of interest independently.

\section{Relation to Previous Work}
The type of dynamics encountered in our main theorem can be considered as a general type of quantum Zeno dynamics (QZD). Different manifestations of QZD are known \cite{ref:ControlDecoZeno,ref:PaoloSaverio-QZEreview-JPA}, via (i) frequent projective measurements \cite{ref:QZEMisraSudarshan,ref:InverseZeno}, via (ii) frequent unitary kicks \cite{ref:BBZeno,ref:Bernad}, via (iii) strong continuous coupling/fast oscillations \cite{ref:SchulmanJumpTimePRA,ref:QZS}, and via (iv) strong damping \cite{ref:NoiseInducedZeno,ref:OreshkovAMS-PRL2010,ref:ZanardiDFS-PRL2014,ref:ZanardiDFS-PRA2015,ref:ZanardiDFS-PRA2017} (see Refs.\ \cite{ref:QZEExp-Ketterle,ref:QZEExp-Schafer:2014aa} for experimental comparisons).
Dynamical decoupling \cite{ref:DynamicalDecoupling-ViolaLloydPRA1998,ref:DynamicalDecoupling-ViolaKnillLloyd-PRL1999,ref:BangBang-VitaliTombesi-PRA1999,ref:DynamicalDecoupling-Zanardi-PLA1999,ref:BangBang-Duan-PLA1999,ref:DynamicalDecoupling-Viola-PRA2002,ref:BangBang-Uchiyama-PRA2002} (see also Ref.\ \cite{ref:MultiProjections-Ticozzi}) is also regarded as a manifestation of the QZD\@.
See Fig.~\ref{fig:QZDs} for a summary of these different manifestations of the QZD\@.
\begin{figure}[h]
\centering
\includegraphics[scale=0.36]{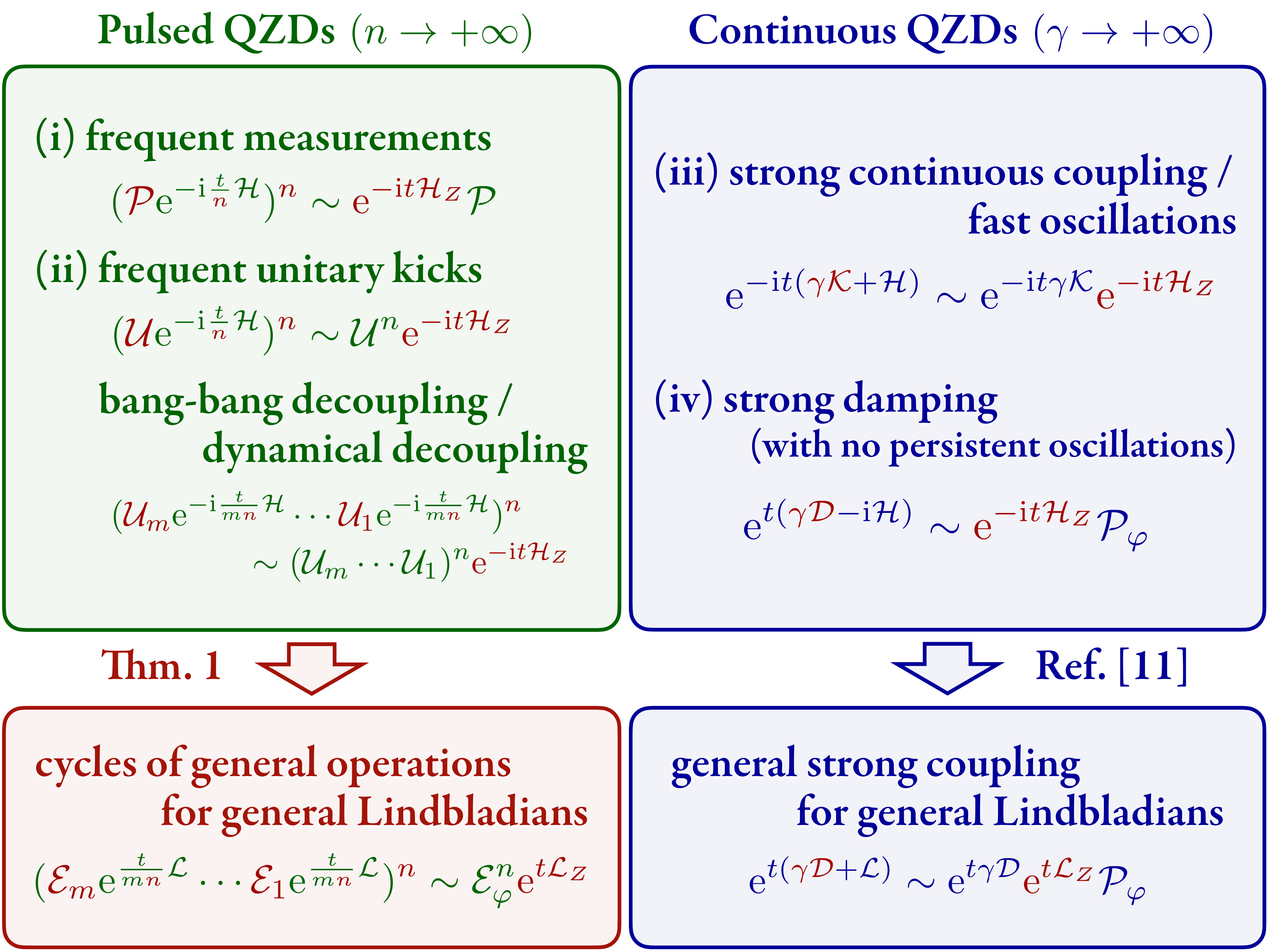}
\caption{
Different manifestations of QZD \cite{ref:ControlDecoZeno,ref:PaoloSaverio-QZEreview-JPA} (experimental comparisons are found in Refs.\ \cite{ref:QZEExp-Ketterle,ref:QZEExp-Schafer:2014aa}). The standard way to induce the QZD is to (i) frequently perform projective measurements, each represented by a projection $\mathcal{P}$ \cite{ref:QZEMisraSudarshan,ref:InverseZeno}. 
It can also be induced via (ii) frequent unitary kicks with an instantaneous unitary $\mathcal{U}$ \cite{ref:BBZeno,ref:Bernad}, or via cycles of a bang-bang sequence of multiple unitaries $\mathcal{U}_j$ ($j=1,\ldots,m$) \cite{ref:DynamicalDecoupling-ViolaLloydPRA1998,ref:DynamicalDecoupling-ViolaKnillLloyd-PRL1999,ref:BangBang-VitaliTombesi-PRA1999,ref:DynamicalDecoupling-Zanardi-PLA1999,ref:BangBang-Duan-PLA1999,ref:DynamicalDecoupling-Viola-PRA2002,ref:BangBang-Uchiyama-PRA2002}. In contrast to these pulsed strategies, the QZD can be induced by (iii) continuously applying a strong external field represented by a unitary generator $\mathcal{K}$ \cite{ref:SchulmanJumpTimePRA,ref:QZS}, or by (iv) putting the system under strong damping $\mathcal{D}$ relaxing the system to a steady subspace as $\rme^{t\mathcal{D}}\to \mathcal{P}_\varphi$
as $t\to+\infty$ \cite{ref:NoiseInducedZeno,ref:OreshkovAMS-PRL2010,ref:ZanardiDFS-PRL2014,ref:ZanardiDFS-PRA2015,Madalin,ref:VictorPRX,ref:ZanardiDFS-PRA2017}. In any case, a unitary generator $\mathcal{H}$ is projected to a \textit{Zeno generator} $\mathcal{H}_Z$. See the text concerning how it is actually projected. In Ref.\ \cite{ref:unity1}, we unified the continuous strategies and generalized them for general Lindbladians $\mathcal{D}$ and $\mathcal{L}$ on the basis of an adiabatic theorem.
 In this paper, we focus on the pulsed strategies, which we unify and generalize for general quantum operations $\mathcal{E}_j$ ($j=1,\ldots,m$) and general Lindbladian $\mathcal{L}$ in Theorem \ref{thm:CPTPBB}.}
\label{fig:QZDs}
\end{figure}

Since pulsed strategies will be the main subject of this article, it is convenient to recapitulate their main features.
\paragraph{(i) Frequent projective measurements:} 
The standard way to induce the QZD is to perform projective measurements frequently \cite{ref:QZS,ref:artzeno}.
Consider a quantum system on a finite-dimensional Hilbert space with a Hamiltonian $H=H^\dagger$.
During the unitary evolution $\rme^{-\rmi t\mathcal{H}}$ for time $t$ with $\mathcal{H}=[H,{}\bullet{}]$, we perform projective measurement $n$ times at regular time intervals.
The measurement is represented by a set of orthogonal projection operators $\{P_k\}$ acting on the Hilbert space, satisfying $P_kP_\ell=P_k\delta_{k\ell}$ and $\sum_kP_k=I$.
We retain no outcome of the measurement, i.e., our measurement is a nonselective one,  described by the projection $\mathcal{P}$ acting on a density operator $\rho$ as
\begin{equation}
\mathcal{P}(\rho)=\sum_kP_k\rho P_k.
\end{equation}
In the limit of infinitely frequent measurements (\textit{Zeno limit}), the evolution of the system is described by \cite{ref:PaoloSaverio-QZEreview-JPA}
\begin{equation}
  \left(\mathcal{P}\rme^{-\rmi \frac{t}{n}\mathcal{H}}\right)^n
  =\rme^{-\rmi t\mathcal{H}_Z}\mathcal{P}
  +\mathcal{O}(1/n)
  \quad \text{as} \quad n\to +\infty,
  \label{eqn:ZenoLimit}
\end{equation}
where
$\mathcal{H}_Z=[H_Z,{}\bullet{}]$ with
\begin{equation}
	H_Z=\sum_kP_kHP_k.
	\label{eqn:ZenoHamiltonian}
\end{equation}
In the Zeno limit, the transitions among the subspaces specified by the projection operators $\{P_k\}$ are suppressed.
This is the QZE\@.
The system meanwhile evolves unitarily within the subspaces (\textit{Zeno subspaces} \cite{ref:QZS}) with the projected Hamiltonian (\textit{Zeno Hamiltonian}) $H_Z$.
This is the QZD\@.

\paragraph{(ii) Frequent unitary kicks:} Instead of measurement, we can apply a series of \textit{unitary kicks} to induce the QZD \cite{ref:BBZeno,ref:Bernad}.
During the unitary evolution $\rme^{-\rmi t \mathcal{H}}$ for time $t$, we apply an instantaneous unitary transformation $\mathcal{U} = U \bullet U^\dagger$, with $U^\dagger = U^{-1}$, repeatedly $n$ times at regular time intervals $t/n$.
In the limit of infinitely frequent unitary kicks, the evolution of the system is described by \cite{ref:BBZeno}
\begin{equation}
  \left(\mathcal{U}\rme^{-\rmi \frac{t}{n} \mathcal{H}}\right)^n
  =\mathcal{U}^n\rme^{-\rmi t \mathcal{H}_Z}
  +\mathcal{O}(1/n)
  \quad \text{as} \quad n\to +\infty,
  \label{eqn:UnitaryKicks}
\end{equation}
where $\mathcal{H}_Z=[H_Z,{}\bullet{}]$ with $H_Z$ defined by Eq.\ (\ref{eqn:ZenoHamiltonian}) with the eigenprojections $\{P_k\}$ of the spectral representation of the unitary $U=\sum_k\rme^{-\rmi\eta_k}P_k$.\footnote{In this unitary case, one can think of this Zeno limit also in the Hilbert space, namely, we can study the limit of $(U\rme^{-\rmi\frac{t}{n}H})^n$ instead of $(\mathcal{U}\rme^{-\rmi\frac{t}{n}\mathcal{H}})^n$. The structure of the Zeno generator $\mathcal{H}_Z=\sum_k\mathcal{P}_k\mathcal{H}\mathcal{P}_k$, where $\{\mathcal{P}_k\}$ are the spectral projections of $\mathcal{U}$, is inherited by the structure of the Zeno Hamiltonian $H_Z=\sum_kP_kHP_k$. A proof of the equivalence of the two formulations can be found in Ref.\ \cite[Eq.\ (3.21)]{ref:unity1}.}
In this way, the frequent unitary kicks project the Hamiltonian $H$ in essentially the same way as the frequent projective measurements do, and the QZD is induced.

Instead of repeating the same unitary $U$, one can think of applying cycles of different unitaries $U_j$ ($j=1,\ldots,m$) as \cite{ref:BBZeno}
\begin{equation}
  \left(\mathcal{U}_m\rme^{-\rmi\frac{t}{mn}\mathcal{H}}\cdots\mathcal{U}_1\rme^{-\rmi\frac{t}{mn}\mathcal{H}}\right)^n
  =(\mathcal{U}_m\cdots\mathcal{U}_1)^n\rme^{-\rmi t \mathcal{H}_Z}
  +\mathcal{O}(1/n)
  \quad \text{as} \quad n\to +\infty.
  \label{eqn:UnitaryKicksBB}
\end{equation}
In this case, the Hamiltonian $H$ is projected to $H_Z$ as
\begin{equation}
H_Z=\sum_kP_k\overline{H}P_k,\qquad
\overline{H}=\frac{1}{m}\,\biggl(
H+\sum_{j=2}^mU_1^\dag\cdots U_{j-1}^\dag HU_{j-1}\cdots U_1
\biggr),
\end{equation}
where $\{P_k\}$ are the spectral projections of the product of the unitaries $U_m\cdots U_1=\sum_k\rme^{-\rmi\eta_k}P_k$.
Such schemes have been eagerly studied as methods of decoupling the dynamics of a system from an environment, and are called bang-bang control or dynamical decoupling \cite{ref:DynamicalDecoupling-ViolaLloydPRA1998,ref:DynamicalDecoupling-ViolaKnillLloyd-PRL1999,ref:BangBang-VitaliTombesi-PRA1999,ref:DynamicalDecoupling-Zanardi-PLA1999,ref:BangBang-Duan-PLA1999,ref:DynamicalDecoupling-Viola-PRA2002,ref:BangBang-Uchiyama-PRA2002}.
Roughly speaking, the idea is to rotate the Hamiltonian describing the interaction with an environment around all possible directions to average it out.
This can be regarded as a manifestation of the QZD\@.

\bigskip\smallskip
In a previous article \cite{ref:unity1}, we have unified and generalized the continuous strategies (iii) and (iv), by proving (Theorem~2 of Ref.\ \cite{ref:unity1}),
for arbitrary Markovian generators $\mathcal{L}$ and $\mathcal{D}$ of
the GKLS form \cite{ref:DynamicalMap-Alicki,ref:GKLS-DariuszSaverio},
\begin{equation}
\rme^{t(\gamma\mathcal{D}+\mathcal{L})}=\rme^{t \gamma \mathcal{D}}\rme^{t\mathcal{L}_Z}\mathcal{P}_\varphi+\mathcal{O}(1/\gamma)\quad\text{as}\quad\gamma\to +\infty
\label{eqn:Unity1}
\end{equation}
with
\begin{equation}
\mathcal{L}_Z
=\sum_{\alpha_k\in \rmi\mathbb{R}}\mathcal{P}_k\mathcal{L}\mathcal{P}_k,
\end{equation}
where $\mathcal{P}_k$ is the spectral projection onto the eigenspace of $\mathcal{D}$ belonging to the eigenvalue $\alpha_k$ 
and $\mathcal{P}_\varphi = \sum_{\alpha_k\in \rmi\mathbb{R}} \mathcal{P}_k$ is the projection onto the peripheral spectrum of $\mathcal{D}$ (i.e.\ its purely imaginary eigenvalues).

\paragraph{(iii) Strong continuous coupling/fast oscillations:}
If $\mathcal{D}$ and $\mathcal{L}$ are both unitary generators $\mathcal{D}=-\rmi\mathcal{K}=-\rmi[K,{}\bullet{}]$ and $\mathcal{L}=-\rmi\mathcal{H}=-\rmi[H,{}\bullet{}]$ with some Hamiltonians $K$ and $H$, the theorem (\ref{eqn:Unity1}) is reduced to 
\begin{equation}
\rme^{-\rmi t(\gamma\mathcal{K}+\mathcal{H})}=\rme^{-\rmi t \gamma \mathcal{K}}\rme^{-\rmi t\mathcal{H}_Z}+\mathcal{O}(1/\gamma)\quad\text{as}\quad\gamma\to +\infty,
\end{equation}
where $\mathcal{H}_Z=[H_Z,{}\bullet{}]$ with $H_Z$ again given by Eq.\ (\ref{eqn:ZenoHamiltonian}) with the spectral projections $\{P_k\}$ of $K$ \cite{ref:SchulmanJumpTimePRA,ref:QZS,alphaTrot}.

\paragraph{(iv) Strong damping (with no persistent oscillations):}
If $\mathcal{D}$ is a generator describing an amplitude-damping to a stationary subspace specified by a projection $\mathcal{P}_\varphi$ with no persistent oscillations there, i.e.\ if $\rme^{t\mathcal{D}}\to\mathcal{P}_\varphi$ as $t\to+\infty$, then the theorem (\ref{eqn:Unity1}) reproduces 
\begin{equation}
\rme^{t(\gamma\mathcal{D}-\rmi\mathcal{H})}
=\rme^{-\rmi t\mathcal{H}_Z}\mathcal{P}_\varphi+\mathcal{O}(1/\gamma)\quad\text{as}\quad\gamma\to +\infty,
\end{equation}
where $\mathcal{H}_Z=\mathcal{P}_\varphi\mathcal{H}\mathcal{P}_\varphi$ \cite{ref:NoiseInducedZeno,ref:OreshkovAMS-PRL2010,ref:ZanardiDFS-PRL2014,ref:ZanardiDFS-PRA2015,Madalin,ref:VictorPRX,ref:ZanardiDFS-PRA2017}.

\bigskip\smallskip
In this way,  theorem (\ref{eqn:Unity1}) includes both previously known QZDs (iii) and (iv), and both mechanisms can be effective simultaneously.
It  also generalizes the QZDs (iii) and (iv) to nonunitary (Markovian) evolutions, projecting generic GKLS generators instead of Hamiltonians.
This theorem has been proved by the generalized adiabatic theorem (Theorem~1 of Ref.\ \cite{ref:unity1}), which is an extension of the adiabatic theorem proved by Kato \cite{ref:KatoAdiabatic} for unitary evolution.

With our main Theorem~\ref{thm:CPTPBB} below, we further unify the pulsed strategies (i) and (ii) with the continuous ones (iii) and (iv).
This also enables us to unify and generalize bang-bang decoupling/dynamical decoupling to those by cycles of multiple quantum operations (Theorem~\ref{thm:CPTPBB}), including non-Markovian (indivisible) ones \cite{wolfnonmarkovian}: structural assumptions for the pulses (kicks) are relaxed in our main Theorem~\ref{thm:CPTPBB}\@.
As an interesting variant of it, we present the QZD via cycles of different selective projective measurements (Corollary~\ref{thm:MeasBB}), generalizing the standard QZD via (i) frequent selective projective measurements.
We shall look at some simple examples and show that realistic unsharp (nonprojective) measurements can be practically more efficient to induce the QZD than the strong (projective) measurements (Sec.~\ref{sec:Examples}). This generalizes Refs.\ \cite{ref:LidarZenoPRL,ref:LidarZenoJPA} for the QZD by a particular type of weak measurement.
The generalization to the QZD via general quantum operations was also explored in Ref.\ \cite{ref:QZEbyQuantOpKwek}, where it is required that the generator be a Hamiltonian, and a single kick is repeated; most importantly, the initial state must be an invariant state of the kick.
Our Theorem~\ref{thm:CPTPBB} does not require such structural assumptions: it deals with cycles of multiple general quantum operations, and works for generic Markovian dynamics with a GKLS generator for arbitrary initial conditions.

\section{Some Preliminaries on Quantum Operations}
Let us recall that every linear operator $A$ on a finite-dimensional space can be expressed (essentially) uniquely in terms of its Jordan normal form (\textit{canonical form} or \textit{spectral representation}) \cite{ref:Katobook}:
\begin{equation}
\label{eq:Jordanform}
A = \sum_k (\lambda_k P_k + N_k),
\end{equation}
where $\{\lambda_k\}$, the \textit{spectrum} of $A$, is the set of distinct eigenvalues of $A$ ($\lambda_k\neq \lambda_\ell$ for $k\neq \ell$), $\{P_k\}$, the \textit{spectral projections} of $A$, are the corresponding eigenprojections, satisfying
\begin{equation}
\label{eq:PVM}
P_k P_\ell = \delta_{k\ell} P_k, \qquad \sum_k P_k = I,
\end{equation}
for all $k$ and $\ell$, and $\{N_k\}$ are the corresponding \textit{nilpotents} of $A$, satisfying for all $k$ and $\ell$
\begin{equation}
\label{eq:nilpot}
P_k N_\ell = N_{\ell} P_k = \delta_{k\ell} N_k, \qquad N_k^{n_k} = 0,
\end{equation}
for some integer $1\leq n_k\leq \rank P_k$.

Notice that the spectral projections, which determine a partition of the space through the \textit{resolution of identity}~\eqref{eq:PVM}, are not Hermitian in general, $P_k \neq P_k^\dagger$. An eigenvalue $\lambda_k$ of $A$ is called \textit{semisimple} or \textit{diagonalizable} if the corresponding nilpotent $N_k$ is zero (equivalently, $n_k=1$). The operator $A$ is diagonalizable if and only if all its eigenvalues are semisimple.

The main actors in our investigation are the \textit{quantum operations} \cite{ref:NielsenChuang10th}, that is, maps $\mathcal{E}$ that are completely positive (CP) and trace-nonincreasing, $\tr [\mathcal{E}{(\rho)}]\leq\tr\rho$. We recall that a map $\mathcal{E}$ on a $d$-dimensional quantum system is a quantum operation  iff it has an operator-sum (Kraus) representation of the form
\begin{equation}
\label{eq:Kraussrep}
\mathcal{E} (X) = \sum_{j=1}^m K_j X K_j^\dagger\quad \text{with}\quad \sum_{j =1}^m K_j^\dagger K_j \leq I,
\end{equation}
where $m\leq d^2$. When $\sum_j K_j^\dagger K_j = I$, the map is trace-preserving (TP) and one gets a completely positive trace-preserving (CPTP) map, also known as a \textit{quantum channel}.

In the following, it will be convenient to endow the space of operators on a $d$-dimensional Hilbert space with the Hilbert-Schmidt inner product $\langle A | B \rangle_{2} = \tr (A^\dagger B)$, which makes the space of operators a $d^2$-dimensional Hilbert space~$\mathcal{T}_2$. 
We get that the adjoint $\mathcal{E}^\dagger$ [with respect to the Hilbert-Schmidt inner product, defined through $\langle A|\mathcal{E}(B)\rangle_{2} =\langle\mathcal{E}^\dag(A)|B\rangle_{2} $ for all $A, B\in \mathcal{T}_2$] of the quantum operation $\mathcal{E}$ in Eq.~\eqref{eq:Kraussrep} has the operator sum 
\begin{equation}
\mathcal{E}^\dagger (X) = \sum_j K_j^\dagger X K_j,
\end{equation}
and thus is subunital $\mathcal{E}^\dagger (I) \leq I$.
It is unital iff  $\mathcal{E}$ is CPTP\@.
Moreover, given a Hermitian operator $H=H^\dagger$, then the corresponding superoperator $\mathcal{H}  
= [H,{}\bullet{}]$ is also Hermitian (with respect to the Hilbert-Schmidt inner product), 
$\mathcal{H}=\mathcal{H}^\dagger$, namely $\langle A | \mathcal{H}(B) \rangle_{2}  = \langle \mathcal{H}(A) | B \rangle_{2} $ for all $A, B\in \mathcal{T}_2$. As a consequence, the unitary group $t\mapsto \rme^{-\rmi t H}$ is lifted to a unitary group $t\mapsto \rme^{-\rmi t \mathcal{H}} = \rme^{-\rmi t H} \bullet \rme^{\rmi t H}$. 
Finally, a (Hermitian) projection $P=P^2\,(=P^\dagger)$ is lifted to a (Hermitian) projection $\mathcal{P} = P\bullet P$, satisfying $\mathcal{P}=\mathcal{P}^2\,(=\mathcal{P}^\dagger)$.

We mainly use the operator norm (2--2 norm), when we need to specify the norm of a map $\mathcal{A}: \mathcal{T}_2 \to \mathcal{T}_2$, 
\begin{equation}
\|\mathcal{A} \| = \sup_{\|X\|_2 = 1} \| \mathcal{A} (X)\|_2,
\label{eqn:OpNorm}
\end{equation}
where $\|X\|_2=(\langle X|X\rangle_{2})^{1/2}$ for $X\in\mathcal{T}_2$.
It coincides with the largest singular value of $\mathcal{A}$, 
	\begin{equation}
		\|\mathcal{A}\|
		= \sup_{\|X\|_2 = 1}(\langle \mathcal{A}(X)| \mathcal{A}(X)\rangle_{2})^{1/2}
		= \sup_{\|X\|_2 = 1}(\langle X|(\mathcal{A}^\dag\mathcal{A})(X)\rangle_{2})^{1/2}
		=r(\mathcal
	{A}^\dag\mathcal{A})^{1/2} ,
\end{equation}
with $r(\mathcal{A}^\dag\mathcal{A})$ being the spectral radius of $\mathcal{A}^\dag\mathcal{A}$.

We now state without proofs some useful spectral properties of the quantum operations. For further details and proofs see e.g.\ Refs.\ \cite{ref:Mixing-Wolf,ref:TextbookWatrous}, and in particular Propositions~6.1--6.3 and Theorem~6.1 of Ref.\ \cite{ref:Mixing-Wolf}.
\begin{prop}[Spectral properties of quantum operations]
\label{prop:CPTP}
Let $\mathcal{E}$ be a quantum operation on a finite-dimensional space. Then, the following properties hold:
\begin{enumerate}[label=(\roman*)]
\item The spectrum $\{\lambda_k\}$ of $\mathcal{E}$ is confined in the closed unit disc $\mathbb{D}=\{\lambda \in \mathbb{C}, |\lambda|\leq 1\}$. Moreover, all the ``peripheral'' eigenvalues, belonging to the boundary of $\mathbb{D}$, i.e.~on the unit circle $\partial\mathbb{D}=\{\lambda \in \mathbb{C}, |\lambda|= 1\}$, are semisimple. If $\mathcal{E}$ is TP, then $\lambda=1$ is an eigenvalue of $\mathcal{E}$.

\item The canonical form of $\mathcal{E}$ reads
\begin{equation}
	\mathcal{E}
	=\mathcal{E}_\varphi +\sum_{|\lambda_k|<1} (\lambda_k \mathcal{P}_k+\mathcal{N}_k),
	\label{eqn:SpectralDecompE}
\end{equation}
where  $\{\mathcal{P}_k\}$ and $\{\mathcal{N}_k\}$ are the spectral projections and the nilpotents of $\mathcal{E}$, respectively, and
\begin{equation}
	\mathcal{E}_\varphi=\sum_{|\lambda_k|=1}\lambda_k\mathcal{P}_k
\end{equation} 
is the ``peripheral'' part of $\mathcal{E}$, i.e.\ its component  belonging to the peripheral spectrum on the unit circle $\partial\mathbb{D}$.

\item
The peripheral part $\mathcal{E}_\varphi$ and the projection onto the peripheral spectrum of $\mathcal{E}$,
\begin{equation}
	\mathcal{P}_\varphi=\sum_{|\lambda_k|=1}\mathcal{P}_k,
\end{equation}
are both quantum operations, and $\mathcal{E}_\varphi=\mathcal{E}\mathcal{P}_\varphi=\mathcal{P}_\varphi \mathcal{E}$. The maps  $\mathcal{E}_\varphi$ and $\mathcal{P}_\varphi$ are TP iff $\mathcal{E}$ is TP\@.

\item
The inverse  of $\mathcal{E}_\varphi$ on the range of $\mathcal{P}_\varphi$,
\begin{equation}
\mathcal{E}_\varphi^{-1} =  \sum_{|\lambda_k|=1}\lambda_k^{-1} \mathcal{P}_k,
\end{equation}
satisfying
$\mathcal{E}_\varphi^{-1}\mathcal{E}_\varphi
	=\mathcal{E}_\varphi\mathcal{E}_\varphi^{-1}
	=\mathcal{P}_\varphi$,
is also a quantum operation, and it is TP if $\mathcal{E}$ is TP\@.
\end{enumerate}
\end{prop}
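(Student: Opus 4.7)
The plan is to treat the four parts in order, using the Jordan decomposition~\eqref{eq:Jordanform}, contractivity of quantum operations, and a simultaneous Dirichlet-type approximation to realize $\mathcal{P}_\varphi$, $\mathcal{E}_\varphi$, and $\mathcal{E}_\varphi^{-1}$ as operator-norm limits of iterates $\mathcal{E}^{n_\ell}$. The quantum-operation property will then be inherited, since complete positivity, trace non-increase, and trace preservation are all closed under pointwise limits on a finite-dimensional space.

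For (i), I would first establish that every quantum operation is a contraction in the trace norm. Using the Kraus form~\eqref{eq:Kraussrep}, for self-adjoint $X=X_+-X_-$ with $X_\pm\geq 0$, complete positivity gives $\mathcal{E}(X_\pm)\geq 0$ and trace non-increase gives $\tr\mathcal{E}(X_\pm)\leq\tr X_\pm$, so $\|\mathcal{E}(X)\|_1\leq\|X\|_1$; this extends to general $X$ by Hermitian decomposition. Hence the spectral radius of $\mathcal{E}$ is at most $1$. A nontrivial nilpotent $\mathcal{N}_k\neq 0$ attached to a peripheral eigenvalue $|\lambda_k|=1$ would force $\|\mathcal{E}^n\|$ to grow at least polynomially in $n$, contradicting the uniform bound $\|\mathcal{E}^n\|_{1\to 1}\leq 1$; so peripheral eigenvalues are semisimple. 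If $\mathcal{E}$ is TP then $\mathcal{E}^\dagger(I)=\sum_j K_j^\dagger K_j=I$, so $1\in\mathrm{spec}(\mathcal{E}^\dagger)=\overline{\mathrm{spec}(\mathcal{E})}$, giving $1\in\mathrm{spec}(\mathcal{E})$. Statement (ii) then follows immediately by substituting these facts into the general Jordan form~\eqref{eq:Jordanform}.

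For (iii)--(iv), the crucial step is a synchronization argument. Writing the peripheral eigenvalues as $\lambda_k=\rme^{\rmi\theta_k}$ with $k\in I_\varphi$, the multidimensional Dirichlet approximation theorem provides a strictly increasing sequence $n_\ell\in\mathbb{N}$, $n_\ell\to\infty$, such that $\rme^{\rmi n_\ell\theta_k}\to 1$ for every $k\in I_\varphi$. Then
\begin{equation*}
\mathcal{E}_\varphi^{n_\ell}=\sum_{k\in I_\varphi}\rme^{\rmi n_\ell\theta_k}\mathcal{P}_k\;\longrightarrow\;\sum_{k\in I_\varphi}\mathcal{P}_k=\mathcal{P}_\varphi,
\end{equation*}
while the strictly contractive part of~\eqref{eqn:SpectralDecompE} satisfies $\|(\lambda_k\mathcal{P}_k+\mathcal{N}_k)^n\|=O(n^{n_k-1}|\lambda_k|^n)\to 0$ for each $|\lambda_k|<1$. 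Consequently $\mathcal{E}^{n_\ell}\to\mathcal{P}_\varphi$ in operator norm, exhibiting $\mathcal{P}_\varphi$ as a limit of quantum operations and hence as a quantum operation itself, TP iff $\mathcal{E}$ is. The same argument applied to $\mathcal{E}^{n_\ell+1}\to\mathcal{E}_\varphi$ and (for $n_\ell\geq 1$) to $\mathcal{E}^{n_\ell-1}\to\mathcal{E}_\varphi^{-1}$ yields the remaining claims in (iii) and (iv). The identities $\mathcal{E}_\varphi=\mathcal{E}\mathcal{P}_\varphi=\mathcal{P}_\varphi\mathcal{E}$ and $\mathcal{E}_\varphi^{-1}\mathcal{E}_\varphi=\mathcal{E}_\varphi\mathcal{E}_\varphi^{-1}=\mathcal{P}_\varphi$ follow by direct computation from~\eqref{eqn:SpectralDecompE} and the orthogonality relations~\eqref{eq:PVM}--\eqref{eq:nilpot}.

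The main obstacle is precisely this synchronization step: when the peripheral spectrum contains mutually incommensurable phases, no single iterate of $\mathcal{E}$ reproduces $\mathcal{P}_\varphi$, and the naive Ces\`aro mean $\frac{1}{N}\sum_{n=0}^{N-1}\mathcal{E}^n$ only recovers the eigenspace at $\lambda=1$. The Dirichlet-type subsequence bypasses this by driving all peripheral phases arbitrarily close to $0$ simultaneously, which is what converts a purely spectral decomposition into a structural statement, realizing $\mathcal{P}_\varphi$, $\mathcal{E}_\varphi$, and $\mathcal{E}_\varphi^{-1}$ as limits of genuine CP (and, when applicable, TP) maps.
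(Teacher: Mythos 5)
The paper does not prove this proposition at all: it is stated ``without proofs'' with a pointer to Propositions~6.1--6.3 and Theorem~6.1 of Wolf's lecture notes, so there is no in-paper argument to compare against. Your proof is essentially the standard one from that reference and is correct in its main lines: uniform boundedness of the powers $\mathcal{E}^n$ gives spectral radius $\leq 1$ and semisimplicity of peripheral eigenvalues, and the Dirichlet/Kronecker subsequence $\mathcal{E}^{n_\ell}\to\mathcal{P}_\varphi$ (with the shifted subsequences giving $\mathcal{E}_\varphi$ and $\mathcal{E}_\varphi^{-1}$) combined with closedness of the set of quantum operations in finite dimension is exactly the right way to turn the spectral decomposition into the CP/TP statements of (iii)--(iv). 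Two small points deserve attention. First, your trace-norm contraction bound $\|\mathcal{E}(X)\|_1\leq\|X\|_1$ is only established for Hermitian $X$ by the $X_+-X_-$ argument; for general $X$ the Hermitian decomposition gives a factor $2$, which is harmless for the spectral-radius and Jordan-block conclusions (any uniform bound on $\|\mathcal{E}^n\|$ suffices; alternatively the paper's Lemma~\ref{lem:NormCPT} bound $\|\mathcal{E}^n\|\leq\sqrt{d}$ does the job), but you should not assert the constant $1$ on all of $\mathcal{T}_1$ without invoking the P\'erez-Garc\'{\i}a--Wolf--Petz--Ruskai result. Second, the ``iff'' in (iii) is only proved in one direction by your limit argument; the converse (TP of $\mathcal{P}_\varphi$ or $\mathcal{E}_\varphi$ implies TP of $\mathcal{E}$) needs the one-line observation that $\tr[\mathcal{E}^{n}(\rho)]$ is nonincreasing in $n$ and bounded above by $\tr[\mathcal{E}(\rho)]\leq\tr\rho$ for $n\geq1$, so if the limit along $n_\ell$ equals $\tr\rho$ then $\tr[\mathcal{E}(\rho)]=\tr\rho$ already. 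With these two touches the proof is complete.
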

Similar properties hold for GKLS generators $\mathcal{L}$, whose exponential $\rme^{t\mathcal{L}}$ is CPTP for all $t\geq 0$. See Proposition~2 of Ref.\ \cite{ref:unity1}.
\begin{remark}
Note that if the peripheral spectrum is empty then all peripheral maps are null, $\mathcal{P}_\varphi= \mathcal{E}_\varphi = \mathcal{E}_\varphi^{-1} = 0$. By property (i), this cannot happen if $\mathcal{E}$ is TP\@.
\end{remark}

\section{Main Theorem}
\label{sec:MainTheorem}
The main result of this paper is the unification of the pulsed QZDs, via (i) frequent projective measurements and via (ii) frequent unitary kicks, which at the same time allows us to generalize the pulses to arbitrary quantum operations.
We further generalize the bang-bang decoupling/dynamical decoupling to cycles of generic kicks.
These are all summarized in the following theorem, which will be proved in Sec.~\ref{sec:ProofTheorems}:
\begin{thm}[QZD by cycles of generic kicks]\label{thm:CPTPBB}
Let $\{\mathcal{E}_1,\ldots,\mathcal{E}_m\}$ be a finite set of quantum operations and $\mathcal{L}$ be a GKLS generator acting on a finite-dimensional quantum system. Then, we have 
\begin{equation}
\left(\mathcal{E}_m \rme^{\frac{t}{mn}\mathcal{L}}\cdots
\mathcal{E}_1 \rme^{\frac{t}{mn}\mathcal{L}}\right)^n
=\mathcal{E}_\varphi^{n}\rme^{t \mathcal{L}_Z}+\mathcal{O}(1/n)
\quad\text{as}\quad
n\to+\infty,
\label{eqn:CPTPZenomulti}
\end{equation}
uniformly in $t$ on compact intervals of $[0,+\infty)$,
with
\begin{equation}
\mathcal{L}_Z
=\sum_{|\lambda_k|=1}\mathcal{P}_k\overline{\mathcal{L}}\mathcal{P}_k,\qquad
\overline{\mathcal{L}}
=\frac{1}{m}\,\biggl(\mathcal{L} + \mathcal{E}_\varphi^{-1}\sum_{j=2}^m\mathcal{E}_m\cdots\mathcal{E}_j\mathcal{L}\mathcal{E}_{j-1}\cdots\mathcal{E}_1\biggr),
\label{eq:overLdef}
\end{equation}
where $\mathcal{P}_k$ is the spectral projection 
of $\mathcal{E} =\mathcal{E}_m\cdots\mathcal{E}_1$ belonging to the eigenvalue $\lambda_k$,
and $\mathcal{E}_\varphi$ and $\mathcal{E}_\varphi^{-1}$ are the peripheral part of $\mathcal{E}$ and its peripheral inverse, respectively.
\end{thm}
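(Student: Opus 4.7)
The plan combines the three ingredients announced in the introduction: Lemma~\ref{prop:Peripheral} to control the non-peripheral part of the kick, Lemma~\ref{thm:kicktofield} to merge kick and short-time evolution into a single exponential, and the generalized adiabatic theorem of Ref.~\cite{ref:unity1}. Set $\tau:=t/(mn)$ and $\mathcal{F}_n:=\mathcal{E}_m\rme^{\tau\mathcal{L}}\cdots\mathcal{E}_1\rme^{\tau\mathcal{L}}$, so that~(\ref{eqn:CPTPZenomulti}) reads $\mathcal{F}_n^n=\mathcal{E}_\varphi^n\rme^{t\mathcal{L}_Z}+O(1/n)$ uniformly for $t$ in compact intervals. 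The guiding principle is that iterating $\mathcal{F}_n$ suppresses the non-peripheral spectrum of $\mathcal{E}=\mathcal{E}_m\cdots\mathcal{E}_1$, so the effective dynamics lives on $\ran\mathcal{P}_\varphi$, where $\mathcal{E}_\varphi$ is invertible and $\log\mathcal{E}_\varphi$ has purely imaginary spectrum.

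I would start with the Taylor expansion $\rme^{\tau\mathcal{L}}=I+\tau\mathcal{L}+O(\tau^2)$ and multiply out to get
\begin{equation*}
\mathcal{F}_n = \mathcal{E} + \tau\sum_{j=1}^m\mathcal{E}_m\cdots\mathcal{E}_j\,\mathcal{L}\,\mathcal{E}_{j-1}\cdots\mathcal{E}_1 + O(\tau^2).
\end{equation*}
The identities $\mathcal{E}_\varphi^{-1}\mathcal{E}=\mathcal{P}_\varphi$, $\mathcal{P}_\varphi\mathcal{E}_\varphi^{-1}=\mathcal{E}_\varphi^{-1}$, and $\mathcal{E}_\varphi(I-\mathcal{P}_\varphi)=0$ from Proposition~\ref{prop:CPTP} then yield, by direct computation, $\mathcal{P}_\varphi\mathcal{F}_n=\mathcal{E}_\varphi(I+m\tau\,\overline{\mathcal{L}})+O(\tau^2)$, with $\overline{\mathcal{L}}$ exactly as in~(\ref{eq:overLdef}). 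By Proposition~\ref{prop:CPTP} the non-peripheral spectrum of $\mathcal{E}$ lies strictly inside the open unit disc, so Lemma~\ref{prop:Peripheral} ensures that the cumulative non-peripheral contribution to $\mathcal{F}_n^n$ decays geometrically in $n$ and is absorbed into the final $O(1/n)$ error. Thus it suffices to analyze the invertible block on $\ran\mathcal{P}_\varphi$.

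On this block, the BCH-type Lemma~\ref{thm:kicktofield} combines $\mathcal{E}_\varphi=\exp(\log\mathcal{E}_\varphi)$ with the near-identity factor $I+m\tau\,\overline{\mathcal{L}}$ into a single exponential with the expected first-order generator,
\begin{equation*}
\mathcal{F}_n = \exp\!\bigl(\log\mathcal{E}_\varphi+m\tau\,\overline{\mathcal{L}}+O(\tau^2)\bigr)\quad\text{on}\quad\ran\mathcal{P}_\varphi.
\end{equation*}
Raising to the $n$-th power, and using $nm\tau=t$ together with $n\cdot O(\tau^2)=O(1/n)$,
\begin{equation*}
\mathcal{F}_n^n\mathcal{P}_\varphi = \exp\!\bigl(n\log\mathcal{E}_\varphi+t\,\overline{\mathcal{L}}+O(1/n)\bigr)\mathcal{P}_\varphi,
\end{equation*}
uniformly on compact $t$-intervals. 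This is a singularly perturbed exponential with strong generator $n\log\mathcal{E}_\varphi$ of purely imaginary spectrum and bounded perturbation $t\,\overline{\mathcal{L}}$.

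Finally, I would apply Theorem~1 of Ref.~\cite{ref:unity1}, the generalized adiabatic theorem formulated precisely for such potentially non-GKLS strong generators, with coupling constant $\gamma=n$ and $\mathcal{D}=\log\mathcal{E}_\varphi$. The spectral projections of $\mathcal{D}$ coincide with the peripheral spectral projections $\{\mathcal{P}_k\}_{|\lambda_k|=1}$ of $\mathcal{E}$, so the resulting Zeno generator is exactly $\mathcal{L}_Z=\sum_{|\lambda_k|=1}\mathcal{P}_k\overline{\mathcal{L}}\mathcal{P}_k$, while $\exp(n\log\mathcal{E}_\varphi)=\mathcal{E}_\varphi^n$. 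Combining with the non-peripheral decay from the first step gives $\mathcal{F}_n^n=\mathcal{E}_\varphi^n\rme^{t\mathcal{L}_Z}+O(1/n)$. I expect the main obstacle to be the error bookkeeping: the $O(\tau^2)$ inside the exponent must turn into a genuine $O(1/n)$ after $n$-fold composition and must survive being fed into the adiabatic theorem whose strong generator itself grows like~$n$, while simultaneously the geometric decay of the non-peripheral block must be sharp enough to genuinely justify the restriction to $\ran\mathcal{P}_\varphi$. This is precisely what the quantitative bounds in Lemma~\ref{prop:Peripheral} and the matrix-function/BCH estimates in Appendix~\ref{app:BCH} are designed to deliver.
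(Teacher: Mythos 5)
Your overall strategy is exactly the paper's: cut the nonperipheral part with Lemma~\ref{prop:Peripheral}, merge the peripheral kick and the short-time evolution into one exponential with the generalized BCH lemma, and finish with the generalized adiabatic theorem of Ref.~\cite{ref:unity1}. However, there is a genuine gap at the BCH step. Lemma~\ref{thm:kicktofield} does \emph{not} give $\log\bigl(\rme^{\mathcal{A}}\rme^{m\tau\overline{\mathcal{L}}}\bigr)=\mathcal{A}+m\tau\,\overline{\mathcal{L}}+\mathcal{O}(\tau^2)$; the first-order term is the \emph{dressed} perturbation $m\tau\,g(\ad_{\mathcal{A}})(\overline{\mathcal{L}})$ with $g(z)=z/(1-\rme^{-z})$. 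Your displayed formula $\mathcal{F}_n=\exp\bigl(\log\mathcal{E}_\varphi+m\tau\,\overline{\mathcal{L}}+\mathcal{O}(\tau^2)\bigr)$ silently drops this dressing, and it is false whenever $[\log\mathcal{E}_\varphi,\overline{\mathcal{L}}]\neq 0$ — which is the generic situation, e.g.\ already for unitary kicks with $[\,U,H\,]\neq 0$. Consequently, feeding your exponent into the adiabatic theorem would, as written, produce $\sum_{|\lambda_k|=1}\mathcal{P}_k\,g(\ad_{\mathcal{A}})(\overline{\mathcal{L}})\,\mathcal{P}_k$ rather than the claimed $\mathcal{L}_Z=\sum_{|\lambda_k|=1}\mathcal{P}_k\overline{\mathcal{L}}\mathcal{P}_k$, and you give no argument that these coincide. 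The paper keeps $\tilde{\mathcal{L}}=\mathcal{P}_\varphi\, g(\ad_{\mathcal{A}})(\overline{\mathcal{L}})\,\mathcal{P}_\varphi$ explicitly through Step~2 and only removes the dressing in Step~3 via the identity $\hat{\mathcal{P}}\circ\ad_{\mathcal{A}}=0$ (the adiabatic projection $\hat{\mathcal{P}}$ block-diagonalizes with respect to the spectral projections of $\mathcal{A}$), which together with $g(0)=1$ gives $\hat{\mathcal{P}}\circ g(\ad_{\mathcal{A}})=\hat{\mathcal{P}}$ and hence $\hat{\mathcal{P}}(\tilde{\mathcal{L}})=\mathcal{L}_Z$. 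That identity is the missing ingredient in your argument; without it the stated form of the Zeno generator does not follow.

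A smaller point of care: $\overline{\mathcal{L}}$ does not preserve $\ran\mathcal{P}_\varphi$, so ``restricting to the block'' and applying the invertible-kick Lemma~\ref{thm:kicktofield} directly is not quite legitimate; one must work with the two-sided compression $\mathcal{P}_\varphi\overline{\mathcal{L}}\mathcal{P}_\varphi$ (which is what Lemma~\ref{prop:Peripheral} actually delivers, in the form $(\mathcal{P}_\varphi\tilde{\mathcal{E}}_n\mathcal{P}_\varphi)^n$) and extend $\mathcal{E}_\varphi$ to the invertible map $\mathcal{E}_\varphi+(1-\mathcal{P}_\varphi)$ before taking the primary logarithm, as the paper does through Lemma~\ref{thm:kicktofieldP}. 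Your one-sided expression $\mathcal{P}_\varphi\mathcal{F}_n$ glosses over this; the remark following Lemma~\ref{thm:kicktofieldP} explains why the projection on both sides of the generator cannot be dispensed with.
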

In particular, for $m=1$, we have the following corollary, which covers both QZDs via (i) projective measurements and via (ii) unitary kicks, and generalizes them to generic  kicks:
\begin{corol}[QZD by generic kicks]\label{thm:kicktozeno}
Let $\mathcal{E}$ be a quantum operation and $\mathcal{L}$ be a GKLS generator of a finite-dimensional quantum system. Then, we have 
\begin{equation}
\left(\mathcal{E}\rme^{\frac{t}{n}\mathcal{L}}\right)^n=\mathcal{E}_\varphi^{n}\rme^{t \mathcal{L}_Z}+  \mathcal{O}(1/n)
\quad\text{as}\quad
n\to+\infty,
\label{eqn:CPTPZeno}
\end{equation}
uniformly in $t$ on compact intervals of $[0,+\infty)$,
with
\begin{equation}
\mathcal{L}_Z
=\sum_{|\lambda_k|=1}\mathcal{P}_k\mathcal{L}\mathcal{P}_k,
\end{equation}
where $\mathcal{P}_k$ is the spectral projection  
of $\mathcal{E}$ belonging to the eigenvalue $\lambda_k$, and $\mathcal{E}_\varphi$ is the peripheral part of $\mathcal{E}$.
\end{corol}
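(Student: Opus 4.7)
The corollary is the $m=1$ specialization of Theorem~\ref{thm:CPTPBB}: with a single kick the empty sum in~\eqref{eq:overLdef} vanishes and $\overline{\mathcal{L}}$ collapses to $\mathcal{L}$, so \eqref{eqn:CPTPZenomulti} reduces to \eqref{eqn:CPTPZeno}. The single-kick case is however clean enough to highlight the three-step programme advertised in the introduction, which I would carry out directly as follows.

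\emph{Step 1 --- strip the decaying part.} Using Proposition~\ref{prop:CPTP}(iii), write $\mathcal{E}=\mathcal{E}_\varphi+\mathcal{E}_\delta$ with $\mathcal{E}_\delta=(\mathcal{I}-\mathcal{P}_\varphi)\mathcal{E}=\mathcal{E}(\mathcal{I}-\mathcal{P}_\varphi)$, so that $\mathcal{E}_\varphi\mathcal{E}_\delta=\mathcal{E}_\delta\mathcal{E}_\varphi=0$ and $\mathcal{E}_\delta$ has spectral radius strictly less than $1$ by Proposition~\ref{prop:CPTP}(i). Lemma~\ref{prop:Peripheral} supplies exactly the quantitative estimate needed to show that every string in the expansion of $(\mathcal{E}\rme^{\frac{t}{n}\mathcal{L}})^n=((\mathcal{E}_\varphi+\mathcal{E}_\delta)\rme^{\frac{t}{n}\mathcal{L}})^n$ that contains any factor of $\mathcal{E}_\delta$ aggregates to $O(1/n)$ uniformly in $t$ on compact intervals, reducing the problem to the analysis of $(\mathcal{E}_\varphi\rme^{\frac{t}{n}\mathcal{L}})^n$.

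\emph{Step 2 --- absorb the kick into a generator.} On $\ran\mathcal{P}_\varphi$, the peripheral map $\mathcal{E}_\varphi$ is invertible (Proposition~\ref{prop:CPTP}(iv)) with semisimple spectrum $\{\rme^{\rmi\eta_k}\}$ on the unit circle, so its principal logarithm $\mathcal{G}=\log\mathcal{E}_\varphi$ is well defined there and has purely imaginary spectrum $\{\rmi\eta_k\}$. Lemma~\ref{thm:kicktofield} (the generalized BCH) then lets me rewrite
\begin{equation*}
\mathcal{E}_\varphi\rme^{\frac{t}{n}\mathcal{L}}=\exp\!\bigl(\mathcal{G}+\tfrac{t}{n}\mathcal{L}+O(1/n^2)\bigr)
\end{equation*}
on $\ran\mathcal{P}_\varphi$. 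Because the argument of the exponential is identical at each of the $n$ steps, the $n$-fold product equals $\exp\!\bigl(n\mathcal{G}+t\mathcal{L}+O(1/n)\bigr)$.

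\emph{Step 3 --- adiabatic projection.} The resulting object is exactly of the strong-coupling form $\exp(\gamma\mathcal{D}+\mathcal{H})$ with $\gamma=n$, $\mathcal{D}=\mathcal{G}$ and $\mathcal{H}=t\mathcal{L}$. Since every eigenvalue of $\mathcal{G}$ is imaginary and semisimple, the adiabatic projection onto its peripheral spectrum coincides with $\mathcal{P}_\varphi$, and the adiabatic/strong-coupling theorem of Ref.~\cite{ref:unity1} gives
\begin{equation*}
\exp(n\mathcal{G}+t\mathcal{L})=\rme^{n\mathcal{G}}\rme^{t\mathcal{L}_Z}\mathcal{P}_\varphi+O(1/n),
\end{equation*}
where the spectral projections $\{\mathcal{P}_k\}$ of $\mathcal{G}$ coincide with those of $\mathcal{E}_\varphi$ (hence the peripheral spectral projections of $\mathcal{E}$) and $\mathcal{L}_Z=\sum_k\mathcal{P}_k\mathcal{L}\mathcal{P}_k$. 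Recognising $\rme^{n\mathcal{G}}=\mathcal{E}_\varphi^{\,n}$ on $\ran\mathcal{P}_\varphi$ and using that $\mathcal{E}_\varphi^{\,n}$ already annihilates the orthogonal complement, the trailing $\mathcal{P}_\varphi$ can be absorbed. Combining with Step~1 delivers \eqref{eqn:CPTPZeno}.

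The principal obstacle is the accumulation of errors: starting from an $O(1/n^2)$ BCH remainder per factor, the $n$-fold composition must be shown to degrade only to $O(1/n)$, and the subsequent replacement of $\exp(n\mathcal{G}+t\mathcal{L}+O(1/n))$ by $\exp(n\mathcal{G}+t\mathcal{L})$ is delicate because the dominant term $\rme^{n\mathcal{G}}$ is merely bounded (not contractive). The semisimplicity of the peripheral spectrum in Proposition~\ref{prop:CPTP}(i) is what keeps $\|\rme^{n\mathcal{G}}\|$ uniformly bounded in $n$, while a Duhamel-type estimate together with the stability built into the adiabatic theorem of Ref.~\cite{ref:unity1} is what turns the exponent-level $O(1/n)$ bound into a uniform $O(1/n)$ bound on the operator. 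Likewise, Step~1 requires a uniform --- not term-by-term --- control of the interference between the decaying and peripheral sectors, which is exactly the service performed by Lemma~\ref{prop:Peripheral}.
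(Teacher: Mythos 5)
Your opening observation is exactly the paper's own justification: Corollary~\ref{thm:kicktozeno} is stated as the $m=1$ specialization of Theorem~\ref{thm:CPTPBB}, under which the sum over $j\ge 2$ in \eqref{eq:overLdef} is empty and $\overline{\mathcal{L}}=\mathcal{L}$; the paper gives no separate proof, so your first sentence already settles the matter and your three-step sketch is a faithful outline of the paper's proof of the theorem itself.

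Two imprecisions in the sketch would bite if you carried it out literally. First, Lemma~\ref{prop:Peripheral} reduces the problem to $\bigl(\mathcal{P}_\varphi\,\mathcal{E}\rme^{\frac{t}{n}\mathcal{L}}\mathcal{P}_\varphi\bigr)^n=\bigl(\mathcal{E}_\varphi\rme^{\frac{t}{n}\mathcal{P}_\varphi\mathcal{L}\mathcal{P}_\varphi+\mathcal{O}(1/n^2)}\bigr)^n$, \emph{not} to $\bigl(\mathcal{E}_\varphi\rme^{\frac{t}{n}\mathcal{L}}\bigr)^n$: since $\rme^{\frac{t}{n}\mathcal{L}}$ does not commute with $\mathcal{P}_\varphi$, the range of $\mathcal{P}_\varphi$ is not invariant for the unprojected product, and the $E\mapsto E+Q$ device of Lemma~\ref{thm:kicktofieldP} (needed because $\mathcal{E}_\varphi$ is singular on the full space) only works once the generator has been sandwiched as $\mathcal{P}_\varphi\mathcal{L}\mathcal{P}_\varphi$ — this is precisely the point of the remark following that lemma. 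Second, the BCH step gives $\mathcal{E}_\varphi\rme^{\frac{t}{n}\mathcal{L}'}=\exp\bigl(\mathcal{G}+\frac{t}{n}\,g(\ad_{\mathcal{G}})(\mathcal{L}')+\mathcal{O}(1/n^2)\bigr)$, with the nontrivial factor $g(\ad_{\mathcal{G}})$ of Eq.~\eqref{eq:tildeLdef}, not $\exp\bigl(\mathcal{G}+\frac{t}{n}\mathcal{L}'+\mathcal{O}(1/n^2)\bigr)$; the factor is harmless only because the adiabatic projection $\hat{\mathcal{P}}$ of Step~3 satisfies $\hat{\mathcal{P}}\circ\ad_{\mathcal{G}}=0$, hence $\hat{\mathcal{P}}\circ g(\ad_{\mathcal{G}})=\hat{\mathcal{P}}$, which is the identity the paper invokes to land on $\mathcal{L}_Z=\sum_{|\lambda_k|=1}\mathcal{P}_k\mathcal{L}\mathcal{P}_k$. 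With these two corrections your sketch coincides with the paper's proof of Theorem~\ref{thm:CPTPBB} restricted to $m=1$.
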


If in the above statements the maps $\mathcal{E}$ and $\mathcal{E}_1,\ldots,\mathcal{E}_m$ are assumed to be CPTP and describe measurement processes, they are nonselective measurements.
An interesting corollary of Theorem~\ref{thm:CPTPBB} is available for selective measurements.
In particular, we provide a corollary for the QZD via cycles of 
multiple selective projective measurements. This is a generalization of the standard QZD via (i) frequent selective projective measurements.
\begin{corol}[QZD by cycles of projective measurements]
\label{thm:MeasBB}
Let $\{\mathcal{P}_1,\ldots, \mathcal{P}_m\}$ be a finite set of CP Hermitian projections on 
the Hilbert-Schmidt space $\mathcal{T}_2$ of operators on a finite-dimensional Hilbert space, and 
$\mathcal{L}$ be a GKLS generator.
The projections are not assumed to be pairwise orthogonal, i.e.~$\mathcal{P}_i \mathcal{P}_j\neq0$ for $i\neq j$, in general.
Then, we have 
\begin{equation}
\left(\mathcal{P}_m \rme^{\frac{t}{mn}\mathcal{L}}\cdots \mathcal{P}_1\rme^{\frac{t}{mn}\mathcal{L}}\right)^n
=\mathcal{P}_\varphi\rme^{t \mathcal{P}_\varphi \mathcal{L} \mathcal{P}_\varphi}+\mathcal{O}(1/n)
\quad\text{as}\quad
n\to+\infty,
\label{eqn:ZenoMultiMeas}
\end{equation}
uniformly in $t$ on compact intervals of $\mathbb{R}$,
where $\mathcal{P}_\varphi= \mathcal{P}_1\wedge \mathcal{P}_2 \wedge\cdots\wedge \mathcal{P}_m $ is the Hermitian projection  onto the intersection of the ranges of the projections $\mathcal{P}_1,\ldots,\mathcal{P}_m$.
If such intersection is trivial, then $\mathcal{P}_\varphi = 0$, and the sequence in Eq.\ (\ref {eqn:ZenoMultiMeas}) just decays to zero. 
\end{corol}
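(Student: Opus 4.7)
The plan is to deduce the corollary from Theorem \ref{thm:CPTPBB} by specializing the kicks to $\mathcal{E}_j = \mathcal{P}_j$. Each CP Hermitian projection on $\mathcal{T}_2$ is a quantum operation, so the hypotheses of Theorem \ref{thm:CPTPBB} are satisfied and what remains is to identify the peripheral part of $\mathcal{E} = \mathcal{P}_m \cdots \mathcal{P}_1$ and then to simplify the Zeno generator to $\mathcal{L}_Z = \mathcal{P}_\varphi \mathcal{L} \mathcal{P}_\varphi$.

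For the spectral step I would use that every Hermitian projection on $\mathcal{T}_2$ satisfies $\|\mathcal{P}_j\| \leq 1$, so $\mathcal{E}$ is a contraction. If $\mathcal{E} x = \lambda x$ with $|\lambda|=1$, the chain
\begin{equation*}
\|x\| = \|\mathcal{E} x\| \leq \|\mathcal{P}_{m-1}\cdots\mathcal{P}_1 x\| \leq \cdots \leq \|\mathcal{P}_1 x\| \leq \|x\|
\end{equation*}
must saturate at every step, which forces $\mathcal{P}_j x = x$ for every $j$ and hence $\lambda = 1$. Conversely every $x \in \bigcap_j \ran\mathcal{P}_j$ is fixed by $\mathcal{E}$, so the peripheral spectrum of $\mathcal{E}$ is $\{1\}$ (or empty) with spectral projection $\mathcal{P}_\varphi = \mathcal{P}_1 \wedge \cdots \wedge \mathcal{P}_m$. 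Proposition \ref{prop:CPTP} then yields $\mathcal{E}_\varphi = \mathcal{E}_\varphi^{-1} = \mathcal{P}_\varphi$ and $\mathcal{E}_\varphi^n = \mathcal{P}_\varphi$ for every $n \geq 1$.

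For the algebraic step I would invoke the identity $\mathcal{P}_j \mathcal{P}_\varphi = \mathcal{P}_\varphi \mathcal{P}_j = \mathcal{P}_\varphi$ for each $j$, which follows from $\ran\mathcal{P}_\varphi \subseteq \ran\mathcal{P}_j$ and Hermiticity. Substituted into \eqref{eq:overLdef}, every string of $\mathcal{P}_j$'s flanking $\mathcal{L}$ collapses against the outer $\mathcal{P}_\varphi$'s in $\mathcal{P}_\varphi \overline{\mathcal{L}} \mathcal{P}_\varphi$, so each of the $m$ averaged terms reduces to $\mathcal{P}_\varphi \mathcal{L} \mathcal{P}_\varphi$ and $\mathcal{L}_Z = \mathcal{P}_\varphi \mathcal{L} \mathcal{P}_\varphi$. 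Combined with $\mathcal{E}_\varphi^n = \mathcal{P}_\varphi$, Eq.~\eqref{eqn:CPTPZenomulti} specializes to Eq.~\eqref{eqn:ZenoMultiMeas} on $[0,+\infty)$; the trivial-intersection case follows identically with $\mathcal{P}_\varphi = 0$, so that the right-hand side vanishes and only the $\mathcal{O}(1/n)$ decay remains.

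The step requiring extra care is the strengthening from $t \in [0,+\infty)$ to $t \in \mathbb{R}$. Once restricted to the invariant subspace $\ran \mathcal{P}_\varphi$, the generator $\mathcal{P}_\varphi \mathcal{L} \mathcal{P}_\varphi$ defines a uniformly continuous one-parameter group, and $\rme^{s\mathcal{L}}$ remains an entire operator-valued function of $s$ (just not CPTP for $s<0$). The small-parameter expansion $\rme^{s\mathcal{L}} = I + s\mathcal{L} + \mathcal{O}(s^2)$ underlying the proof of Theorem \ref{thm:CPTPBB} is symmetric in the sign of $s$, so the error bounds transport verbatim to negative times and the convergence is uniform on every compact interval of $\mathbb{R}$.
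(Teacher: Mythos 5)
Your proposal is correct and follows essentially the same route as the paper: specialize Theorem~\ref{thm:CPTPBB} to $\mathcal{E}_j=\mathcal{P}_j$, identify the peripheral part of $\mathcal{E}=\mathcal{P}_m\cdots\mathcal{P}_1$ as $\mathcal{P}_1\wedge\cdots\wedge\mathcal{P}_m$ via the norm-saturation (Cauchy--Schwarz) argument, and collapse $\overline{\mathcal{L}}$ using $\mathcal{P}_\varphi\mathcal{P}_j=\mathcal{P}_j\mathcal{P}_\varphi=\mathcal{P}_\varphi$; the spectral step is exactly the content of the paper's Lemma~\ref{lem:P1P2}. The one detail you skip is why the \emph{spectral} projection of the non-normal operator $\mathcal{E}$ at $\lambda=1$ coincides with the \emph{Hermitian} projection onto $\bigcap_j\ran\mathcal{P}_j$: semisimplicity (Proposition~\ref{prop:CPTP}) fixes only the range of the spectral projection, not its kernel, so you also need that the left eigenspace equals the right one — obtained by running your saturation argument on $\mathcal{E}^\dagger=\mathcal{P}_1\cdots\mathcal{P}_m$, as the paper does by noting $(P_2P_1)^\dagger u=u$.
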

\begin{proof}
The proof makes use of the crucial fact that the peripheral part of the product of the Hermitian projections
$\mathcal{E}= \mathcal{P}_m \cdots \mathcal{P}_1$
reads 
\begin{equation}
\label{eq:peripheralprodP}
\mathcal{E}_\varphi = \mathcal{E}^{-1}_{\varphi}=\mathcal{P}_\varphi
\end{equation}
($\lambda=1$ is the only peripheral eigenvalue of $\mathcal{E}$), and
$\mathcal{P}_\varphi \mathcal{P}_j = \mathcal{P}_j \mathcal{P}_\varphi = \mathcal{P}_\varphi$ for all $j=1,\dots,m$.
This will be proved in  Lemma~\ref{lem:P1P2} in Appendix~\ref{app:BasicLemmas}.
Then, Eq.~\eqref{eq:peripheralprodP} implies that $\mathcal{E}_\varphi^{n}= \mathcal{P}_\varphi^{n}= \mathcal{P}_\varphi$, and $\overline{\mathcal{L}}$ in Eq.\ (\ref{eq:overLdef}) of Theorem~\ref{thm:CPTPBB} is simplified to
\begin{equation}
\overline{\mathcal{L}}
=\frac{1}{m}\,\biggl(\mathcal{L} + \mathcal{P}_\varphi \mathcal{L} \sum_{j=2}^m \mathcal{P}_{j-1}\cdots\mathcal{P}_1\biggr).
\end{equation}
Therefore, $\mathcal{L}_Z$ in Eq.\ (\ref{eq:overLdef}) of Theorem~\ref{thm:CPTPBB} reads
\begin{equation}
\mathcal{L}_Z =
\mathcal{P}_\varphi \overline{\mathcal{L}}\mathcal{P}_\varphi = \mathcal{P}_\varphi \mathcal{L}\mathcal{P}_\varphi ,
\end{equation}
and Eq.~\eqref{eqn:CPTPZenomulti} of Theorem~\ref{thm:CPTPBB} becomes Eq.~\eqref{eqn:ZenoMultiMeas}. 
\end{proof}
\begin{remark}
Let us consider a unitary evolution, $\mathcal{L} = -\rmi [H,{}\bullet{}]$ with $H=H^\dagger$.
For $\mathcal{P}_j = P_{j} \bullet P_j$, with $P_1,\ldots,P_m$ being Hermitian projections, Eq.~\eqref{eqn:ZenoMultiMeas} particularizes to 
\begin{equation}
\left(P_m \rme^{-\rmi \frac{t}{mn}H}\cdots P_1\rme^{-\rmi \frac{t}{mn}H}\right)^n
=P_\varphi\rme^{-\rmi tP_\varphi HP_\varphi}+\mathcal{O}(1/n)
\quad\text{as}\quad
n\to+\infty,
\label{eqn:ZenoMultiMeasP}
\end{equation}
where $P_\varphi= P_1\wedge P_2 \wedge\cdots\wedge P_m$ is the Hermitian projection onto the intersection of the ranges of $P_1,\ldots,P_m$, and one gets a QZD by cycles of (nonorthogonal) selective measurements.
More generally, if 
\begin{equation}
\mathcal{P}_j = \sum_{k=1}^{n_j} P_k^{(j)} \bullet P_k^{(j)},
\label{eqn:SelectiveMeas}
\end{equation}
where $\{P_1^{(j)},\ldots,P_{n_j}^{(j)}\}_{j=1,\ldots,m}$ are sets of Hermitian projections with 
$P^{(j)}_k P^{(j)}_\ell = \delta_{k\ell} P_k^{(j)}$, 
then one gets a QZD by cycles of (nonorthogonal) partially selective measurements. 
A particular case is when $\sum_{k=1}^{n_j} P_k^{(j)}=I$ for all $j=1,\dots,m$, and one has a cycle of nonselective (i.e.~CPTP) measurements.
There exist more general CP Hermitian projections, that cannot be cast in the form (\ref{eqn:SelectiveMeas}).
For instance, $\mathcal{P}=\frac{I}{d}\tr({}\bullet{})$ is a CPTP Hermitian projection for a $d$-dimensional system.
Corollary~\ref{thm:MeasBB} works for general CP Hermitian projections including such a projection.
\end{remark}
The proof of  Theorem~\ref{thm:CPTPBB} consists of several steps as outlined in Fig.~\ref{proofsummary}. Before we prove the theorem, we provide the key lemmas in the next section.
\begin{figure}
\centering
\includegraphics[width=0.75\textwidth]{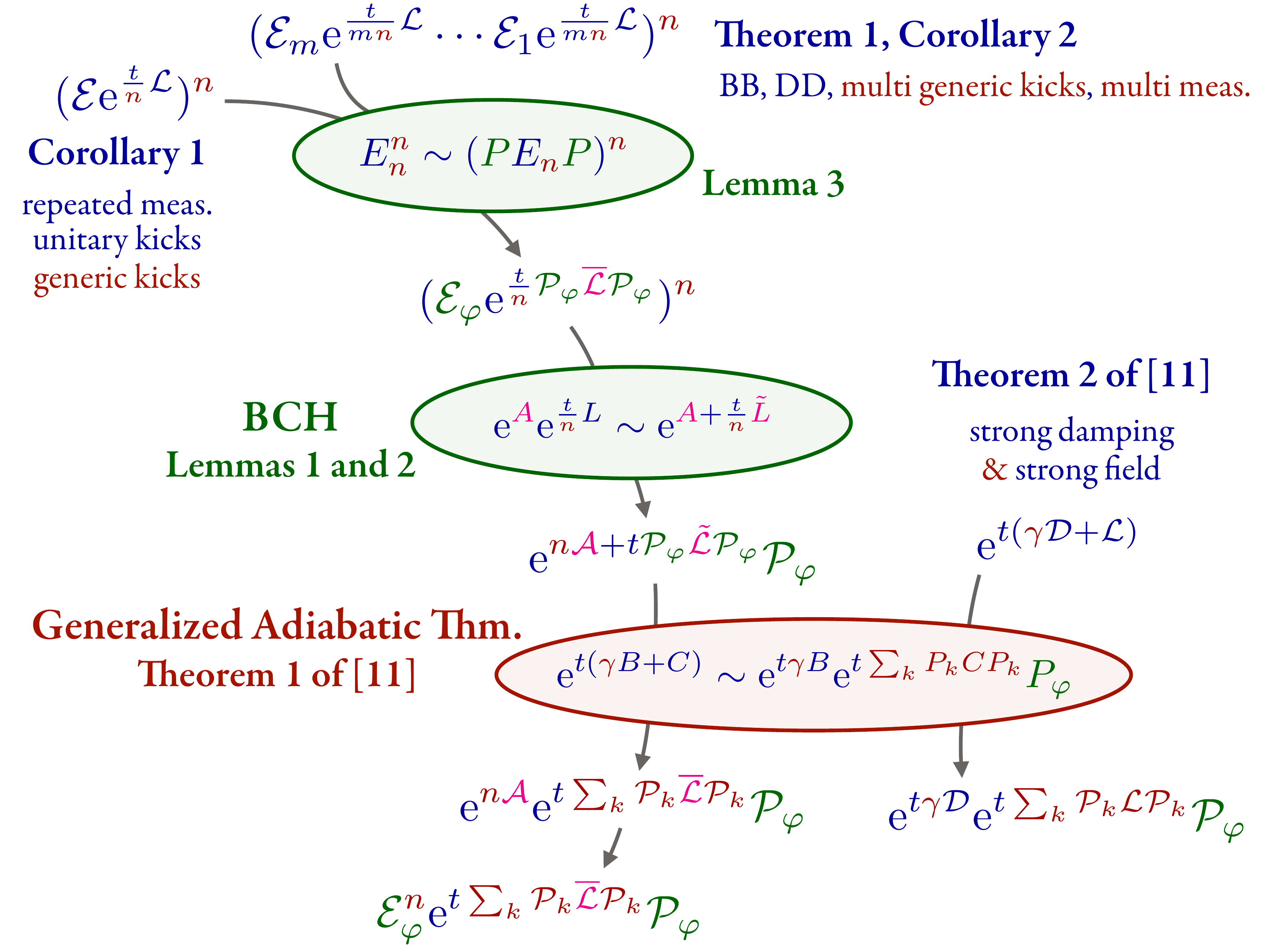}
\caption{\label{proofsummary}Sketch of the proof steps.}
\end{figure}

\section{Key Lemmas}
\label{sec:KeyLemmas}
The key idea is to bridge from the pulsed strategies to the continuous strategies via the BCH formula \cite{ref:Hall-BCH}, and then prove the Zeno limit by the generalized adiabatic theorem, which was proved and used to unify the continuous strategies in Ref.\ \cite{ref:unity1} (Theorems~1 and~2 therein). 
In this way, the pulsed strategies are unified with the continuous strategies.
The key lemma for the bridge is the following generalized BCH formula:
\begin{lemma}[Pulsed vs continuous for invertible kicks]\label{thm:kicktofield}
Let $E$ and $L$ be linear operators on a finite-dimensional Banach space, with $E$ invertible.  
Let $A=\log E$ be a primary logarithm of $E$ so that $\rme^A=E$.
Then, we have
\begin{equation}
\left(E \rme^{\frac{t}{n}L}\right)^n=\rme^{n A +t  \tilde{L}+\mathcal{O}(1/n)}
\quad\text{as}\quad
n\to + \infty,
\label{eq:pulvscontlim}
\end{equation}
uniformly in $t$ on  compact intervals of $\mathbb{R}$,
with
\begin{equation}
\tilde{L}=g(\ad_{A})(L),
\label{eq:tildeLdef}
\end{equation}
where $g$ is the meromorphic function on $\mathbb{C}$ defined by
\begin{equation}
g(z) = 
\begin{cases} 
\medskip
\displaystyle{\frac{z}{1-\rme^{-z}}} & (z\notin 2\pi\rmi \mathbb{Z}),\\
1 & (z=0) ,
\end{cases}
\label{eq:g{z}def}
\end{equation}
and $\ad_A=[A,{}\bullet{}]$.
\end{lemma}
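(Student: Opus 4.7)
The strategy is to combine a first-order (single-factor) BCH expansion for one block $E\rme^{(t/n)L}$ with the trivial identity $(\rme^M)^n=\rme^{nM}$, applied to the logarithm of that block.

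First I would show that, for $n$ sufficiently large,
\begin{equation*}
E\rme^{(t/n)L}=\exp\!\Bigl(A+\tfrac{t}{n}\tilde L+R_n(t)\Bigr),\qquad \|R_n(t)\|=\mathcal{O}(1/n^2),
\end{equation*}
uniformly in $t$ on any compact interval. Setting $\Phi(Y):=\log(\rme^A \rme^{Y})-A$ (using a primary logarithm continuous at $Y=0$ and agreeing there with the given $A$), the map $\Phi$ is analytic in a neighborhood of $0$, with $\Phi(0)=0$. The key identification is $\Phi'(0)=g(\ad_A)$. This comes from differentiating $\rme^{A+\Phi(Y)}=\rme^A\rme^{Y}$ at $Y=0$ via the derivative-of-exponential formula $\mathrm{d}\rme^{Z}=\rme^{Z}\,\tfrac{1-\rme^{-\ad_Z}}{\ad_Z}$, which yields
\begin{equation*}
\frac{1-\rme^{-\ad_A}}{\ad_A}\bigl(\Phi'(0)(L)\bigr)=L,
\end{equation*}
so $\Phi'(0)(L)=g(\ad_A)(L)=\tilde L$. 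The operator $g(\ad_A)$ is well-defined because the eigenvalues of $\ad_A$ are differences $\alpha_i-\alpha_j$ of eigenvalues of the primary logarithm $A=\log E$; distinct eigenvalues $\lambda_i\neq\lambda_j$ of $E$ yield $\alpha_i-\alpha_j\notin 2\pi\rmi\,\mathbb{Z}^*$, exactly the condition that avoids the poles of $g$.

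Second, denote $C_n(t):=A+\tfrac{t}{n}\tilde L+R_n(t)$. Since $C_n(t)$ commutes with itself,
\begin{equation*}
\bigl(E\rme^{(t/n)L}\bigr)^n=\bigl(\rme^{C_n(t)}\bigr)^n=\rme^{nC_n(t)}=\exp\!\bigl(nA+t\tilde L+nR_n(t)\bigr),
\end{equation*}
and $nR_n(t)=\mathcal{O}(1/n)$ delivers the claimed form \eqref{eq:pulvscontlim}. Uniformity in $t$ on compact intervals comes for free from the fact that $\|(t/n)L\|$ stays uniformly $\mathcal{O}(1/n)$ and the analytic expansion of $\Phi$ has a positive radius of convergence.

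The main obstacle is making the single-factor expansion in Step~1 quantitative without assuming $\|A\|$ small. The standard BCH series converges only when both factors have small norm; here only $\rme^{(t/n)L}$ is near the identity, while $\rme^A=E$ is arbitrary invertible. To control the remainder, one must invoke a one-sided BCH bound that depends on the spectral data of $A$ (through $g(\ad_A)$ and the location of its resolvent), presumably the content of Proposition~\ref{prop:BoundBCH} in Appendix~\ref{app:BCH}. Once that bound is in hand, the remainder $R_n(t)$ can be estimated in operator norm by a constant multiple of $\|L\|^2 t^2/n^2$ for $n$ beyond a threshold determined by $A$, and the rest is bookkeeping.
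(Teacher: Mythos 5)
Your proposal is correct and follows essentially the same route as the paper: take a primary logarithm of the single block $E\rme^{(t/n)L}$, identify the first-order term via the derivative-of-the-exponential formula and the inverse $g(\ad_A)=f(\ad_A)^{-1}$ with $f(z)=(1-\rme^{-z})/z$, control the $\mathcal{O}(1/n^2)$ remainder (indeed via Proposition~\ref{prop:BoundBCH}), and exponentiate $n$-fold. The only minor differences are that the paper integrates an ODE in $s$ for $Z(s)=\log(\rme^{A}\rme^{\frac{s}{n}L})$ rather than computing a Fr\'echet derivative at $Y=0$, and it establishes invertibility of $f(\ad_A)$ by a kernel argument (if $[A,X_*]=2\pi\rmi k X_*$ with $k\neq0$ then $X_*$ commutes with $E$, hence with the primary function $A$, forcing $X_*=0$) rather than your spectral-difference argument --- both hinge on primariness of $A$ in the same way.
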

\begin{proof}
	We will prove it in Sec.~\ref{sec:ProofKeyLemma}.
\end{proof}
\begin{remark}
The assumption that $A$ be a \emph{primary} logarithm of $E$ \cite{ref:MatrixFunctions-Higham,ref:MatrixAnalysisTopics-HornJohnson} is necessary to get Eq.~\eqref{eq:pulvscontlim}, as the following example shows.
Take $E=I$ and $L=X$, the identity and the first Pauli matrix on $\mathbb{C}^2$, respectively. Then, we have 
\begin{equation}
\left(E \rme^{\frac{t}{n}L}\right)^n = \left(I \rme^{\frac{t}{n}X}\right)^n = \rme^{t X}.
\end{equation}
On the other hand, consider $A = 2\pi \rmi Z$, where $Z$ is the third Pauli matrix. Then, $\rme^A= \rme^{2\pi \rmi Z}=I =E$, but $A$ is \emph{not} a primary logarithm of $E$. It is apparent that there exists no matrix $\tilde{L}$ such that
\begin{equation}
\rme^{t X} = \rme^{2 n \pi \rmi Z + t\tilde{L}} + \mathcal{O}(1/n),
\end{equation}
since $\rme^{2 n \pi \rmi Z + t\tilde{L}}= \rme^{t\tilde{L}_Z}+\mathcal{O}(1/n)$ as a strong-coupling limit \cite{ref:unity1}, with $\tilde{L}_Z$ a diagonal matrix. Thus, Eq.~\eqref{eq:pulvscontlim} does not hold for $A = 2\pi \rmi Z$. 
\end{remark}

We can apply this lemma to physical situations where $E$ are invertible quantum operations and $L$ are GKLS generators.
This lemma is however useful only for invertible $E$, and cannot accommodate e.g.~the standard QZD via projective measurements. To circumvent this problem, we consider instead the primary logarithm of $E+Q$, with $Q$ a projection onto the kernel of $E$, and by projecting $L$ on a complementary space:
\begin{lemma}[Pulsed vs continuous for noninvertible kicks]\label{thm:kicktofieldP}
Let $E$ and $L$ be linear operators on a finite-dimensional Banach space. Let $Q$ be a projection onto $\ker E$ and set $P=1-Q$. Let $A=\log (E +Q)$ be a primary logarithm of the invertible operator $E+Q$, 
so that $\rme^A=E+Q$.
Then, we have
\begin{equation}
\left(E \rme^{\frac{t}{n} P L P}\right)^n=\rme^{nA+t\tilde{L}+\mathcal{O}(1/n)}P
\quad\text{as}\quad
n\to + \infty,
\label{eq:pulvscontlimP}
\end{equation}
uniformly in $t$ on  compact intervals of $\mathbb{R}$,
with
\begin{equation}
\tilde{L}=Pg(\ad_{A})(L)P,
\label{eq:tildeLdefP}
\end{equation}
where $g$ is the meromorphic function on $\mathbb{C}$ defined in Eq.\ (\ref{eq:g{z}def}).
\end{lemma}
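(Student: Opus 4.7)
The plan is to reduce Lemma~\ref{thm:kicktofieldP} to the already-established Lemma~\ref{thm:kicktofield} by replacing the noninvertible $E$ with the auxiliary invertible operator $E+Q$, and then recognizing that the extra summand $Q$ contributes only a projector $P$ on the right once the product is raised to the $n$-th power. A natural (and effectively forced) choice of $Q$ is the spectral projection of $E$ onto the eigenvalue $0$, for which $E+Q$ is invertible; this gives the commutation/annihilation identities $EQ=QE=0$, $PQ=QP=0$, and $PE=EP=E$ that drive the whole argument.

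The first step is an algebraic reduction. Since $PLP$ annihilates $\ran Q$, the series for the exponential immediately yields $\rme^{\frac{t}{n}PLP}Q=Q$ and hence $(E+Q)\rme^{\frac{t}{n}PLP}=E\rme^{\frac{t}{n}PLP}+Q$. Setting $N=E\rme^{\frac{t}{n}PLP}$ and $M=N+Q$, the identities $EQ=0$ and $QE=0$ produce $NQ=0$ and $QN=0$ respectively, while $PE=EP=E$ give $NP=PN=N$. A short induction then shows $M^{n}=N^{n}+Q$ for every $n\geq1$, so that
\begin{equation}
\bigl(E\,\rme^{\frac{t}{n}PLP}\bigr)^{n}=N^{n}=M^{n}P=\bigl[(E+Q)\,\rme^{\frac{t}{n}PLP}\bigr]^{n}P.
\end{equation}
Applying Lemma~\ref{thm:kicktofield} to the invertible operator $E+Q$ with $PLP$ playing the role of $L$, and with $A=\log(E+Q)$, then gives, uniformly in $t$ on compact intervals,
\begin{equation}
\bigl[(E+Q)\,\rme^{\frac{t}{n}PLP}\bigr]^{n}=\rme^{\,nA+t\,g(\ad_{A})(PLP)+\mathcal{O}(1/n)}.
\end{equation}

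The final step is to identify $g(\ad_{A})(PLP)$ with $\tilde{L}=Pg(\ad_{A})(L)P$. Since $A$ is a primary function of $E+Q$ and the latter commutes with both $P$ and $Q$, we have $[A,P]=0$; an immediate induction on $k$ then yields $(\ad_{A})^{k}(PLP)=P(\ad_{A})^{k}(L)P$ for all $k\geq 0$, and the functional-calculus representation of $g$ propagates this equality to $g(\ad_{A})(PLP)=Pg(\ad_{A})(L)P=\tilde{L}$. Multiplying by $P$ on the right then delivers Eq.~(\ref{eq:pulvscontlimP}). The main subtlety is really packaged in the hypothesis: one must ensure that $Q$ is chosen so that $E+Q$ is genuinely invertible (which in particular forces the eigenvalue $0$ of $E$ to be semisimple) and that the primary logarithm $A$ inherits the block-diagonal structure $A=PAP\oplus QAQ$ relative to $P$ and $Q$; once these structural facts are secured, everything becomes a clean manipulation layered on top of Lemma~\ref{thm:kicktofield}.
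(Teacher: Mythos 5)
Your proof is correct and follows essentially the same route as the paper's: promote $E$ to the invertible $F=E+Q$, establish the absorption identity $\bigl(E\,\rme^{\frac{t}{n}PLP}\bigr)^{n}=\bigl(F\,\rme^{\frac{t}{n}PLP}\bigr)^{n}P$, apply Lemma~\ref{thm:kicktofield} to $F$, and use $[A,P]=0$ (from primariness of the logarithm) to identify $g(\ad_A)(PLP)=Pg(\ad_A)(L)P$. You are in fact more explicit than the paper on the one step it leaves unjustified: the absorption identity genuinely requires $EQ=QE=0$ and not merely $\ran Q=\ker E$, which is exactly what your choice of $Q$ as the spectral projection at the eigenvalue $0$ secures, and which is what holds in the intended application $E=\mathcal{E}_\varphi$, $P=\mathcal{P}_\varphi$.
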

\begin{proof}
Notice first that, even when $E$ is not invertible, $F = E+Q$ is invertible, and we can consider one of its primary logarithms, say $A= \log F$.
Then, we can apply Lemma~\ref{thm:kicktofield} as
\begin{equation}
 	\left(E \rme^{\frac{t}{n}P L P}\right)^n
	=     	 \left( F\rme^{\frac{t}{n} P L P} \right)^n P
	=	\left(\rme^{A}\rme^{\frac{t}{n}P L P}\right)^nP
	=\rme^{n A+t g(\ad_{A})(PLP)+\mathcal{O}(1/n)}P 
\label{eqn:4.8}
\end{equation}
for large $n$.
Since $[P, F]=0$ and $A=\log F$ is primary, it implies $[P,A]=0$, and we have 
\begin{equation}
g(\ad_{A})(PLP)=P g(\ad_{A})(L)P = \tilde{L}.
\end{equation}
The statement of the lemma thus holds.
\end{proof}

\begin{remark}
If $E$ is invertible, i.e.\ $\ker E = \{0\}$, then $Q=0$ and $P=1$,
and Lemma~\ref{thm:kicktofieldP} is reduced to Lemma~\ref{thm:kicktofield}.
\end{remark}
\begin{remark}
If $L$ in the exponent on the left-hand side of Eq.\ (\ref{eq:pulvscontlimP}) is not projected by $P$ as $PLP$, we are not allowed to promote $E$ to $E+Q$ to define $A=\log(E+Q)$, since $\rme^{\frac{t}{n}L}$ is in general not commutative with $P$.
\end{remark}

The second  ingredient to bridge a pulsed dynamics to a continuous one is the following approximation lemma:
\begin{lemma}[Asymptotic projection of a sequence of operators]
\label{prop:Peripheral}
Let $(E_n)$ be a sequence of linear operators on a finite-dimensional Banach space and $P\,(=P^2)$ be a projection.
Assume that the following conditions hold:
\begin{enumerate}
\item
The operators $E_n$  asymptotically commute with $P$ as
\begin{equation}
  PE_n = E_n P + \mathcal{O}(1/n)
\quad\text{as}\quad
n\to+\infty.
  \label{eqn:FamilyAsympMatrix}
\end{equation}

\item There exist $M\ge0$ and $n_0>0$ such that, for all $n\ge n_0$,
\begin{equation}
\|(PE_nP)^k\|\le M,
\quad\forall k\in\mathbb{N}.
\label{eqn:Bound1}
\end{equation}

\item There exist $K\ge0$, $\mu\in[0,1)$, and $n_0>0$ such that, for all $n\ge n_0$,
\begin{equation}
\|(E_n-PE_nP)^k\|\le K \mu^k,
\quad\forall k\in\mathbb{N}.
\label{eqn:Bound2}
\end{equation}
\end{enumerate}
Then, we have
\begin{equation}
E_n^n=(PE_nP)^n+\mathcal{O}(1/n)
\quad\text{as}\quad
n\to+\infty.
\end{equation}
\end{lemma}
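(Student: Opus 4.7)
The plan is to decompose $E_n$ with respect to the complementary projections $P$ and $Q:=1-P$, and then to expand $E_n^n$ as a sum over walks on $\{P,Q\}$ so that the $\mathcal{O}(1/n)$ smallness of the off-diagonal blocks combines with the geometric decay in condition~3 to produce the desired remainder. From condition~1 the commutator $[P,E_n]=\mathcal{O}(1/n)$; in block form with respect to $P\oplus Q$ it equals
\begin{equation*}
[P,E_n]=\begin{pmatrix} 0 & PE_nQ \\ -QE_nP & 0\end{pmatrix},
\end{equation*}
which forces both $e_n:=PE_nQ$ and $d_n:=QE_nP$ to have norm $\mathcal{O}(1/n)$. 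I would then write $E_n=D_n+T_n$ with $D_n:=A_n+B_n$ block diagonal ($A_n:=PE_nP$, $B_n:=QE_nQ$) and $T_n:=e_n+d_n$, so that $\|T_n\|=\mathcal{O}(1/n)$; since $A_n B_n=B_n A_n=0$, the powers split cleanly as $D_n^k=A_n^k+B_n^k$, with $\|A_n^k\|\le M$ supplied by condition~2.

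A preliminary step is to transfer the bound of condition~3 from $(E_n-A_n)^k=(B_n+e_n+d_n)^k$ to $B_n^k$ alone. Viewing the expansion of the $QQ$-block of $(E_n-A_n)^k$ as a walk on $\{P,Q\}$ with $B_n$ the self-loop at $Q$ and $e_n,d_n$ the transitions, every excursion to $P$ must be opened and closed, producing a matched factor $e_n d_n=\mathcal{O}(1/n^2)$; an induction on $k$ then yields $\|B_n^k\|\le K'\mu^k$ uniformly for $n$ large. The central calculation is the estimate on $\|E_n^n-D_n^n\|$: expand $(D_n+T_n)^n$ as a sum over words of the form $D_n^{a_0}T_n D_n^{a_1}\cdots T_n D_n^{a_k}$ with $\sum a_i=n-k$. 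Because $T_n$ is off-diagonal while $A_n,B_n$ live on orthogonal subspaces, only block-consistent assignments of each $T_n$ to either $e_n$ or $d_n$ contribute; in such a nonzero word the $A$-runs are bounded by $M$, the $B$-runs of length $a_i$ by $K'\mu^{a_i}$, and the $k$ off-diagonal factors contribute $(c/n)^k$. Summing over compositions of $n-k$ produces combinatorial weights bounded by $\binom{n}{k}$ with geometric tails $(1-\mu)^{-\lceil k/2\rceil}$, giving a net contribution of order $1/n^{\lceil k/2\rceil}$ for each $k\ge 1$, and the resulting series in $k$ sums to $\mathcal{O}(1/n)$. Combined with $\|B_n^n\|\le K'\mu^n=o(1/n)$ and $D_n^n=A_n^n+B_n^n$, this yields $E_n^n=(PE_nP)^n+\mathcal{O}(1/n)$, as required.

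The main obstacle is controlling the combinatorial growth: the number of words of length $n$ containing $k$ factors of $T_n$ is as large as $\binom{n}{k}$, so the $(c/n)^k$ suppression from the off-diagonal part succeeds only when it is paired with the geometric decay of the intervening $B$-runs. Extracting the $B_n^k$ decay from the weaker hypothesis of condition~3, which only controls $(E_n-A_n)^k$ as a whole, is the second delicate point and is precisely what the preliminary induction is designed for.
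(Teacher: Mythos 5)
Your strategy is essentially the one the paper uses: expand $E_n^n$ perturbatively around $(PE_nP)^n$, let each contact between the $P$-block and the rest contribute a factor $\mathcal{O}(1/n)$ via condition~1, pair that suppression with the geometric decay of condition~3 to beat the $\binom{n}{k}$-type combinatorics, and check that the surviving contribution for $k$ insertions is $\mathcal{O}(n^{-\lceil k/2\rceil})$ with summable constants. Your counting of the main expansion is correct and matches the paper's bounds on the terms $R_{n,j}$, $R'_{n,j}$.

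The one place you genuinely deviate is the decomposition, and it is where your sketch has a gap. The paper keeps $E_n'=E_n-PE_nP$ as a single object, so condition~3 applies to it verbatim, and the identity $PE_n'P=0$ forces every $E_n'$-block sandwiched between $(PE_nP)$-runs to have length at least $2$ and to carry the two $\mathcal{O}(1/n)$ contact factors $PE_n'$ and $E_n'P$. You instead split further into $B_n=QE_nQ$ plus the off-diagonal pieces, which obliges you to first prove $\|B_n^k\|\le K'\mu^k$. That auxiliary bound is not supplied by the hypotheses, and the induction you sketch does not obviously close: replacing $j$ of the $k$ steps by excursions $d_ne_n$ shortens the $B$-runs, so the induction hypothesis yields factors $\mu^{k-2j}=\mu^k\mu^{-2j}$, and the growing $\mu^{-2j}$ must be absorbed by $(c/n^2)^j$ together with the combinatorial count $\binom{k-j}{j}\le k^j/j!$. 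This works only for $k\lesssim n^2$ (and in fact only $k\le n$ is ever needed in the main estimate), but as stated your claim is uniform in $k\in\mathbb{N}$, where the argument fails; to repair it you would either restrict to $k\le n$, pass to a slightly larger rate $\mu'\in(\mu,1)$, or simply adopt the paper's coarser grouping and avoid the auxiliary lemma altogether.
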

\begin{proof}
	The proof is given in Appendix~\ref{sec:ProofPeripheral}\@.
\end{proof}
\begin{remark}
\label{remark:AsympProjCPTP}
For a sequence of quantum operations $(\mathcal{E}_n)$ converging to a quantum operation $\mathcal{E}$ as $\mathcal{E}_n=\mathcal{E}+\mathcal{O}(1/n)$ as $n\to+\infty$, all the conditions~1--3 of Lemma~\ref{prop:Peripheral} are automatically fulfilled with the peripheral projection $\mathcal{P}_\varphi$ of $\mathcal{E}$ taken as $P$.
See Lemmas~\ref{lem:NormCPT} and \ref{lem:BoundK} in Appendix~\ref{app:BasicLemmas}, which guarantee conditions~2 and~3 for the sequence of quantum operations $(\mathcal{E}_n)$.
Then, according to Lemma~\ref{prop:Peripheral}, we have
\begin{equation}
\mathcal{E}_n^n=(\mathcal{P}_\varphi\mathcal{E}_n\mathcal{P}_\varphi)^n+\mathcal{O}(1/n)
\quad\text{as}\quad
n\to+\infty.
\label{eqn:AsympProjCPTP}
\end{equation}
\end{remark}

We will use these lemmas to prove Theorem~\ref{thm:CPTPBB}.

\section{Proof of Lemma~\ref{thm:kicktofield}}
\label{sec:ProofKeyLemma}
Let us prove Lemma~\ref{thm:kicktofield}, which is the key to the proof of Theorem~\ref{thm:CPTPBB}.
\begin{proof}[Proof of Lemma~\ref{thm:kicktofield}]
First, let us recall some properties of functions of operators on a finite-di\-men\-sion\-al Banach space (see e.g.\ Refs.\ \cite{ref:Katobook}, 
\cite[Chap.~1]{ref:MatrixFunctions-Higham}, and \cite[Sec.~6.2]{ref:MatrixAnalysisTopics-HornJohnson}). 
Given a function $h(z)$ on the complex plane, we wish to define a function $h(X)$ of operators $X$. Notice that, in general, an operator function $h(X)$ does not have a series expansion, unless the spectrum of $X$ lies within the convergence radius of a power series of the function $h(z)$. Neverthless, by making use of the resolvent $(zI - X)^{-1}$ of the operator $X$, we can define functions $h(X)$ of $X$ for a large class of functions $h$.

Suppose that $h(z)$ is holomorphic in a domain $\Delta$ of the complex plane containing the spectrum of $X$, and let $\Gamma\subset \Delta$ be a  smooth curve with positive direction enclosing all the eigenvalues in its interior. Then, a \textit{primary function} $h(X)$ is defined by the Dunford-Taylor integral
\begin{equation}\label{integral}
h(X)=\frac{1}{2\pi \rmi} \oint_\Gamma \d z \, h(z) \frac{1}{zI-X}.
\end{equation}
More generally, $\Gamma$ may consist of several simple closed curves, such that the union of their interiors contains all the eigenvalues of $X$. 
Note that Eq.~\eqref{integral} does not depend on $\Gamma$ as long as the latter satisfies these conditions, that is $h(z)$ is holomorphic in $\Delta$ and $\Gamma\subset \Delta$.

Now, let us start the proof of Lemma~\ref{thm:kicktofield}.

\paragraph{Step 1.}
We wish to define a logarithm of $E$.
Since $E$ is invertible, its spectrum $\sigma (E)$ does not contain $0$. Choose a half-line $\mathrm{c}=\{r \rme^{\rmi\varphi}\in\mathbb{C}\,|\,r\ge0\}$ such that $\mathrm{c}\cap \sigma(E)=\emptyset$, and let $h(z)=\log z$ denote a branch of the logarithm function.
Take a contour $\Gamma$ enclosing all the eigenvalues of $E$ and contained in $\Delta=\mathbb{C}\setminus \mathrm{c}$.
See Fig.~\ref{fig:ContourGamma}.
Since  $h(z)$ is analytic in $\Delta$, it can be taken as a stem function to define a primary logarithm function $A = \log E$ by Eq.\ (\ref{integral}), and $\rme^A= E$ is inherited from the functional properties of the  function $h(z)$. Note that there is a neighborhood of $E$ on which this logarithm function is well-defined \cite{ref:Katobook,LAX}. 
\begin{figure}
\centering
\includegraphics[scale=0.35]{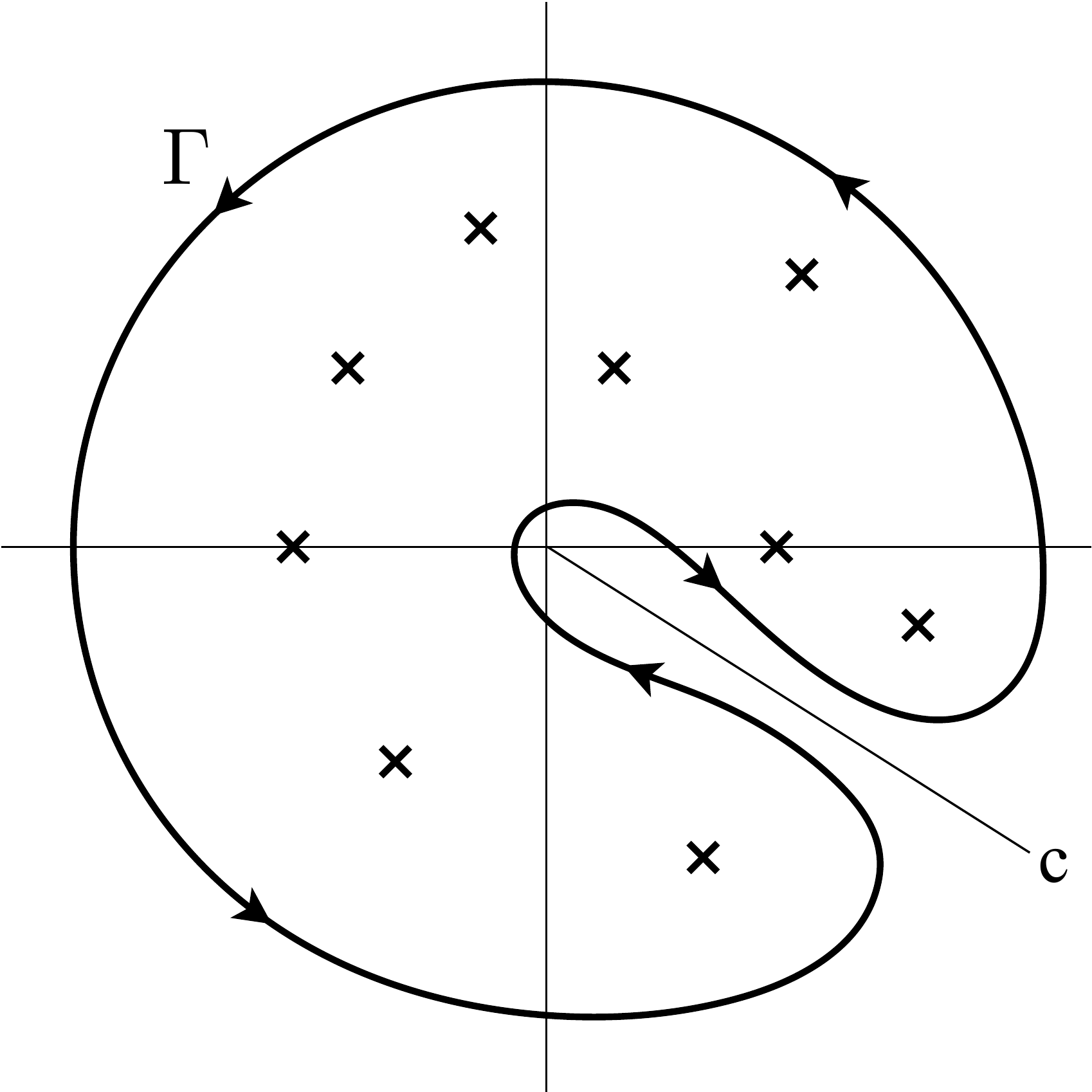}
\caption{An example of contour $\Gamma$ to define a primary logarithm function of an operator $E$. The crosses represent the eigenvalues of $E$, the cut $\mathrm{c}$ does not intersect any eigenvalue, 
and the contour $\Gamma$ runs in the domain of analiticity $\Delta = \mathbb{C}\setminus \mathrm{c}$ of a branch $h(z)=\log z$ of the logarithm function, and encloses all the eigenvalues.}
\label{fig:ContourGamma}
\end{figure}

\paragraph{Step 2.}
Given the operator $A=\log E$, we follow the proof of the BCH formula given in Sec.~5.5 of Ref.\ \cite{ref:Hall-BCH}. 
For $n$ large enough, $\rme^{A}\rme^{\frac{s}{n}L}=E\rme^{\frac{s}{n}L}$
is invertible and lies in the neighborhood of $E$ for all $0\le s\le t$, whence its logarithm is defined by the integral~\eqref{integral} along the same contour $\Gamma$. Let
\begin{equation}
Z(s)=\log(\rme^{A}\rme^{\frac{s}{n}L})
\end{equation}
for $0\le s\le t$.
$Z(s)$ is an analytic operator-valued function
and
\begin{equation}\label{onehand}
	\rme^{-Z(s)}\frac{\d}{\d  s}\rme^{Z(s)}=(\rme^{A}\rme^{\frac{s}{n}L})^{-1}\rme^{A}\rme^{\frac{s}{n}L}\frac{1}{n} L=\frac{1}{n} L.
\end{equation}
On the other hand, by Theorem~5.4 of Ref.\ \cite{ref:Hall-BCH}, we have
\begin{equation}\label{otherhand}
\rme^{-Z(s)}\frac{\d}{\d  s}\rme^{Z(s)}=f(\ad_{Z(s)})\!\left(\frac{\d Z(s)}{\d s}\right),
\end{equation}
where 
\begin{equation}
f(z)=\sum_{n=0}^\infty\frac{(-1)^n}{(n+1)!}z^n = \begin{cases} 
\medskip
\displaystyle
\frac{1-\rme^{-z}}{z} & (z\neq 0),\\
1 & (z=0) 
\end{cases}
\end{equation}
is an entire analytic function, and the superoperator $f(\ad_{Z(s)})$ is defined by Eq.~\eqref{integral} for a given curve $\Gamma$ enclosing the spectrum of $\ad_{Z(s)}$ for all $0\le s\le t$. Notice that $f(\ad_{Z(0)}) = f(\ad_A)$, and that $f(z)=0$ only at the imaginary points $z_k= 2\pi \rmi k$ with $k\in\mathbb{Z}\setminus\{0\}$.

\paragraph{Step 3.}
Now, we wish to invert Eq.~\eqref{otherhand} in order to obtain an explicit expression for the derivative~$\d Z(s) / \d s$. We claim that $f(\ad_{A})$ is invertible. That is to say that all eigenvalues of $\ad_A$ are not zeros of $f(z)$, namely that $\ker(\ad_A - z_k)=\{0\}$ for all $k\in\mathbb{Z}\setminus\{0\}$. Indeed, let the operator $X_*$ belong to $\ker(\ad_A - z_k)$ for some $k\in\mathbb{Z}\setminus\{0\}$.
Then,
\begin{equation}
\label{eq:adAeigen}
\ad_A(X_*) = [A, X_*] = z_k X_*.
\end{equation} 
By exponentiating it, we get
$\rme^{t \ad_A} (X_*) = \rme^{t A} X_*\rme^{-t A} = \rme^{t z_k} X_*$ for all $t\in\mathbb{R}$. In particular, at $t=1$,
\begin{equation}
\rme^{A} X_*\rme^{-A} = \rme^{z_k} X_*, 
\end{equation}
that is $E X_* E^{-1} = X_*$, whence
$[E,X_*]=0$. But this implies that also $A=\log E$, as a function of $E$, commutes with $X_*$, namely 
\begin{equation}
\ad_A(X_*) = [A, X_*] = 0,
\end{equation}
which, together with Eq.~\eqref{eq:adAeigen}, implies that $X_*=0$, since $z_k\neq 0$.

Hence the superoperator $f(\ad_{A})$ is invertible. Furthermore, this implies that the inverse of $f(\ad_{Z(s)})$ does
exist for all $0\le s\le t$,  if $n$ is large enough.
It is given by $g(\ad_{Z(s)})$, defined by the stem function $g(z)=1/f(z)$, that is the meromorphic function given in Eq.~\eqref{eq:g{z}def}.

We can then combine Eqs.\ (\ref{onehand}) and (\ref{otherhand}) to obtain
\begin{equation}
\frac{\d Z(s)}{\d s}
=\frac{1}{n}
g(\ad_{Z(s)})(L).
\label{eq:inverted}
\end{equation}
Noting that $Z(0)=A$, we integrate Eq.\ (\ref{eq:inverted}) to get
\begin{equation}
Z(t)
=\log(\rme^{A}\rme^{\frac{t}{n}L})=A+\frac{1}{n} 
\int_0^t \d s\,
g(\ad_{Z(s)})(L).
\label{eq:inverted-1-1}
\end{equation}

\paragraph{Step 4.}
We are only interested in terms up to $\mathcal{O}(1/n)$ as $n\to+\infty$. 
In general, by the integral representation (\ref{integral}) for $h(z)$ analytic on a domain $\Delta$ and for the spectrum of $X+\frac{1}{n}Y$ enclosed by $\Gamma\subset \Delta$, where $X$ and $Y$ are operators of order $\mathcal{O}(1)$ on a finite-dimensional Banach space, we have
\begin{align}
h\!\left(X+\frac{1}{n}Y\right) 
&=\frac{1}{2\pi \rmi} \oint_\Gamma \d z\, h(z) \frac{1}{zI-X-\frac{1}{n}Y}
\nonumber\\
&=\frac{1}{2\pi \rmi} \oint_\Gamma \d z\, h(z)
\left(
\frac{1}{zI-X}+\frac{1}{n}\frac{1}{zI-X}Y\frac{1}{zI-X-\frac{1}{n}Y}
\right)
\nonumber\\
&=h(X)+\mathcal{O}(1/n),
\vphantom{\oint}
\end{align}
for all $n>n_0$ for some $n_0>0$.
Therefore, for $h(z)=\log z$ we get $
Z(s)=\log(\rme^{A}\rme^{\frac{s}{n}L})=A+\mathcal{O}(1/n)$ by choosing $X=\rme^A$ and $Y=n\rme^{A}(\rme^{\frac{s}{n}L}-I)=\mathcal{O}(1)$, which implies $\ad_{Z(s)}=\ad_A+\mathcal{O}(1/n)$. Then, expanding 
$g(\ad_{Z(s)})$ in Eq.\ (\ref{eq:inverted-1-1}), 
we get
\begin{align}
\log(\rme^{A}\rme^{\frac{t}{n}L}) 
&=A+\frac{1}{n} \int_{0}^t \d s\,  
g(\ad_{A})(L) 
+\mathcal{O}(1/n^2)
\nonumber\displaybreak[0]\\
&= A+\frac{t}{n}
g(\ad_{A})(L) 
+\mathcal{O}(1/n^2)
\nonumber\displaybreak[0]\\
&= A+\frac{t}{n}
\tilde{L} 
+\mathcal{O}(1/n^2),
\vphantom{\int_0^t}
\label{eq:inverted-1-1-1}
\end{align}
with $\tilde{L}$ as in Eq.~\eqref{eq:tildeLdef}.
Exponentiating it, we obtain
\begin{equation}
	\rme^{A}\rme^{\frac{t}{n}L}
	=\rme^{
A+\frac{t}{n} \tilde{L}+\mathcal{O}(1/n^2)
},
\end{equation}
whence 
\begin{equation}
(\rme^{A}\rme^{\frac{t}{n}L})^{n}
=\rme^{nA+t\tilde{L}+\mathcal{O}(1/n)},
\label{eqn:6.14}
\end{equation}
which gives the result~\eqref{eq:pulvscontlim} for $E$ invertible. 
Uniformity in $t$ on compact intervals of $\mathbb{R}$ is straightforward.
We provide, in Proposition~\ref{prop:BoundBCH} in Appendix~\ref{app:BCH}, a concise and explicit expression for the bound on the correction $\mathcal{O}(1/n^2)$ in Eq.\ (\ref{eq:inverted-1-1-1}), which ensures that the correction in Eq.\ (\ref{eqn:6.14}) is $\mathcal{O}(1/n)$ for any finite $t$.
\end{proof}
\begin{remark}
We remark that, compared to the BCH formula, this lemma does not have to impose the condition on $E-I$ or $L$ being small. The difference is that we have the freedom to choose $n$ large, and that we have the freedom to choose an appropriate logarithm $A$. 
\end{remark}
\begin{remark}
Furthermore, we note that implementing $g(\ad_A)$ for the stem function $g(z)$ given in Eq.\ (\ref{eq:g{z}def}) numerically (for a matrix $A$) is a difficult business because $1-\rme^{-{\ad_A}}$ of its denominator is not invertible itself. We can define such matrix functions in general through Jordan form and the derivatives of the stem function $g(z)$.  However, this brings in the usual stability issues of the Jordan form. Instead, we can implement it via
\begin{equation}
f(\ad_A)(X)=\left(
\frac{1-\rme^{-{\ad_A}}}{\ad_A}
\right)\!(X)
=\left.\rme^{-A}\frac{\d}{\d  t}\rme^{A+tX}\right|_{t=0},
\end{equation}
which is obtained by applying the formula (\ref{otherhand}) to $Z(s)=A+sX$.
Notice that the derivative $\left.\frac{\d}{\d  t}\rme^{A+tX}\right|_{t=0}$ on the right-hand side is easy to implement. For instance \cite[Theorem~3.6]{ref:MatrixFunctions-Higham}, it is given by the top-right block of 
\begin{equation}
\exp\!\begin{pmatrix}{A} & X \\ 0 & {A} \end{pmatrix}.
\end{equation}
We vectorize the matrix $X$ and eventually get the matrix elements of the supermap $f(\ad_A)$, which we then invert.
\end{remark}

\section{Proof of Theorem~\ref{thm:CPTPBB}}
\label{sec:ProofTheorems}
Now, we prove Theorem~\ref{thm:CPTPBB}. 
\begin{proof}[Proof of Theorem~\ref{thm:CPTPBB}]
The proof consists of three steps.
We first use Lemma~\ref{prop:Peripheral} to cut the nonperipheral part of the  kick.
Then, we apply Lemma~\ref{thm:kicktofieldP} to bridge from the pulsed strategy to the continuous strategy, which opens a way to carry out the Zeno limit by the generalized adiabatic theorem (Theorem~1 of Ref.\ \cite{ref:unity1}).

\paragraph{Step 1.}
First, we claim that as $n\to+\infty$ one gets
\begin{equation}
\left(\mathcal{E}_m\rme^{\frac{t}{mn}\mathcal{L}}\cdots\mathcal{E}_1\rme^{\frac{t}{mn}\mathcal{L}}\right)^n
=\left(\mathcal{E}_\varphi \rme^{\frac{t}{n}\mathcal{P}_\varphi\overline{\mathcal{L}}\mathcal{P}_\varphi+\mathcal{O}(1/n^2)}\right)^n
+\mathcal{O}(1/n)
\quad\text{as}\quad
n\to+\infty,
\label{eqn:ThmPeripheralMulti}
\end{equation}
where $\overline{\mathcal{L}}$ is given in Eq.~\eqref{eq:overLdef},
$\mathcal{P}_\varphi$ is the projection onto the peripheral spectrum of $\mathcal{E}=\mathcal{E}_m\cdots\mathcal{E}_1$, and $\mathcal{E}_\varphi$ and $\mathcal{E}_\varphi^{-1}$ are the peripheral part of $\mathcal{E}$ and its peripheral inverse, respectively. 
Indeed, $\tilde{\mathcal{E}}_n=\mathcal{E}_m\rme^{\frac{t}{mn}\mathcal{L}}\cdots\mathcal{E}_1\rme^{\frac{t}{mn}\mathcal{L}}$ is a quantum operation, and it approaches $\mathcal{E}$ as
\begin{align}
\tilde{\mathcal{E}}_n
&=\mathcal{E}_m\rme^{\frac{t}{mn}\mathcal{L}}\cdots\mathcal{E}_1\rme^{\frac{t}{mn}\mathcal{L}}
\nonumber\displaybreak[0]\\
&=\mathcal{E}
+\frac{t}{mn}\, \biggl(\mathcal{E} \mathcal{L} +\sum_{j=2}^m\mathcal{E}_m\cdots\mathcal{E}_j\mathcal{L}\mathcal{E}_{j-1}\cdots\mathcal{E}_1 \biggr)
+\mathcal{O}(1/n^2)
\label{eqn:BBseq}
\end{align}
as $n$ increases.
An explicit bound on $\|\tilde{\mathcal{E}}_n-\mathcal{E}\|$, which is $\mathcal{O}(1/n)$, is given in Lemma~\ref{lem:BBperturb} in Appendix~\ref{app:BasicLemmas}.
Thus, the conditions for Lemma~\ref{prop:Peripheral} are all satisfied (see Remark~\ref{remark:AsympProjCPTP}), and by Lemma~\ref{prop:Peripheral} we have
\begin{equation}
\tilde{\mathcal{E}}_n^n
=(\mathcal{P}_\varphi\tilde{\mathcal{E}}_n\mathcal{P}_\varphi)^n
+\mathcal{O}(1/n).
\label{eqn:PropPeripheral0Multi}
\end{equation}
An explicit bound on $\|\tilde{\mathcal{E}}_n^n
-(\mathcal{P}_\varphi\tilde{\mathcal{E}}_n\mathcal{P}_\varphi)^n\|$, which is  $\mathcal{O}(1/n)$, is given in Eq.\ (\ref{eqn:TotalCorrection}), in the proof of Lemma~\ref{prop:Peripheral} in Appendix~\ref{sec:ProofPeripheral}\@.
Since
\begin{align}
	\mathcal{P}_\varphi\tilde{\mathcal{E}}_n\mathcal{P}_\varphi
	&=\mathcal{P}_\varphi\mathcal{E}_m\rme^{\frac{t}{mn}\mathcal{L}}\cdots\mathcal{E}_1\rme^{\frac{t}{mn}\mathcal{L}}\mathcal{P}_\varphi
\nonumber\displaybreak[0]\\
	&=\mathcal{P}_\varphi\,\biggl[
	\mathcal{E} + \frac{t}{mn}\,\biggl(
	\mathcal{E} \mathcal{L} + \sum_{j=2}^m\mathcal{E}_m\cdots\mathcal{E}_j\mathcal{L}\mathcal{E}_{j-1}\cdots\mathcal{E}_1 
	\biggr)+\mathcal{O}(1/n^2)
	\biggr]\,\mathcal{P}_\varphi
\nonumber\displaybreak[0]\\
	&=\mathcal{E}_\varphi\,\biggl[1+  \frac{t}{mn} \mathcal{P}_\varphi\, \biggl(  \mathcal{L}  
	+\mathcal{E}_\varphi^{-1}\sum_{j=2}^m\mathcal{E}_m\cdots\mathcal{E}_j\mathcal{L}\mathcal{E}_{j-1}\cdots\mathcal{E}_1\biggr) \,\mathcal{P}_\varphi +\mathcal{O}(1/n^2)\biggr]
\nonumber\displaybreak[0]\\
	&
	=\mathcal{E}_\varphi\left(
1+\frac{t}{n}\mathcal{P}_\varphi\overline{\mathcal{L}}\mathcal{P}_\varphi+\mathcal{O}(1/n^2)
\right)
\nonumber\displaybreak[0]\\
	&=\mathcal{E}_\varphi \rme^{\frac{t}{n}\mathcal{P}_\varphi\overline{\mathcal{L}}\mathcal{P}_\varphi+\mathcal{O}(1/n^2)},
\vphantom{\sum_{j=2}^m}
\label{eqn:ProofTh1Step1Exp}
\end{align}
Eq.\ (\ref{eqn:PropPeripheral0Multi}) implies Eq.\ (\ref{eqn:ThmPeripheralMulti}). 
The correction $\mathcal{O}(1/n^2)$ in the exponent of the last expression in Eq.\ (\ref{eqn:ProofTh1Step1Exp}) can be bounded by using Lemma~\ref{lem:MatrixFuncBound} in Appendix~\ref{app:BCH}\@.

\paragraph{Step 2.}
By applying Lemma~\ref{thm:kicktofieldP} 
with $E=\mathcal{E}_\varphi$, $L=\overline{\mathcal{L}}$, and $P=\mathcal{P}_\varphi$
to the right-hand side of Eq.\ (\ref{eqn:ThmPeripheralMulti}), one gets for large $n$
\begin{equation}
\left(\mathcal{E}_m\rme^{\frac{t}{mn}\mathcal{L}}\cdots\mathcal{E}_1\rme^{\frac{t}{mn}\mathcal{L}}\right)^n
=\rme^{n\mathcal{A}+t \tilde{\mathcal{L}}+\mathcal{O}(1/n)}\mathcal{P}_\varphi
+\mathcal{O}(1/n),
\label{eqn:4.10}
\end{equation}
where $\mathcal{A}$ is a primary logarithm of $\mathcal{E}_\varphi+(1-\mathcal{P}_\varphi)$, and 
\begin{equation}
\tilde{\mathcal{L}}=\mathcal{P}_\varphi g(\ad_{\mathcal{A}})(\overline{\mathcal{L}}) \mathcal{P}_\varphi.
\end{equation}

\paragraph{Step 3.}
Now, we set $n=\gamma t$ and consider
$
\rme^{\gamma t  \mathcal{A}+ t \tilde{\mathcal{L}}+\mathcal{O}(1/n)}\mathcal{P}_\varphi
$
for arbitrary $\gamma$, noting that for $\gamma t$ noninteger this in general is not a physical (quantum operation) map. 
Since $\mathcal{E}_\varphi$ and $1-\mathcal{P}_\varphi$ are both diagonalizable, $\mathcal{A}$ is also diagonalizable,
\begin{equation}
\mathcal{A}=\sum_{|\lambda_k|=1,\lambda_k\neq1}a_k\mathcal{P}_k+a_0[\mathcal{P}_0+(1-\mathcal{P}_\varphi)],
\end{equation}
with purely imaginary spectrum $a_k$, such that $\rme^{a_k} = \lambda_k$,
where $\mathcal{P}_k$ is the spectral projection of $\mathcal{E}$ belonging to the eigenvalue $\lambda_k$ with $\mathcal{P}_0$ belonging to the unit eigenvalue $\lambda_0=1$ [note that $\mathcal{P}_0+(1-\mathcal{P}_\varphi)$ is the spectral projection of $\mathcal{E}_\varphi+(1-\mathcal{P}_\varphi)$ belonging to the  eigenvalue $\lambda_0=1$ and hence the spectral projection of the primary logarithm $\mathcal{A}$ of $\mathcal{E}_\varphi+(1-\mathcal{P}_\varphi)$ belonging to the eigenvalue $a_0$, which is an integer multiple of $2\pi\rmi$], so that Theorem~1 of Ref.\ \cite{ref:unity1} (generalized adiabatic theorem) can be applied.
In the adiabatic limit, Eq.\ (\ref{eqn:4.10}) becomes
\begin{equation}
\left(\mathcal{E}_m\rme^{\frac{t}{mn}\mathcal{L}}\cdots\mathcal{E}_1\rme^{\frac{t}{mn}\mathcal{L}}\right)^n
=\rme^{n \mathcal{A}}
\rme^{t\hat{\mathcal{P}}(\tilde{\mathcal{L}})+\mathcal{O}(1/n)}\mathcal{P}_\varphi
+\mathcal{O}(1/n),
\label{eqn:IntermediateResult}
\end{equation} 
where the evolution is projected by
\begin{equation}
\hat{\mathcal{P}}({}\bullet{})=\sum_{|\lambda_k|=1,\lambda_k\neq1}\mathcal{P}_k{}\bullet{}\mathcal{P}_k+[\mathcal{P}_0+(1-\mathcal{P}_\varphi)]{}\bullet{}[\mathcal{P}_0+(1-\mathcal{P}_\varphi)].
\end{equation}
We notice that
\begin{equation}	
\hat{\mathcal{P}}(\mathcal{P}_\varphi{}\bullet{}\mathcal{P}_\varphi)=\mathcal{P}_\varphi\hat{\mathcal{P}}({}\bullet{})\mathcal{P}_\varphi=\sum_{|\lambda_k|=1}\mathcal{P}_k{}\bullet{}\mathcal{P}_k.
\end{equation}
In addition, we have $\hat{\mathcal{P}}\circ{\ad_\mathcal{A}}=0$ and hence $\hat{\mathcal{P}}\circ g(\ad_{\mathcal{A}})=\hat{\mathcal{P}}$.
Therefore, 
\begin{equation}	
\hat{\mathcal{P}}(\tilde{\mathcal{L}})=
\hat{\mathcal{P}}(\mathcal{P}_\varphi g(\ad_{\mathcal{A}})(\overline{\mathcal{L}}) \mathcal{P}_\varphi)
=
\mathcal{P}_\varphi\hat{\mathcal{P}}(g(\ad_{\mathcal{A}})(\overline{\mathcal{L}}))\mathcal{P}_\varphi
=
\mathcal{P}_\varphi\hat{\mathcal{P}}(\overline{\mathcal{L}})\mathcal{P}_\varphi
=\sum_{|\lambda_k|=1}\mathcal{P}_k\overline{\mathcal{L}}\mathcal{P}_k
=\mathcal{L}_Z.
\end{equation}
Equation (\ref{eqn:IntermediateResult}) is thus nothing but Eq.\ (\ref{eqn:CPTPZenomulti}) of the theorem.
\end{proof}

\begin{remark}
Let us see how the correction to the QZD in Eq.\ (\ref{eqn:CPTPZenomulti}) of Theorem~\ref{thm:CPTPBB} depends on $t$ and $n$.
As proved in Lemma~\ref{lem:BBperturb} in Appendix~\ref{app:BasicLemmas}, the correction $\mathcal{O}(1/n)$ in Eq.~(\ref{eqn:BBseq}) is bounded by a function of $\frac{t}{n}\|\mathcal{L}\|$.
This is inherited by the correction $\mathcal{O}(1/n)$ in Eq.\ (\ref{eqn:PropPeripheral0Multi}), through $C_n$ in the explicit bound (\ref{eqn:TotalCorrection}) in the proof of Lemma~\ref{prop:Peripheral} in Appendix~\ref{sec:ProofPeripheral}: the bound is a function of $\frac{t}{n}\|\mathcal{L}\|$ and $\frac{t^2}{n}\|\mathcal{L}\|^2$, say, $G(\frac{t}{n}\|\mathcal{L}\|,\frac{t^2}{n}\|\mathcal{L}\|^2)$ (apart from a correction which is exponentially small in $n$).
The correction $\mathcal{O}(1/n^2)$ in the exponent of the last expression in Eq.\ (\ref{eqn:ProofTh1Step1Exp}) can be bounded by using Lemma~\ref{lem:MatrixFuncBound} in Appendix~\ref{app:BCH}, which gives a bound as a function of $\frac{t}{n}\|\mathcal{L}\|$ again.
Collecting all these elements, the corrections $\mathcal{O}(1/n^2)$ in the exponent and $\mathcal{O}(1/n)$ in the last term in Eq.\ (\ref{eqn:ThmPeripheralMulti}) are bounded by a function of $\frac{t}{n}\|\mathcal{L}\|$ and $G(\frac{t}{n}\|\mathcal{L}\|,\frac{t^2}{n}\|\mathcal{L}\|^2)$, respectively.

To get Eq.\ (\ref{eqn:4.10}) from Eq.\ (\ref{eqn:ThmPeripheralMulti}) at Step 2, we apply the generalized BCH formula in Lemma~\ref{thm:kicktofieldP}.
It yields the correction $\mathcal{O}(1/n)$ in the exponent of Eq.\ (\ref{eqn:4.10}), which can be bounded by using Proposition~\ref{prop:BoundBCH} in Appendix~\ref{app:BCH}, and the bound is a function of $\frac{t}{n}\|\mathcal{L}\|$ multiplied by $n$.
Finally, at Step 3, Theorem~1 of Ref.\ \cite{ref:unity1} (generalized adiabatic theorem) is applied and it induces an additional correction, which can be bounded by $(M_1\frac{t}{n\Delta}\|\mathcal{L}\|+M_2\frac{t^2}{n}\|\mathcal{L}\|^2)F_1(\frac{t}{n}\|\mathcal{L}\|)\rme^{M_3t\|\mathcal{L}\|F_2(\frac{t}{n}\|\mathcal{L}\|)}$, with the spectral gap $\Delta=\min_{k\neq\ell}|a_k-a_\ell|$, some positive constants $M_i$ ($i=1,2,3$), and monotonically increasing functions $F_i(x)$ ($i=1,2$) which shrink to $1$ as $x\to0$.
This plus a bound $G(\frac{t}{n}\|\mathcal{L}\|,\frac{t^2}{n}\|\mathcal{L}\|^2)$ from the last contribution in Eq.\ (\ref{eqn:ThmPeripheralMulti}), which is $\mathcal{O}(1/n)$ for large $n$, gives the bound on the overall correction to the QZD in Eq.\ (\ref{eqn:CPTPZenomulti}) of Theorem~\ref{thm:CPTPBB}.
Certainly, this is  not a sharp bound, but it suffices to establish that the error is $\mathcal{O}(1/n)$ as $n\to+\infty$, for any finite $t$.
\end{remark}

\section{Examples}
\label{sec:Examples}
\subsection{QZD by Pulsed Weak Measurements}
\label{sec:ExampleWeakMeas}
Let us look at an example of the Zeno limit presented in Corollary~\ref{thm:kicktozeno} for the QZD via frequent applications of quantum operations, which is a particular case of Theorem~\ref{thm:CPTPBB}.
This corollary unifies and generalizes the QZDs via (i) frequent projective measurements and via (ii) frequent unitary kicks. 
Here, we provide a simple but analytically tractable model for the QZD via frequent weak measurements.

We consider a two-level system with a Hamiltonian
\begin{equation}
H=\frac{1}{2}\Omega Z,
\label{eqn:ExampleH}
\end{equation}
where 
$
Z=\ket{0}\bra{0}-\ket{1}\bra{1}
$.
We will also use
$
X=\ket{0}\bra{1}+\ket{1}\bra{0}
$ and $
Y=-\rmi(\ket{0}\bra{1}-\ket{1}\bra{0})
$ in the following.
During the unitary evolution $\rme^{-\rmi t\mathcal{H}}$ with $\mathcal{H}=[H,{}\bullet{}]$, we repeatedly perform weak nonselective measurement on $X$, whose action on the system state is described by the CPTP map
\begin{equation}
	\mathcal{E}=(1-p)1+p\mathcal{P},
	\label{eqn:WeakMeasEx}
\end{equation}
where
\begin{equation}
\mathcal{P}=P{}\bullet{}P+Q{}\bullet{}Q=\frac{1}{2}(1+X{}\bullet{}X)
\label{eqn:ExampleP}
\end{equation}
is a projection ($\mathcal{P}^2=\mathcal{P}$) with projection operators
\begin{equation}
P=\frac{I+X}{2},\qquad
Q=\frac{I-X}{2}.
\end{equation}
The parameter $p$ ranges from $0$ to $1$, and controls the strength of the measurement: for $p=1$ the map $\mathcal{E}$ describes the perfect projective measurement $\mathcal{P}$, while for $p=0$ it does nothing, gaining no information on $X$.
We focus on the case $p>0$ in the following.
The evolution of the system under the repeated measurements is described by $(\mathcal{E}\rme^{-\rmi\frac{t}{n}\mathcal{H}})^n$, and we are interested in how the dynamics is projected in the Zeno limit $n\to+\infty$ (QZD by this type of weak measurement is studied in Refs.\ \cite{ref:LidarZenoPRL,ref:LidarZenoJPA}).

It is possible to analyze this dynamics analytically.
Indeed, we can explicitly write down the spectral representation of the map,
\begin{equation}	
\Bigl(
\mathcal{E}
\rme^{-\rmi\frac{t}{n}\mathcal{H}}
\Bigr)^n
=\sum_{s,s'=\pm}
\lambda_{ss'}^n\mathcal{P}_{ss'},
\label{eqn:MapEx1}
\end{equation}
where
\begin{equation}
\lambda_{++}=
	1
,\qquad
\lambda_{+-}=
	1-p
,\qquad
\lambda_{-\pm}=
\left(1-\frac{p}{2}\right)\cos(\Omega t/n)
\pm\frac{p}{2}\eta
\end{equation}
are the eigenvalues and
\begin{equation}
\begin{cases}
\medskip
\displaystyle	
	\mathcal{P}_{+\pm}
	=\frac{1}{4}\,\Bigl(
	1+Z{}\bullet{}Z\pm[X(t/n){}\bullet{}X(t/n)+Y(t/n){}\bullet{}Y(t/n)]
	\Bigr),
	\\
\displaystyle	
	\mathcal{P}_{-\pm}
	=\frac{1}{4}
	\,\biggl\{
	1-Z{}\bullet{}Z
	\pm\frac{1}{\eta}
	\,\biggl[
	X(t/n){}\bullet{}X(t/n)-Y(t/n){}\bullet{}Y(t/n)
\\
\displaystyle
\qquad\qquad\qquad\qquad\qquad\qquad\qquad\qquad\qquad\ \ \,
{}-\rmi\left(
\frac{2}{p}-1
\right)
\sin(\Omega t/n)
[Z,{}\bullet{}]
	\biggr]
	\biggr\}
\end{cases}
\end{equation}
are the spectral projections, with
\begin{equation}
\eta
=
\sqrt{
1
-\left(
\frac{2}{p}-1
\right)^2
\sin^2(\Omega t/n)
}	
\end{equation}
and
\begin{equation}
\begin{cases}
\medskip
\displaystyle	
	X(t)=X\cos\frac{\Omega t}{2}
-Y\sin\frac{\Omega t}{2},\\
\displaystyle	
	Y(t)=Y\cos\frac{\Omega t}{2}
+X\sin\frac{\Omega t}{2}.
\end{cases}
\end{equation}
In the Zeno limit $n\to+\infty$, two of the eigenvalues of the map (\ref{eqn:MapEx1}) survive as $\lambda_{++}^n=1$ and $\lambda_{-+}^n=1+\mathcal{O}(1/n)$, while the others decay under the condition $p>0$.
Since $\mathcal{P}_{++}+\mathcal{P}_{-+}\to\frac{1}{2}(1+X{}\bullet{}X)=\mathcal{P}$, we get
\begin{equation}
\Bigl(
\mathcal{E}
\rme^{-\rmi\frac{t}{n}\mathcal{H}}
\Bigr)^n
\to\mathcal{P}\quad\text{as}\quad n\to+\infty.
\end{equation}
It is in accordance with Corollary~\ref{thm:kicktozeno}.
Indeed, the spectrum of $\mathcal{E}$ in Eq.\ (\ref{eqn:WeakMeasEx}) is given by $\{1,1-p\}$, with its peripheral projection being $\mathcal{P}$ in Eq.\ (\ref{eqn:ExampleP}).
The peripheral part of $\mathcal{E}$ is the projection $\mathcal{E}_\varphi=\mathcal{P}$, and the unitary generator $-\rmi\mathcal{H}$ with the Hamiltonian (\ref{eqn:ExampleH}) is projected to $\mathcal{L}_Z=-\rmi\mathcal{P} \mathcal{H}\mathcal{P}=0$.

\begin{figure}
\centering
\includegraphics[scale=0.75]{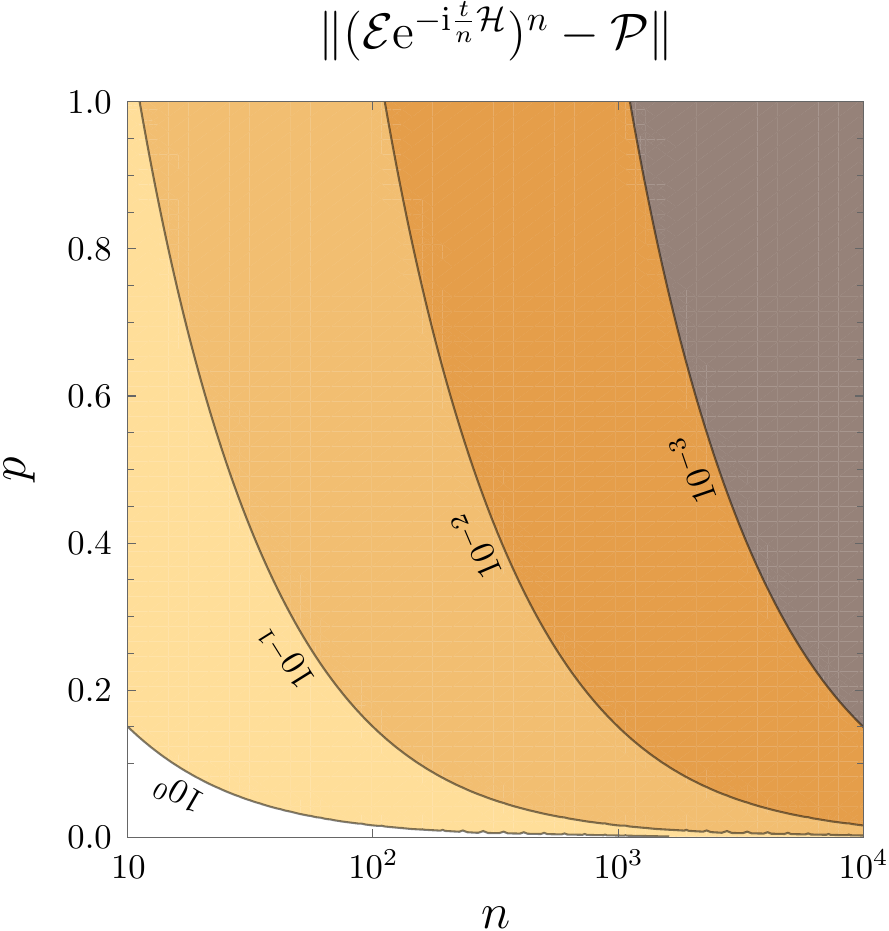}
\caption{Contour plot of $\|(\mathcal{E}\rme^{-\rmi\frac{t}{n}\mathcal{H}})^n-\mathcal{P}\|$ versus the number of measurements $n$ and the strength of the measurement $p$, for the model analyzed in Sec.~\ref{sec:ExampleWeakMeas} [the first term in Eq.\ (\ref{eqn:ExampleWeakMeasConv}) is actually plotted]. The parameter is set at $\Omega t=1$. The convergence to the QZD is faster with a stronger (larger $p$) measurement.}
\label{fig:ExampleWeakMeas}
\end{figure}
In more detail, it is possible to estimate how it converges to the limit:
\begin{equation}
\| (
\mathcal{E}
\rme^{-\rmi\frac{t}{n}\mathcal{H}}
)^n-\mathcal{P}\|
=
\frac{1}{2n}\Omega t
\left(\frac{2}{p}-1\right)
\left(
\sqrt{
\frac{1}{4}(\Omega t)^2+1
}
+
\sqrt{
\frac{1}{4}(\Omega t)^2+\left(\frac{2}{p}-1\right)^{-2}
}
\right)
+\mathcal{O}(1/n^2),
\label{eqn:ExampleWeakMeasConv}
\end{equation}
where we have chosen the operator norm defined in Eq.\ (\ref{eqn:OpNorm}) to estimate the distance.
For a large but finite $n$, the correction is $\mathcal{O}(1/n)$, as stated in Corollary~\ref{thm:kicktozeno}, and the correction depends on the chosen evolution time $t$ and the strength of the measurement $p$.

In Fig.~\ref{fig:ExampleWeakMeas}, the distance $\|(
\mathcal{E}
\rme^{-\rmi\frac{t}{n}\mathcal{H}}
)^n-\mathcal{P}\|$ in Eq.\ (\ref{eqn:ExampleWeakMeasConv}) is plotted as a function of the number of measurements $n$ and the strength of the measurement $p$.
The convergence to the QZD is faster with a stronger (larger $p$) measurement.

\subsection{QZD by CPTP Kicks with Persistent Oscillations}
\label{sec:ExampleCPTPkicks}
Let us look at another example of the Zeno limit presented in Corollary~\ref{thm:kicktozeno} for the QZD via frequent CPTP kicks.
Here, we provide a model in which two mechanisms work to induce the QZD: relaxation and persistent oscillations.
This situation was intractable by previously developed theories.

We consider a three-level system evolving with the GKLS generator
\begin{equation}
\mathcal{L}
=-\rmi[K,{}\bullet{}]-\frac{1}{2}(L^\dag L{}\bullet{}+{}\bullet{}L^\dag L-2L{}\bullet{}L^\dag)
\label{eqn:Ex2L}
\end{equation}
with
\begin{gather}
K
=
\Omega_0\ket{0}\bra{0}
+\Omega_1\ket{1}\bra{1}
+\Omega_2\ket{2}\bra{2}
=
\begin{pmatrix}
	\Omega_0&&\\
	&\Omega_1&\\
	&&\Omega_2
\end{pmatrix},\\
L
=
\sqrt{\Gamma}
\,\Bigl(
\ket{1}\bra{1}
+\ket{2}\bra{2}
\Bigr)
=
\sqrt{\Gamma}
\begin{pmatrix}
	0&&\\
	&1&\\
	&&1
\end{pmatrix}.
\label{eqn:Ex2G}
\end{gather}
During the evolution, we repeatedly kick the system by the CPTP map
\begin{equation}
\mathcal{E}
=
K_0{}\bullet{}K_0^\dag
+
K_1{}\bullet{}K_1^\dag
\end{equation}
with
\begin{gather}
K_0
=
\ket{0}\bra{1}
+
\ket{1}\bra{0}
+
\sqrt{q}\,\ket{2}\bra{2}
=\begin{pmatrix}
0&1&\\
1&0&\\
&&\sqrt{q}
\end{pmatrix},
\\
K_1
=\sqrt{1-q}\,\ket{0}\bra{2}
=
\begin{pmatrix}
0&0&\sqrt{1-q}\\
0&0&0\\
0&0&0
\end{pmatrix}.
\end{gather}
Namely, we look at the evolution $(\mathcal{E}\rme^{\frac{t}{n}\mathcal{L}})^n$ with large $n$.
The generator $\mathcal{L}$ describes pure dephasing between $\ket{0}$ and the rest.
On the other hand, the CPTP map induces transition from $\ket{2}$ to $\ket{0}$ with a rate $1-q$ ($0\le q<1$) and at the same time flips the system between $\ket{0}\leftrightarrow\ket{1}$.
We here restrict ourselves to the case $q<1$.
By repeatedly applying $\mathcal{E}$, the system relaxes to the subspace spanned by $\{\ket{0},\ket{1}\}$, where the system keeps on oscillating between $\ket{0}$ and $\ket{1}$.
In the Zeno limit $n\to+\infty$, the generator $\mathcal{L}$ is projected by the two mechanisms: by the relaxation from $\ket{2}$ to $\ket{0}$ and by the persistent oscillations between $\ket{0}$ and $\ket{1}$.
Notice that the two Kraus operators $K_0$ and $K_1$ do not commute and the two mechanisms act nontrivially.

\begin{figure}
\centering
\begin{tabular}{r@{\ }r@{\ }r@{\ }r}
\includegraphics[scale=0.405]{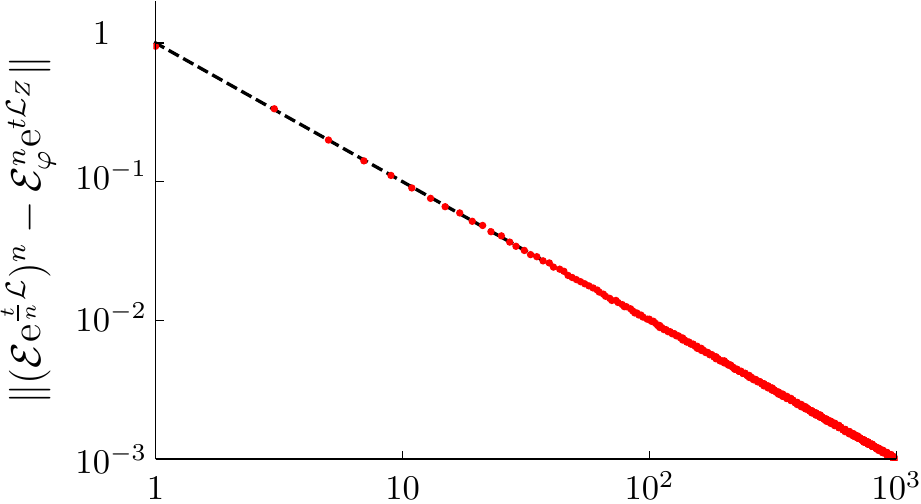}
&
\includegraphics[scale=0.405]{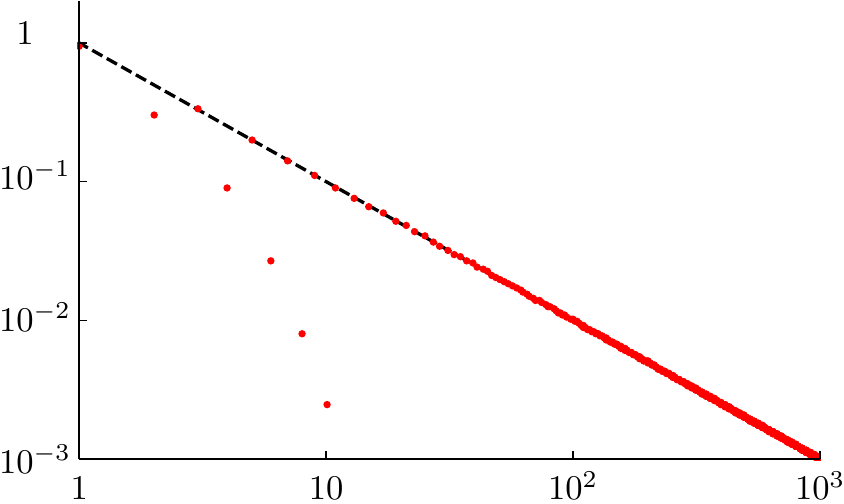}
&
\includegraphics[scale=0.405]{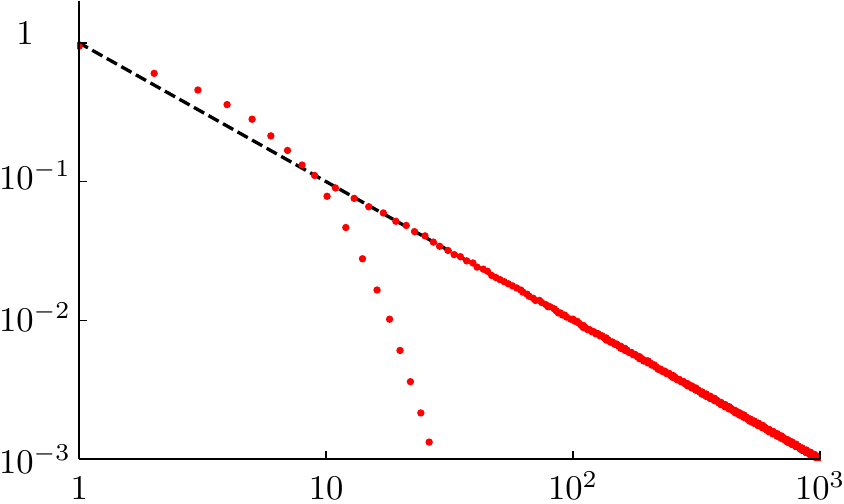}
&
\includegraphics[scale=0.405]{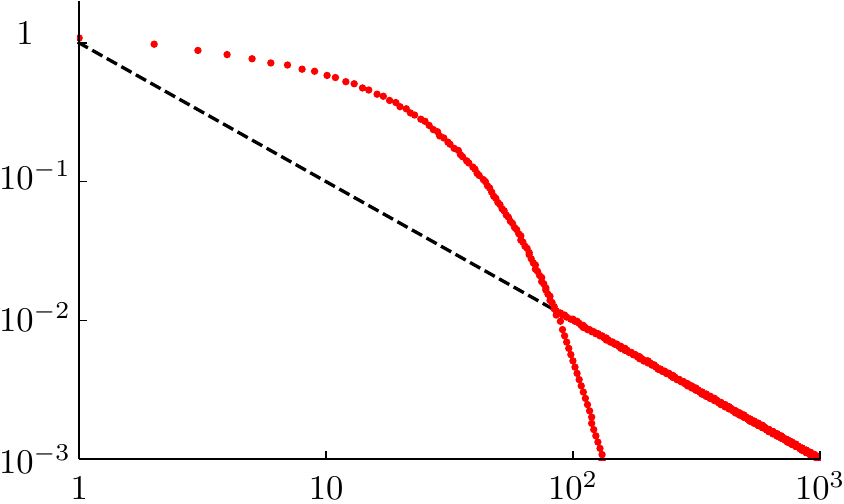}
\\[-24truemm]
\tiny$q=0.0$, $\Gamma t=0.0$
&
\tiny$q=0.3$, $\Gamma t=0.0$
&
\tiny$q=0.6$, $\Gamma t=0.0$
&
\tiny$q=0.9$, $\Gamma t=0.0$
\\[-2.9truemm]
&
\makebox(98.5,59){}
&
&
\\[2.2truemm]
\includegraphics[scale=0.405]{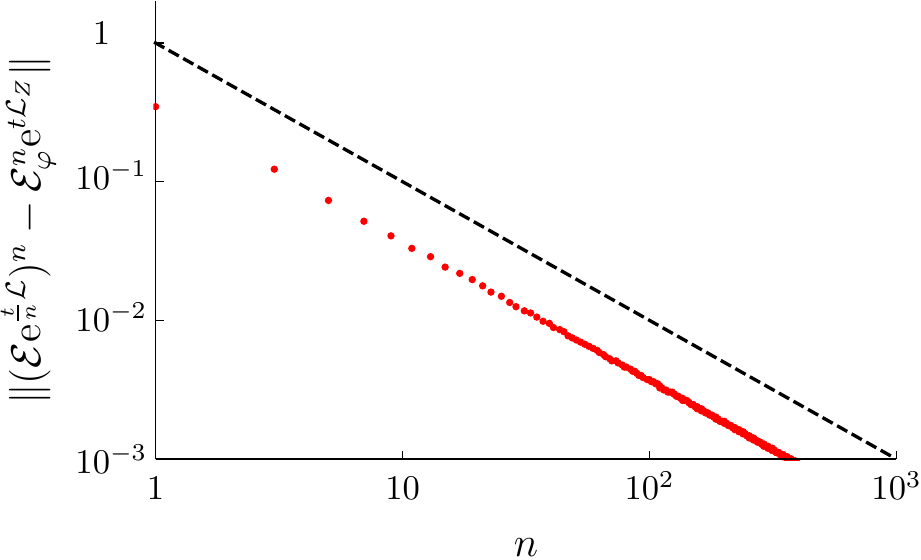}
&
\includegraphics[scale=0.405]{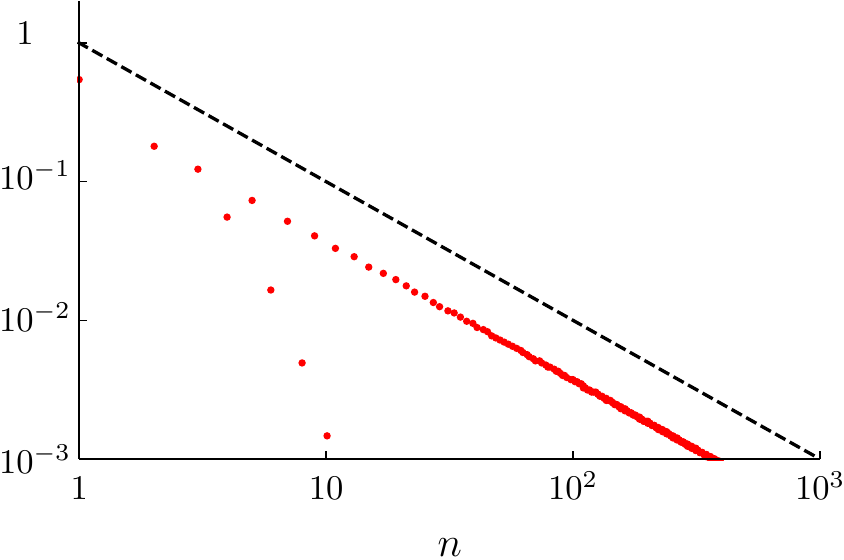}
&
\includegraphics[scale=0.405]{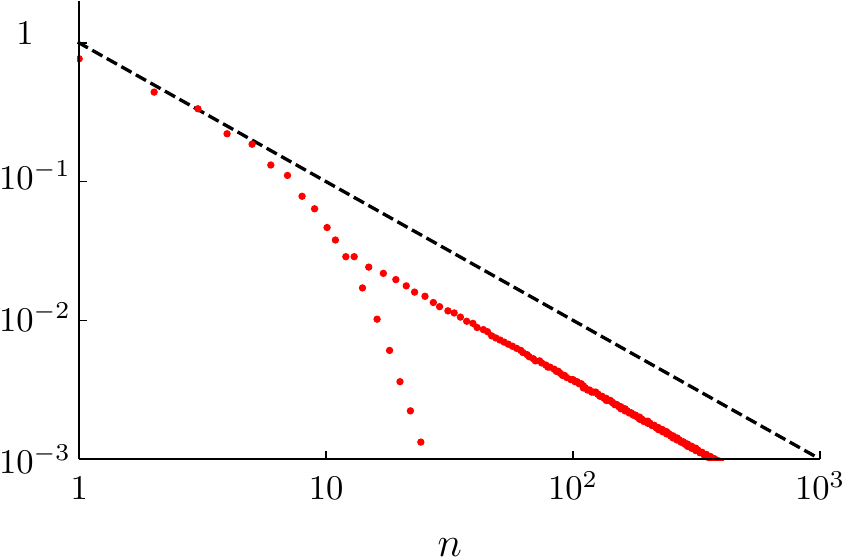}
&
\includegraphics[scale=0.405]{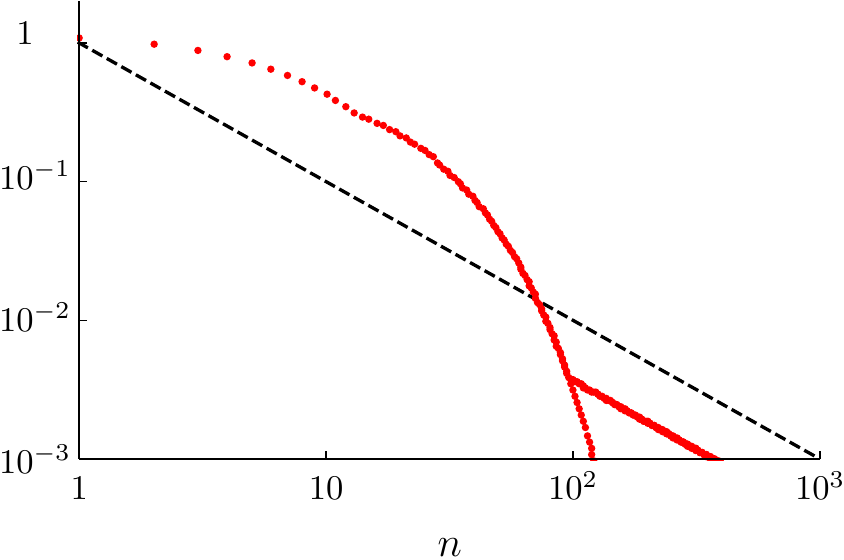}
\\[-26.2truemm]
\tiny$q=0.0$, $\Gamma t=2.0$
&
\tiny$q=0.3$, $\Gamma t=2.0$
&
\tiny$q=0.6$, $\Gamma t=2.0$
&
\tiny$q=0.9$, $\Gamma t=2.0$
\\[-2.9truemm]
&
\makebox(98.5,65.5){}
&
&
\end{tabular}
\caption{The convergence to the QZD via the repeated CPTP kicks in the model analyzed in Sec.~\ref{sec:ExampleCPTPkicks}. The parameters other than $q$ and $\Gamma t$ are set at $\Omega_0t=0.0$, $\Omega_1t=1.0$, and $\Omega_2t=2.0$. The dashed lines indicate $1/n$. We have chosen the operator norm defined in Eq.\ (\ref{eqn:OpNorm}) to estimate the distance. There appear to be two sequences in each panel: one is for odd $n$ and the other for even $n$. The former asymptotically decays as $\mathcal{O}(1/n)$, while the latter decays faster.}
\label{fig:ZenoLimitCPTP}
\end{figure}
Let us first see how the system evolves by the repeated applications of $\mathcal{E}$.
Applying $\mathcal{E}$ repeatedly $n$ times results in 
\begin{align}
\mathcal{E}^n
=\begin{cases}
\medskip
\displaystyle	
K_0^n{}\bullet{}K_0^n
+
\frac{1-q^n}{1+q}
\,\Bigl(
q\ket{0}\bra{0}
+\ket{1}\bra{1}
\Bigr)\,\bra{2}{}\bullet{}\ket{2}
&(n\ \text{even}),
\\
\displaystyle	
K_0^n{}\bullet{}K_0^n
+
\frac{1}{1+q}\,\Bigl(
(1-q^{n+1})\ket{0}\bra{0}
+q(1-q^{n-1})\ket{1}\bra{1}
\Bigr)\,\bra{2}{}\bullet{}\ket{2}
&(n\ \text{odd}).
\end{cases}
\end{align}
As $n$ increases, it asymptotically behaves as
\begin{equation}
\mathcal{E}^n
\sim
\mathcal{U}_\infty^n\mathcal{P}_\varphi,
\end{equation}
with the asymptotic unitary $\mathcal{U}_\infty=(X+\ket{2}\bra{2}){}\bullet{}(X+\ket{2}\bra{2})$ and the projection 
\begin{equation}
\mathcal{P}_\varphi
=
P{}\bullet{}P
+\frac{1}{2}
\left(
P-\frac{1-q}{1+q}Z
\right)\bra{2}{}\bullet{}\ket{2}
\end{equation}
onto the peripheral spectrum of $\mathcal{E}$, where $
P=\ket{0}\bra{0}+\ket{1}\bra{1}
$, $
X=\ket{0}\bra{1}+\ket{1}\bra{0}
$, $
Y=-\rmi(\ket{0}\bra{1}-\ket{1}\bra{0})
$, and $
Z=\ket{0}\bra{0}-\ket{1}\bra{1}
$.
The peripheral spectrum of $\mathcal{E}$ consists of two peripheral eigenvalues, $\lambda_0=1$ and $\lambda_1=-1$, with the corresponding spectral projections given by
\begin{equation}
\begin{cases}
\medskip
\displaystyle
\mathcal{P}_0
=
\frac{1}{2}
\,\Bigl(
P\tr({}\bullet{})
+
X
\tr(X{}\bullet{})
\Bigr),\\
\displaystyle
\mathcal{P}_1
=
\frac{1}{2}
\left(
Y
\tr(
Y
{}\bullet{}
)
+
Z
\tr
(
Z
{}\bullet{}
)
-
\frac{1-q}{1+q}
Z
\bra{2}{}\bullet{}\ket{2}
\right).
\end{cases}
\end{equation}
Then, according to Corollary~\ref{thm:kicktozeno}, the generator $\mathcal{L}$ is projected to
\begin{equation}
	\mathcal{L}_Z
=\sum_{k=0,1}\mathcal{P}_k\mathcal{L}\mathcal{P}_k
=
-\frac{1}{2}\Gamma
\,\Bigl(
\ket{0}\bra{0}
{}\bullet{}
\ket{1}\bra{1}
+
\ket{1}\bra{1}
{}\bullet{}
\ket{0}\bra{0}
\Bigr)
\end{equation}
in the Zeno limit $n\to+\infty$, with
\begin{equation}
\mathcal{E}_\varphi
=\mathcal{U}_\infty
\mathcal{P}_\varphi
=
X{}\bullet{}X
+\frac{1}{2}
\left(
P+\frac{1-q}{1+q}Z
\right)\bra{2}{}\bullet{}\ket{2}.
\end{equation}
See Fig.~\ref{fig:ZenoLimitCPTP}, where the convergence to the QZD is numerically demonstrated for several sets of parameters.

\subsection{QZD by Cycles of Multiple CPTP Kicks}
\label{sec:ExampleBBkicks}
In the previous subsections, we have provided two examples to illustrate Corollary~\ref{thm:kicktozeno}, i.e.\ the QZD by repeating the same quantum operation.
Here, we provide an example that allows us to display Theorem~\ref{thm:CPTPBB}, i.e.\ the QZD by cycles of two different kicks $\mathcal{E}_1$ and $\mathcal{E}_2$.

We consider a three-level system evolving with a GKLS generator $\mathcal{L}$, and being kicked  alternately by $\mathcal{E}_1$ and $\mathcal{E}_2$ defined by
\begin{equation}
\mathcal{E}_j
=
K_0^{(j)}{}\bullet{}K_0^{(j)\dag}
+
K_1^{(j)}{}\bullet{}K_1^{(j)\dag}
\quad(j=1,2),
\end{equation}
with
\begin{gather}
K_0^{(1)}
=
Z
+
\sqrt{q}\,\ket{2}\bra{2}
=\begin{pmatrix}
1&0&\\
0&-1&\\
&&\sqrt{q}
\end{pmatrix},
\\
K_0^{(2)}
=
Y
+
\sqrt{q}\,\ket{2}\bra{2}
=\begin{pmatrix}
0&-\rmi&\\
\rmi&0&\\
&&\sqrt{q}
\end{pmatrix},
\\
K_1^{(1)}
=
K_1^{(2)}
=\sqrt{1-q}\,\ket{0}\bra{2}
=
\begin{pmatrix}
0&0&\sqrt{1-q}\\
0&0&0\\
0&0&0
\end{pmatrix}.
\end{gather}
We again use $
P=\ket{0}\bra{0}+\ket{1}\bra{1}
$, $
X=\ket{0}\bra{1}+\ket{1}\bra{0}
$, $
Y=-\rmi(\ket{0}\bra{1}-\ket{1}\bra{0})
$, and $
Z=\ket{0}\bra{0}-\ket{1}\bra{1}
$.
The kicks $\mathcal{E}_1$ and $\mathcal{E}_2$ rotate the system within the subspace spanned by $\{\ket{0},\ket{1}\}$ around different axes, and at the same time induce decay from $\ket{2}$ to $\ket{0}$ with a rate $1-q$ ($0\le q<1$).
We here restrict ourselves to the case $q<1$.
The kicked evolution is described by $(\mathcal{E}_2\rme^{\frac{t}{2n}\mathcal{L}}\mathcal{E}_1\rme^{\frac{t}{2n}\mathcal{L}})^n$, and we are interested in its Zeno limit $n\to+\infty$.

For the Zeno limit stated in Theorem~\ref{thm:CPTPBB}, the peripheral spectrum of $\mathcal{E}=\mathcal{E}_2\mathcal{E}_1$ matters.
The $n$th power of $\mathcal{E}$ reads
\begin{align}
\mathcal{E}^n
=\begin{cases}
\medskip
\displaystyle	
K_0^n{}\bullet{}K_0^n
+
\frac{1-q^{2n}}{1+q^2}\,\Bigl(
(1-q+q^2)\ket{0}\bra{0}
+q\ket{1}\bra{1}
\Bigr)\,\bra{2}{}\bullet{}\ket{2}
&(n\ \text{even}),
\\
\displaystyle	
K_0^n{}\bullet{}K_0^n
+
\frac{1}{1+q^2}\,\Bigl(
[q-(1-q+q^2)q^{2n}]\ket{0}\bra{0}
\\
\qquad\qquad\qquad\qquad\qquad
{}+(1-q+q^2-q^{2n+1})\ket{1}\bra{1}
\Bigr)\,\bra{2}{}\bullet{}\ket{2}
&(n\ \text{odd}),
\end{cases}
\end{align}
where
\begin{equation}
K_0=K_0^{(2)}K_0^{(1)}=\rmi X+q\ket{2}\bra{2}.
\end{equation}
As $n$ increases, the system relaxes to the subspace $\{\ket{0},\ket{1}\}$, where the system keeps on oscillating between $\ket{0}$ and $\ket{1}$, and it asymptotically behaves as
\begin{equation}
\mathcal{E}^n
\sim
\mathcal{U}_\infty^n\mathcal{P}_\varphi,
\end{equation}
with the asymptotic unitary $\mathcal{U}_\infty=(X+\ket{2}\bra{2}){}\bullet{}(X+\ket{2}\bra{2})$ and the projection 
\begin{equation}
\mathcal{P}_\varphi
=
P{}\bullet{}P
+\frac{1}{2}
\left(
P+\frac{(1-q)^2}{1+q^2}Z
\right)\bra{2}{}\bullet{}\ket{2}
\end{equation}
onto the peripheral spectrum of $\mathcal{E}$.
The peripheral spectrum of $\mathcal{E}$ consists of two peripheral eigenvalues, $\lambda_0=1$ and $\lambda_1=-1$, with the corresponding spectral projections given by
\begin{equation}
\begin{cases}
\medskip
\displaystyle
\mathcal{P}_0
=
\frac{1}{2}
\,\Bigl(
P\tr({}\bullet{})
+
X
\tr(X{}\bullet{})
\Bigr),\\
\displaystyle
\mathcal{P}_1
=
\frac{1}{2}
\left(
Y
\tr(
Y
{}\bullet{}
)
+
Z
\tr
(
Z
{}\bullet{}
)
+
\frac{(1-q)^2}{1+q^2}
Z
\bra{2}{}\bullet{}\ket{2}
\right).
\end{cases}
\end{equation}
In this way, in this example, we have two different kicks and nontrivial peripheral eigenvalues.
Then, according to Theorem~\ref{thm:CPTPBB}, the generator $\mathcal{L}$ is projected to $\mathcal{L}_Z$ as Eq.\ (\ref{eq:overLdef}) in the Zeno limit $n\to+\infty$.
For the GKLS generator $\mathcal{L}$ in Eqs.\ (\ref{eqn:Ex2L})--(\ref{eqn:Ex2G}) considered in the previous subsection, it reads
\begin{equation}
	\mathcal{L}_Z
=\frac{1}{2}\sum_{k=0,1}\mathcal{P}_k(\mathcal{L}+\mathcal{E}_\varphi^{-1}\mathcal{E}_2\mathcal{L}\mathcal{E}_1)\mathcal{P}_k
=-\frac{1}{2}\Gamma
\,\Bigl(
\ket{0}\bra{0}
{}\bullet{}
\ket{1}\bra{1}
+
\ket{1}\bra{1}
{}\bullet{}
\ket{0}\bra{0}
\Bigr),
\end{equation}
where
\begin{equation}
\mathcal{E}_\varphi^{-1}
=\mathcal{E}_\varphi
=\mathcal{E}\mathcal{P}_\varphi
=\mathcal{U}_\infty
\mathcal{P}_\varphi
=
X{}\bullet{}X
+\frac{1}{2}
\left(
P-\frac{(1-q)^2}{1+q^2}Z
\right)\bra{2}{}\bullet{}\ket{2}.
\end{equation}
See Fig.~\ref{fig:ZenoLimitBB}, where the convergence to the QZD is numerically demonstrated for several sets of parameters.
\begin{figure}
\centering
\begin{tabular}{r@{\ }r@{\ }r@{\ }r}
\includegraphics[scale=0.405]{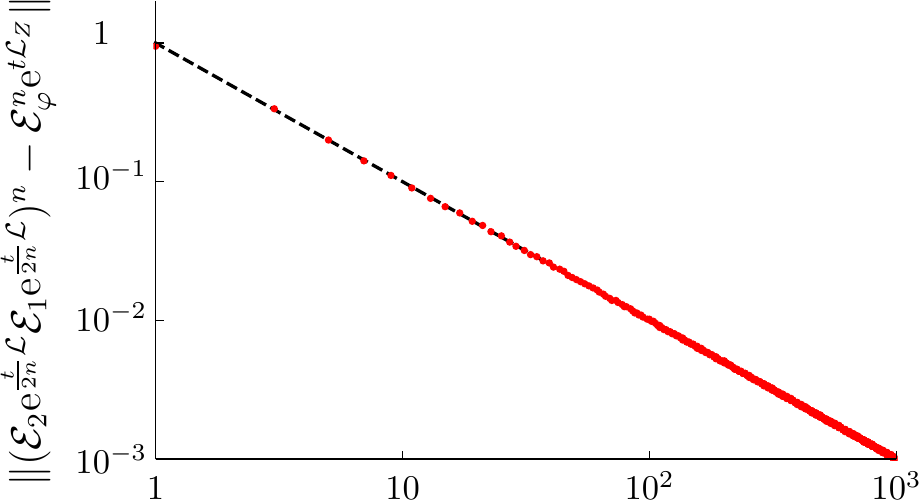}
&
\includegraphics[scale=0.405]{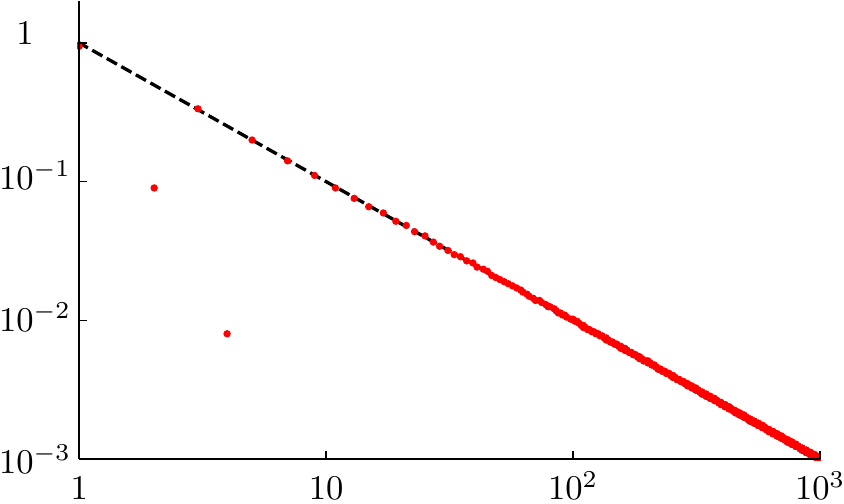}
&
\includegraphics[scale=0.405]{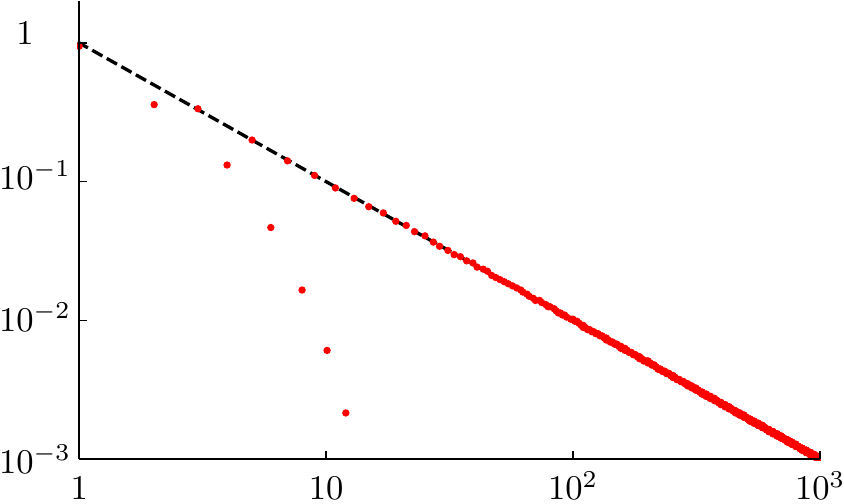}
&
\includegraphics[scale=0.405]{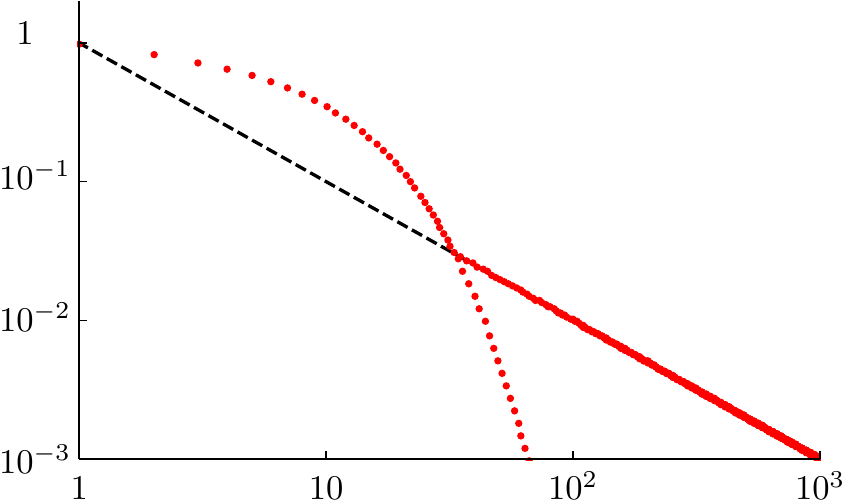}
\\[-24truemm]
\tiny$q=0.0$, $\Gamma t=0.0$
&
\tiny$q=0.3$, $\Gamma t=0.0$
&
\tiny$q=0.6$, $\Gamma t=0.0$
&
\tiny$q=0.9$, $\Gamma t=0.0$
\\[-2.9truemm]
&
\makebox(98.5,59){}
&
&
\\[2.2truemm]
\includegraphics[scale=0.405]{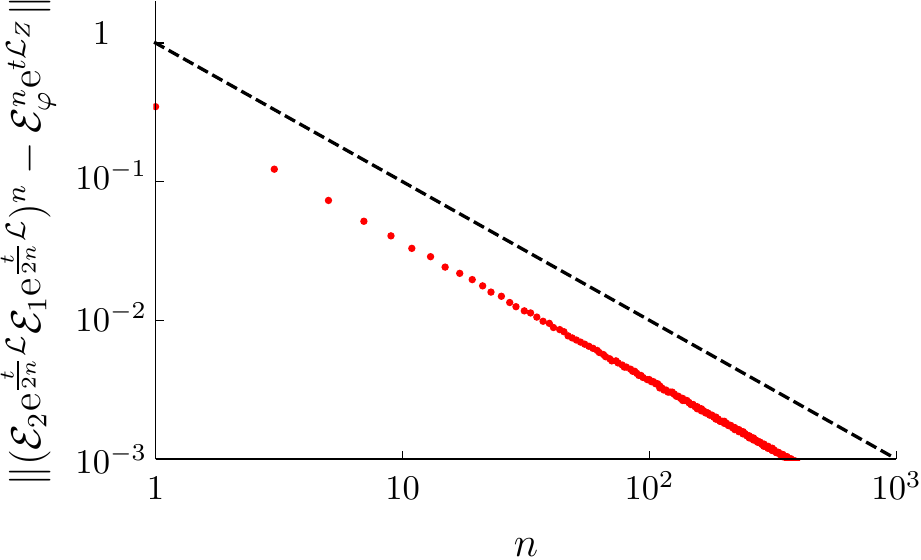}
&
\includegraphics[scale=0.405]{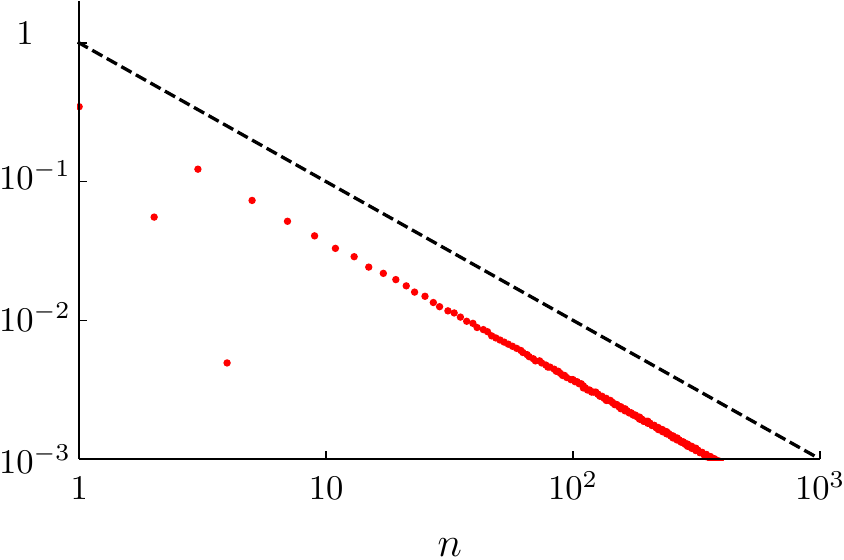}
&
\includegraphics[scale=0.405]{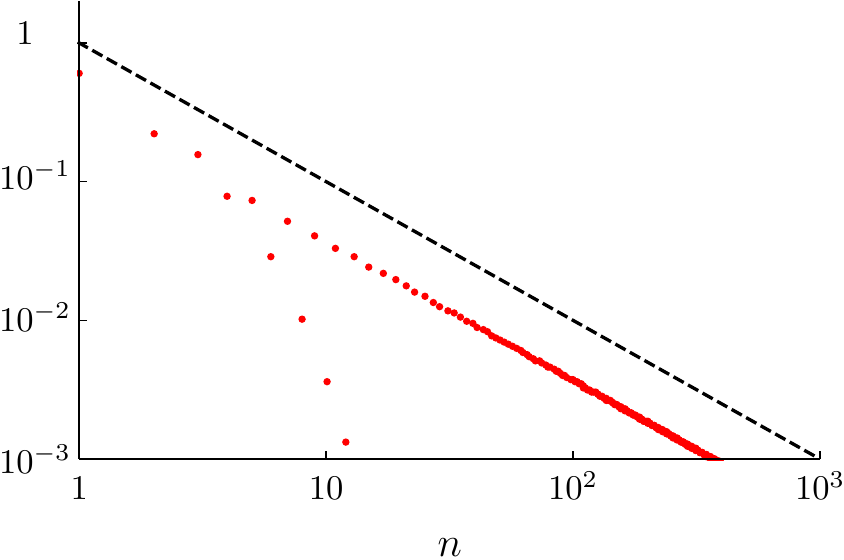}
&
\includegraphics[scale=0.405]{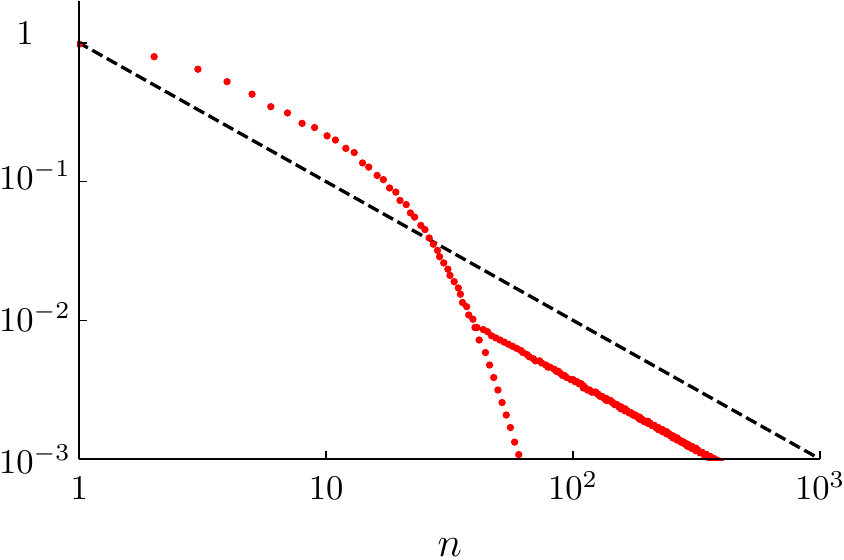}
\\[-26.2truemm]
\tiny$q=0.0$, $\Gamma t=2.0$
&
\tiny$q=0.3$, $\Gamma t=2.0$
&
\tiny$q=0.6$, $\Gamma t=2.0$
&
\tiny$q=0.9$, $\Gamma t=2.0$
\\[-2.9truemm]
&
\makebox(98.5,65.5){}
&
&
\end{tabular}
\caption{The convergence to the QZD by  alternating the kicks $\mathcal{E}_1$ and $\mathcal{E}_2$ in the model analyzed in Sec.~\ref{sec:ExampleBBkicks}. The evolution $\mathcal{L}$ to be projected is the same as the one considered in Fig.~\ref{fig:ZenoLimitCPTP}. The parameters other than $q$ and $\Gamma t$ are set at $\Omega_0t=0.0$, $\Omega_1t=1.0$, and $\Omega_2t=2.0$. The dashed lines are the plots of $1/n$. We have chosen the operator norm defined in Eq.\ (\ref{eqn:OpNorm}) to estimate the distance. There appear to be two sequences in each panel: one is for odd $n$ and the other for even $n$. The former asymptotically decays as $\mathcal{O}(1/n)$, while the latter decays faster.}
\label{fig:ZenoLimitBB}
\end{figure}

\subsection{QZD by Multiple Projective Measurements}
\label{sec:MultiP}
In Corollary~\ref{thm:MeasBB}, we presented the QZD via cycles of 
multiple selective projective measurements.
It is a generalization of the standard QZD via frequent repetitions of a projective measurement, and is a variant of the QZD via cycles of 
multiple CPTP kicks proved in Theorem~\ref{thm:CPTPBB}.
Here we provide simple examples for Corollary~\ref{thm:MeasBB} and Theorem~\ref{thm:CPTPBB}.

Let us consider a three-level system with a Hamiltonian $H$,
and two different projective measurements: one is characterized by a pair of Hermitian projection operators
\begin{equation}
P_1
=\ket{1}\bra{1}+\ket{2}\bra{2}
=\begin{pmatrix}	
0&&\\
&1&\\
&&1
\end{pmatrix},\qquad
Q_1
=
\begin{pmatrix}	
1&&\\
&0&\\
&&0
\end{pmatrix}
,
\end{equation}
and the other by another pair of Hermitian projection operators
\begin{equation}
	P_2=\frac{\ket{0}+\ket{1}}{\sqrt{2}}\frac{\bra{0}+\bra{1}}{\sqrt{2}}+\ket{2}\bra{2}
	=\begin{pmatrix}	
	1/2&1/2&\\
	1/2&1/2&\\
	&&1
	\end{pmatrix},
\quad
	Q_2
	=\begin{pmatrix}	
	1/2&-1/2&\\
	-1/2&1/2&\\
	&&0
	\end{pmatrix}.
\end{equation}

If we concatenate the two selective projective measurements $P_1$ and $P_2$, we get
\begin{equation}
	P_2P_1
	=\frac{1}{2}\,\Bigl(\ket{0}+\ket{1}\Bigr)\,\bra{1}+\ket{2}\bra{2}
	=
\begin{pmatrix}
0&1/2&\\
0&1/2&\\
&&1
\end{pmatrix}.
\end{equation}
This admits three eigenvalues $1$, $1/2$, and $0$, and its peripheral part is given by
\begin{equation}
	P_\varphi
	=\ket{2}\bra{2}
	=\begin{pmatrix}
	0&&\\
	&0&\\
	&&1
	\end{pmatrix},
\end{equation}
which is Hermitian (although $P_2P_1$ is not) and is the simultaneous eigenprojection of $P_1$ and $P_2$ belonging to the unit eigenvalue $1$.
This demonstrates Lemma~\ref{lem:P1P2} used in Corollary~\ref{thm:MeasBB}.
According to Corollary~\ref{thm:MeasBB}, the bang-bang sequence of the selective measurements $P_1$ and $P_2$ during the unitary evolution $\rme^{-\rmi tH}$ projects the dynamics to 
\begin{equation}
  (P_2\rme^{-\rmi\frac{t}{2n}H}P_1\rme^{-\rmi\frac{t}{2n}H})^n
  \to P_\varphi\rme^{-\rmi tP_\varphi HP_\varphi}
  \quad\text{as}\quad
  n\to+\infty.
  \label{eqn:ExampleMeasBB}
\end{equation}
The system is confined in the one-dimensional space $\ket{2}$, and the Zeno Hamiltonian $H_Z=P_\varphi HP_\varphi\propto\ket{2}\bra{2}$ yields just a phase as time goes on.
See Fig.~\ref{fig:ZenoLimitMultiP}(a), where this convergence is numerically demonstrated for the Hamiltonian
\begin{equation}
H=g\,\Bigl(\ket{0}\bra{1}+\ket{1}\bra{0}+\ket{1}\bra{2}+\ket{2}\bra{1}\Bigr)
	=
	\begin{pmatrix}	
	0&g&0\\
	g&0&g\\
	0&g&0
	\end{pmatrix}.
	\label{eqn:Ex4H}
\end{equation}

If we do not collect any outcomes of the measurements, the bang-bang sequence in Eq.\ (\ref{eqn:ExampleMeasBB}) is modified to $(\tilde{\mathcal{P}}_2\rme^{-\rmi\frac{t}{2n}\mathcal{H}}\tilde{\mathcal{P}}_1\rme^{-\rmi\frac{t}{2n}\mathcal{H}})^n$ with CPTP projections
\begin{equation}
\tilde{\mathcal{P}}_j=P_j{}\bullet{}P_j+Q_j{}\bullet{}Q_j\quad(j=1,2)
\end{equation}
representing the nonselective measurements, and $\mathcal{H}=[H,{}\bullet{}]$.
In this case, we get
\begin{equation}
\tilde{\mathcal{P}}_2\tilde{\mathcal{P}}_1
=\frac{1}{2}P\tr(P{}\bullet{})
	+\ket{2}\bra{2}{}\bullet{}\ket{2}\bra{2}
+\frac{1}{2}\,\Bigl(\ket{0}+\ket{1}\Bigr)\,\bra{1}{}\bullet{}\ket{2}\bra{2}
+\frac{1}{2}\ket{2}\bra{2}{}\bullet{}\ket{1}\,\Bigl(\bra{0}+\bra{1}\Bigr),
\end{equation}
where $
P=\ket{0}\bra{0}+\ket{1}\bra{1}
$.
The spectrum of $\tilde{\mathcal{P}}_2\tilde{\mathcal{P}}_1$ consists of the eigenvalues $1$, $1/2$, and $0$, with the projection onto the peripheral spectrum given by
\begin{equation}
	\tilde{\mathcal{P}}_\varphi
	=\frac{1}{2}P\tr(P{}\bullet{})
	+\ket{2}\bra{2}{}\bullet{}\ket{2}\bra{2}.
\end{equation}
According to Corollary~\ref{thm:MeasBB} or Theorem~\ref{thm:CPTPBB}, the bang-bang sequence of the nonselective measurements $\tilde{\mathcal{P}}_1$ and $\tilde
{\mathcal{P}}_2$ during the unitary evolution $\rme^{-\rmi t\mathcal{H}}$ projects the dynamics to 
\begin{equation}
  (\tilde{\mathcal{P}}_2\rme^{-\rmi\frac{t}{2n}\mathcal{H}}\tilde{\mathcal{P}}_1\rme^{-\rmi\frac{t}{2n}\mathcal{H}})^n
  \to\tilde{\mathcal{P}}_\varphi\rme^{-\rmi t\tilde{\mathcal{H}}_Z}
  \quad\text{as}\quad
  n\to+\infty,
  \label{eqn:ExampleCPTPBB}
\end{equation}
where
\begin{equation}
\tilde{\mathcal{H}}_Z=\tilde{\mathcal{P}}_\varphi\mathcal{H}\tilde{\mathcal{P}}_\varphi=0
\end{equation}
for any Hamiltonian $H$.
The Hilbert space is split into three subspaces $\{\ket{0}\}$, $\{\ket{1}\}$, and $\{\ket{2}\}$.
It is essentially the same as the QZD in Eq.\ (\ref{eqn:ExampleMeasBB}) by the selective measurements, concerning the subspace $\{\ket{2}\}$.
\begin{figure}
\centering
\begin{tabular}{r@{\quad}r@{\quad}r}
\makebox(130,10)[lt]{\footnotesize(a)}
&
\makebox(130,10)[lt]{\footnotesize(b)}
&
\makebox(130,10)[lt]{\footnotesize(c)}
\\
\makebox(130,71)[t]{\includegraphics[scale=0.49]{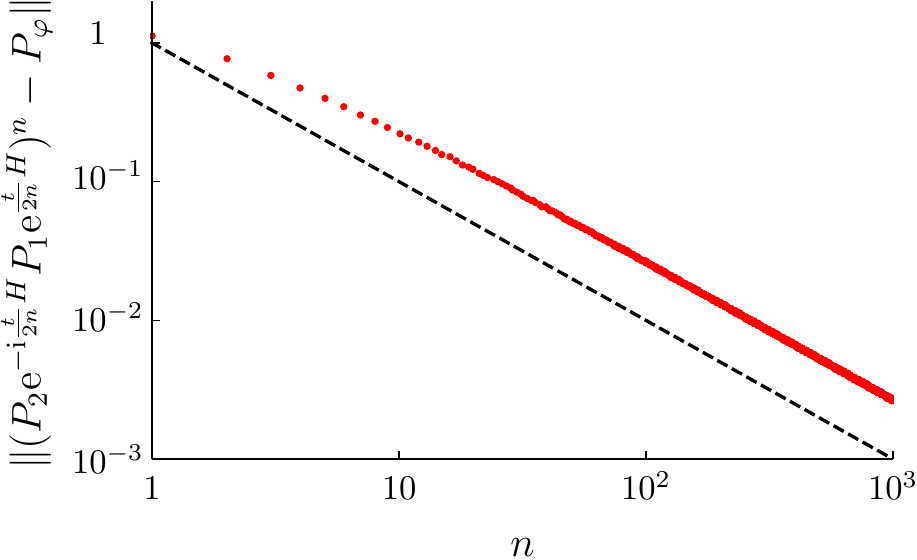}}
&
\makebox(130,71)[t]{\includegraphics[scale=0.49]{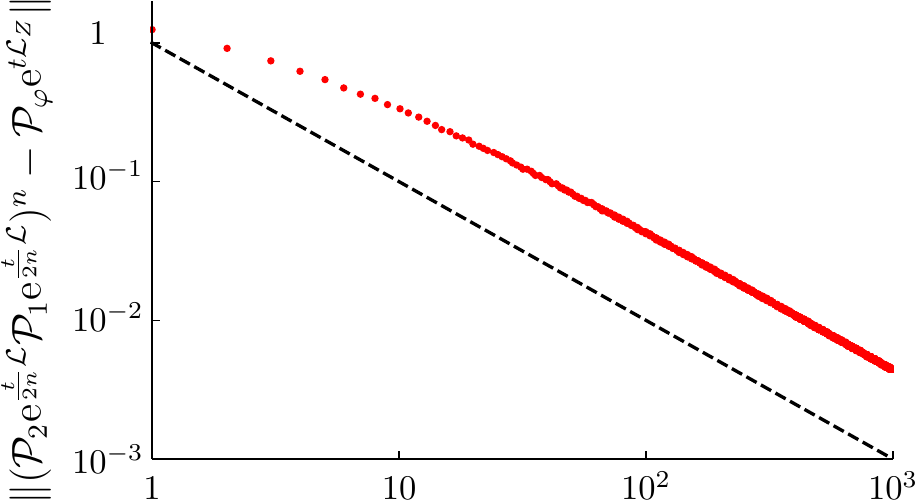}}
&
\makebox(130,71)[t]{\includegraphics[scale=0.49]{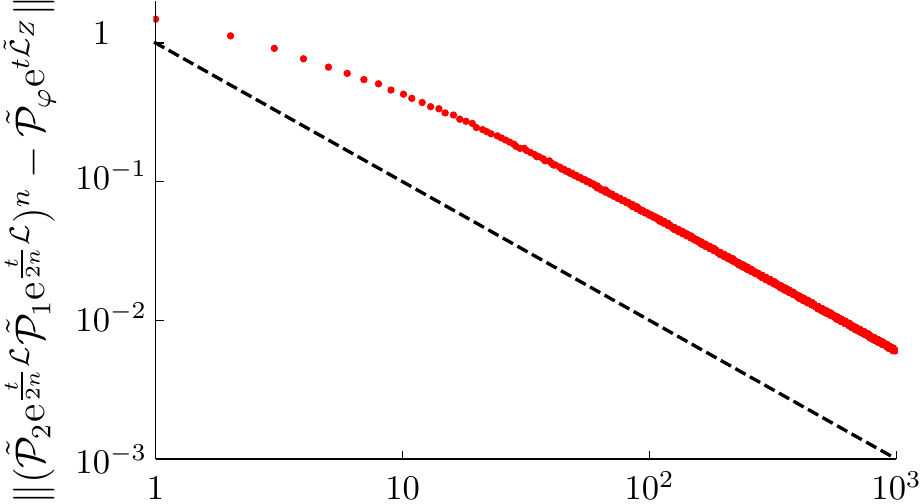}}
\\[-28.3truemm]
&
\tiny$\Gamma t=0.0$
&
\tiny$\Gamma t=0.0$
\\[-2.9truemm]
&
\makebox(130,71){}
&
\\[2.2truemm]
&
\makebox(130,79)[t]{\includegraphics[scale=0.49]{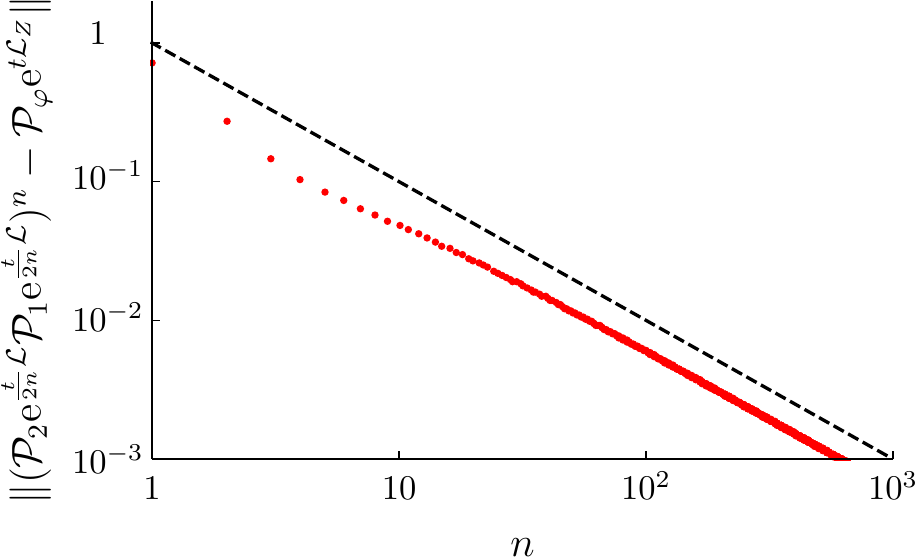}}
&
\makebox(130,79)[t]{\includegraphics[scale=0.49]{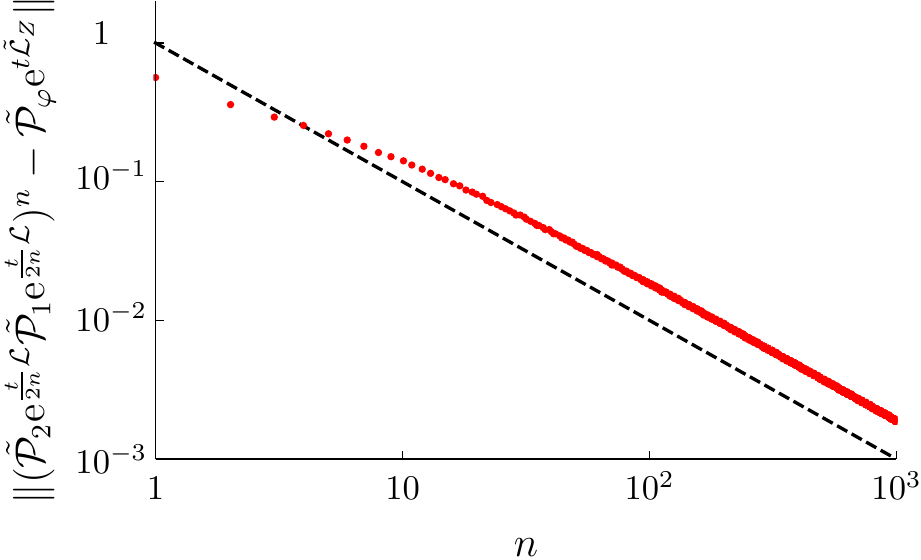}}
\\[-31.1truemm]
&
\tiny$\Gamma t=2.0$
&
\tiny$\Gamma t=2.0$
\\[-2.9truemm]
&
\makebox(130,79){}
&
\end{tabular}
\caption{The convergence to the QZD in the model analyzed in Sec.~\ref{sec:MultiP}: (a) via alternating the selective measurements $P_1$ and $P_2$ during the unitary evolution by the Hamiltonian $H$ in Eq.\ (\ref{eqn:Ex4H}); (b) via alternating the selective measurements $\mathcal{P}_1$ and $\mathcal{P}_2$ and (c) via alternating the nonselective measurements $\tilde{\mathcal{P}}_1$ and $\tilde{\mathcal{P}}_2$, during the evolution by the GKLS generator $\mathcal{L}$ in Eq.\ (\ref{eqn:Ex4L})--(\ref{eqn:Ex4G}) with the same Hamiltonian $H$ as in (a). The parameter other than $\Gamma t$ is set at $gt=1.0$. The dashed lines indicate $1/n$. We have chosen the operator norm to estimate the distance.}
\label{fig:ZenoLimitMultiP}
\end{figure}

For the nonunitary evolution
\begin{equation}
\mathcal{L}
=-\rmi[H,{}\bullet{}]-\frac{1}{2}(L^\dag L{}\bullet{}+{}\bullet{}L^\dag L-2L{}\bullet{}L^\dag)
\label{eqn:Ex4L}
\end{equation}
with
\begin{equation}
L
=
\sqrt{\Gamma}\,
\ket{1}\bra{2}
=
\sqrt{\Gamma}
\begin{pmatrix}
	0&0&0\\
	0&0&1\\
	0&0&0
\end{pmatrix},
\label{eqn:Ex4G}
\end{equation}
alternating selective measurements
\begin{equation}
\mathcal{P}_j=P_j{}\bullet{}P_j\quad(j=1,2)
\end{equation}
projects the dynamics as
\begin{equation}
  (\mathcal{P}_2\rme^{\frac{t}{2n}\mathcal{L}}\mathcal{P}_1\rme^{\frac{t}{2n}\mathcal{L}})^n
  \to\mathcal{P}_\varphi\rme^{t\mathcal{L}_Z}
  \quad\text{as}\quad
  n\to+\infty,
\label{eqn:ExampleSelectiveBB}
\end{equation}
where
\begin{equation}
\mathcal{L}_Z
=\mathcal{P}_\varphi\mathcal{L}\mathcal{P}_\varphi
=-\Gamma\ket{2}\bra{2}{}\bullet{}\ket{2}\bra{2},\qquad
\mathcal{P}_\varphi=P_\varphi{}\bullet{}P_\varphi,
\end{equation}
while alternating the nonselective measurements $\tilde{\mathcal{P}}_1$ and $\tilde{\mathcal{P}}_2$ projects the dynamics as
\begin{equation}
  (\tilde{\mathcal{P}}_2\rme^{\frac{t}{2n}\mathcal{L}}\tilde{\mathcal{P}}_1\rme^{\frac{t}{2n}\mathcal{L}})^n
  \to\tilde{\mathcal{P}}_\varphi\rme^{t\tilde{\mathcal{L}}_Z}
  \quad\text{as}\quad
  n\to+\infty,
\label{eqn:ExampleNonselectiveBB}
\end{equation}
where
\begin{equation}
\tilde{\mathcal{L}}_Z
=\tilde{\mathcal{P}}_\varphi\mathcal{L}\tilde{\mathcal{P}}_\varphi
=
-
\Gamma
\left(
\ket{2}\bra{2}
-\frac{1}{2}P
\right)
\bra{2}{}\bullet{}\ket{2}.
\end{equation}

See Figs.~\ref{fig:ZenoLimitMultiP}(b) and (c), where the convergences to the QZD in Eqs.\ (\ref{eqn:ExampleSelectiveBB}) and (\ref{eqn:ExampleNonselectiveBB}) via alternating the selective measurements $\{\mathcal{P}_1,\mathcal{P}_2\}$ and via alternating the nonselective measurements $\{\tilde{\mathcal{P}}_1,\tilde{\mathcal{P}}_2\}$, respectively, are numerically verified.

\subsection{Efficiency in Time by Pulsed Weak Measurements}
\label{sec:Efficiency}
In real experiments, it takes time to perform strong (projective) measurements (it takes time to project a system).
While the convergence to QZD would be faster with a stronger measurement (requiring less number of measurements; see Fig.~\ref{fig:ExampleWeakMeas}), it would consume more experimental time.
Since Theorem~\ref{thm:kicktozeno} tells us that QZD can be induced even via weak measurements, it could be better to proceed to the next measurement without waiting for a system being projected by a measurement, to save experimental time.
It is actually the case.
Let us see the efficiency in inducing QZD in terms of experimental time.

Consider, for instance, the model analyzed in Sec.~\ref{sec:ExampleWeakMeas}.
Suppose that we spend time $\tau$ for each measurement $\mathcal{E}$.
The strength of the measurement $p$ is a monotonically increasing function of the measurement time $\tau$ in general.
We perform $n$ measurements $\mathcal{E}$ at time intervals $t/n$, i.e., the system evolves as $(\mathcal{E}\rme^{-\rmi\frac{t}{n}\mathcal{H}})^n$.
Here, we assume that the unitary evolution of the system is turned off during the measurement process, or that the coupling to the measurement apparatus is strong enough so that the unitary evolution during the measurement process is negligible.
The total time spent for the $n$ measurements is given by $n\tau$ and the total experimental time is $n\tau+t$.
See Fig.\ \ref{fig:FiniteMeasTime}.
\begin{figure}
\centering
\includegraphics[scale=0.4]{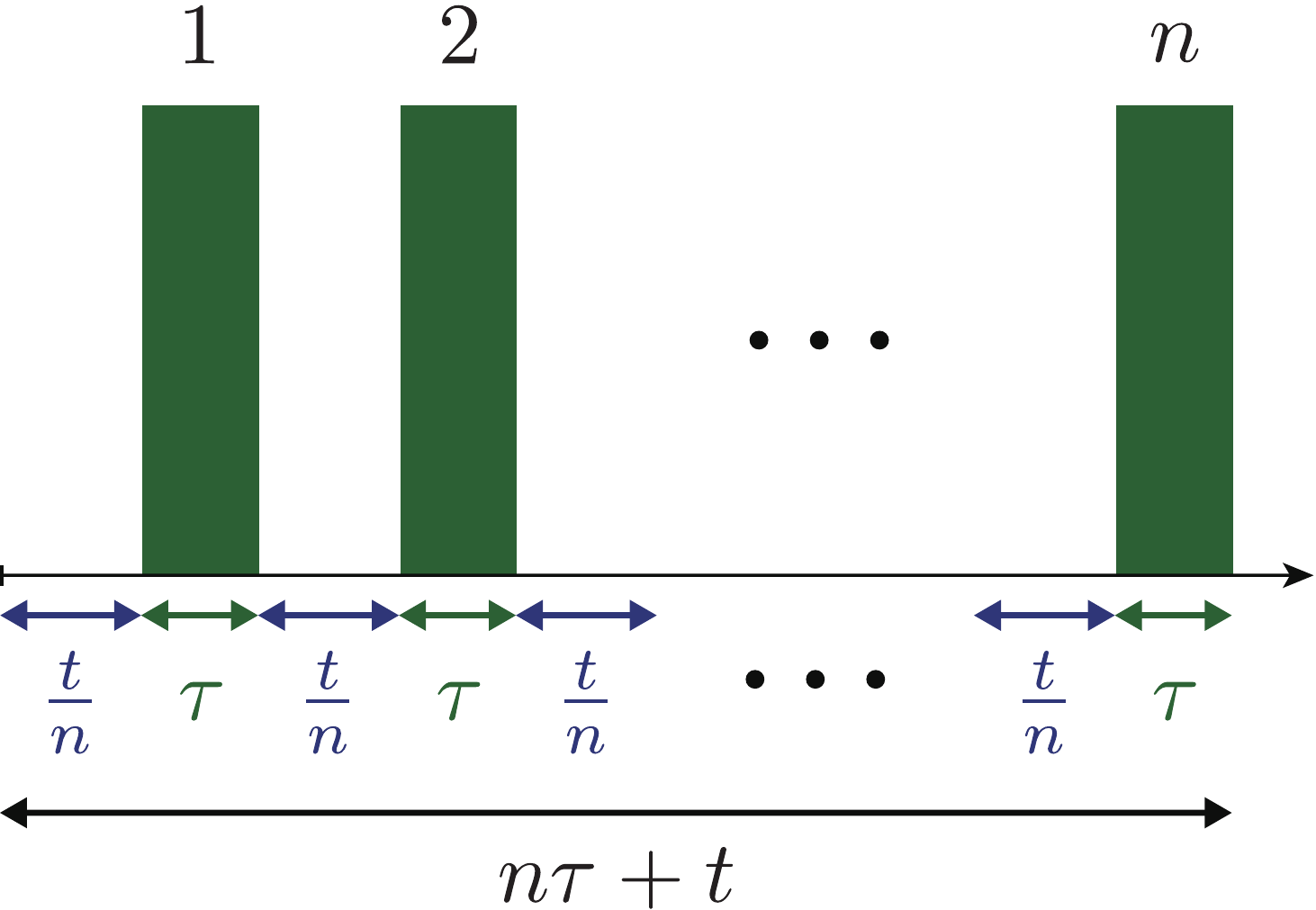}
\caption{A sequence of $n$ measurements with finite measurement times $\tau$ performed at regular time intervals $t/n +\tau$.}
\label{fig:FiniteMeasTime}
\end{figure}

\begin{figure}
\centering
\begin{tabular}{lll}
\footnotesize(a) $p(\tau)=1-\rme^{-\tau/T}$
&
\footnotesize(b) $p(\tau)=\sin(\pi\tau/2T)$
&
\footnotesize(c) $p(\tau)=\sin^2(\pi\tau/2T)$
\\[2truemm]
\includegraphics[scale=0.502]{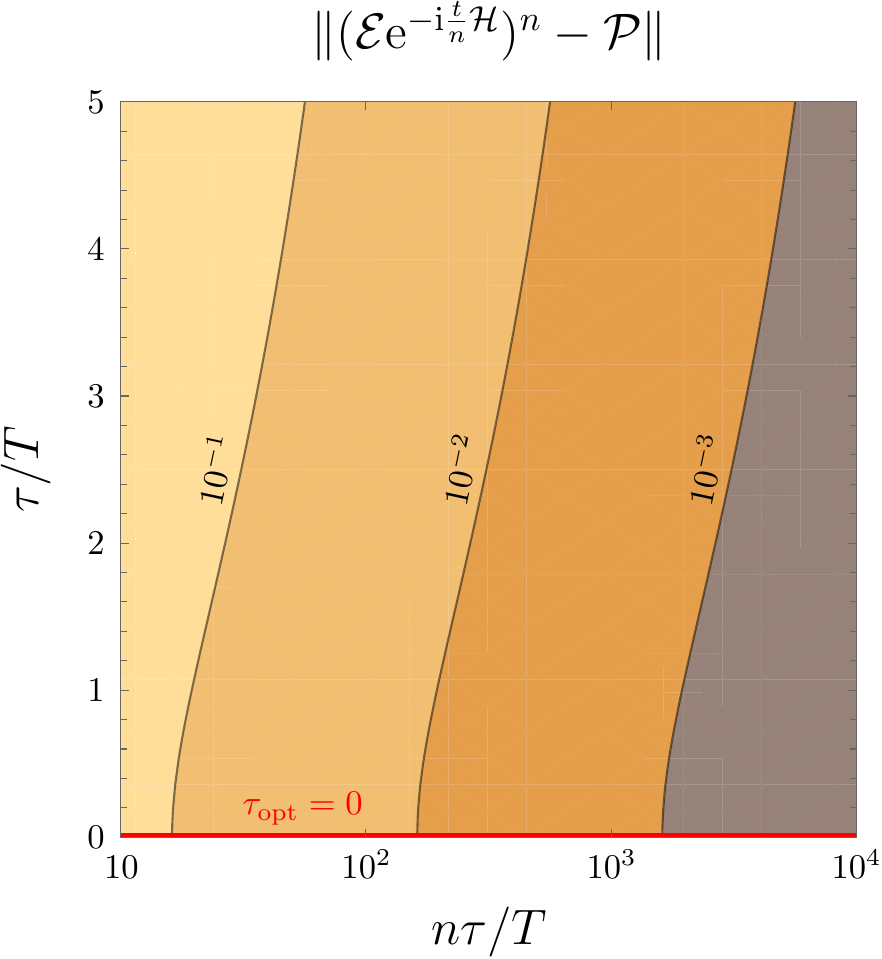}
&
\includegraphics[scale=0.502]{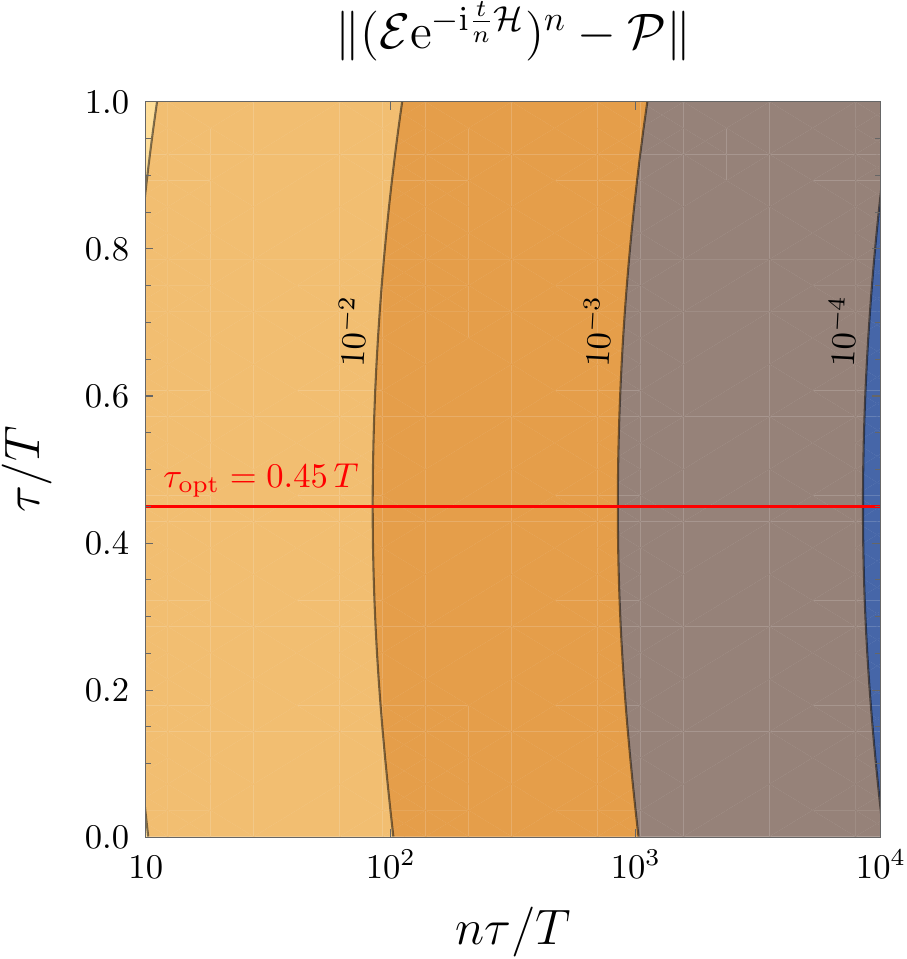}
&
\includegraphics[scale=0.502]{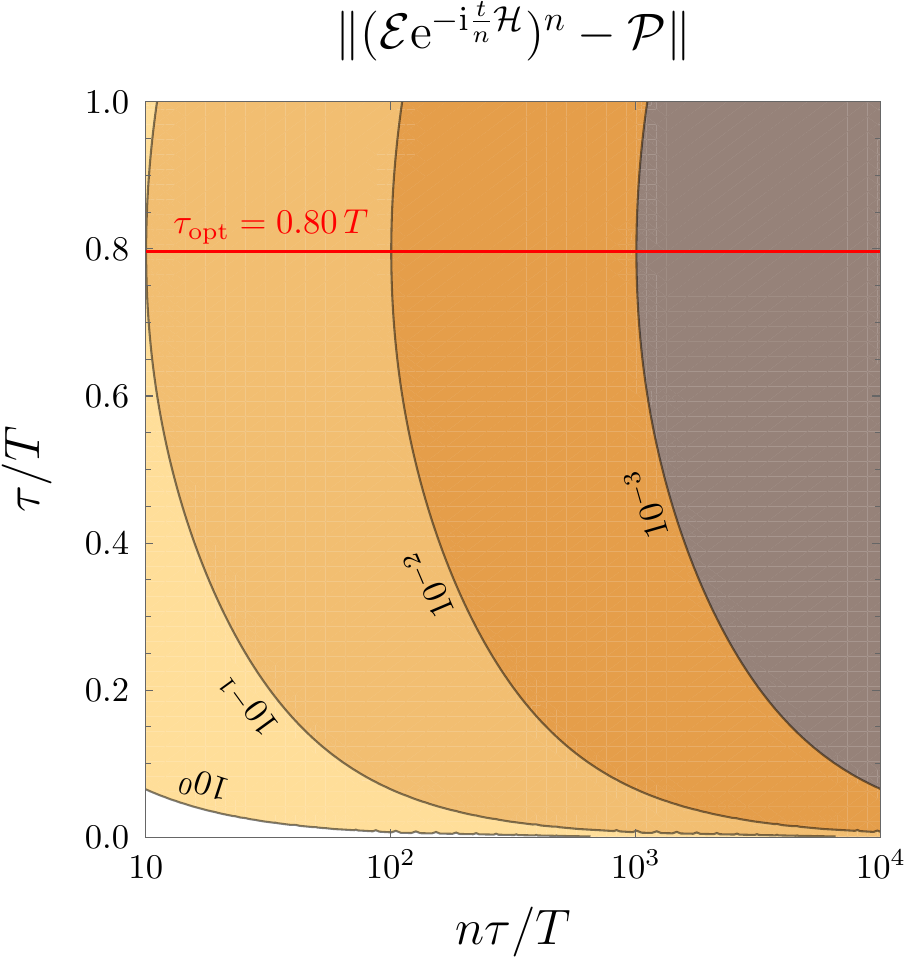}
\end{tabular}
\caption{Contour plots of $\|(\mathcal{E}\rme^{-\rmi\frac{t}{n}\mathcal{H}})^n-\mathcal{P}\|$ versus the total measurement time $n\tau$ and the time $\tau$ spent for each measurement, for the model given in Sec.~\ref{sec:ExampleWeakMeas} [the first term in Eq.\ (\ref{eqn:ExampleWeakMeasConv}) is actually plotted]. We choose three different functions $p(\tau)$ for the strength $p$ of the measurement $\mathcal{E}$: (a) $p(\tau)=1-\rme^{-\tau/T}$, (b) $p(\tau)=\sin(\pi\tau/2T)$, and (c) $p(\tau)=\sin^2(\pi\tau/2T)$, where $T$ is a characteristic time of each measurement process. The parameter is set at $\Omega t=1$. The optimal measurement time $\tau_\text{opt}$ which minimises the total measurement time given a certain degree of convergence is indicated by a red line for each measurement model.}
\label{fig:CPTPZenoEfficiency}
\end{figure}
We consider three models for the strength of the measurement $p(\tau)$ as a function of measurement time $\tau$: (a) $p(\tau)=1-\rme^{-\tau/T}$, (b) $p(\tau)=\sin(\pi\tau/2T)$, and (c) $p(\tau)=\sin^2(\pi\tau/2T)$, where $T$ is a characteristic time of each measurement process.
In the first model (a) projective measurement $p\to1$ is realized in the limit $\tau\to+\infty$, while in the other models (b) and (c) the measurement becomes perfectly projective $p=1$ at $\tau=T$.
In Fig.~\ref{fig:CPTPZenoEfficiency}, the distance $\|(\mathcal{E}e^{-\rmi\frac{t}{n}\mathcal{H}})^n-\mathcal{P}\|$ to the QZD is shown versus the total measurement time $n\tau$ and the time $\tau$ spent for each measurement, for the measurement models (a)--(c).
We see that it is better to proceed with nonprojective measurements to save time in these examples.

\section{Conclusions}
Our unification and generalization of QZDs has revealed an adiabatic evolution as the key ingredient. It is remarkable that such a variety of limits can be reduced to adiabaticity. We left for future studies a discussion of the tightness of our error bounds, and how they scale with the dimensionality of the Hilbert space. We also did not consider infinite-dimensional systems. Since the adiabatic theorem has itself many generalizations to infinite-dimensional systems and unbounded operators, this connection might pave the way to QZDs with unbounded operators, where there remain many open problems \cite{ref:PaoloSaverio-QZEreview-JPA,ref:unity1-view,ref:artzeno,ExnerIchinose,DDUnbounded,DDUnbounded2}. However, our proof via a generalized Baker-Campbell-Hausdorff formula does not easily generalize to infinite dimensions, and finding a more direct bridge between kicked dynamics and the adiabatic theorem would be desirable.

\section*{Acknowledgments}
DB acknowledges support by Waseda University and partial support by the EPSRC Grant No.~EP/M01634X/1\@.
This work was supported by the Top Global University Project from the Ministry of Education, Culture, Sports, Science and Technology (MEXT), Japan.
KY was supported by the Grants-in-Aid for Scientific Research (C) (No.~18K03470) and for Fostering Joint International Research (B) (No.~18KK0073) both from the Japan Society for the Promotion of Science (JSPS), and by the Waseda University Grant for Special Research Projects (No.~2018K-262).
PF and SP are  supported by INFN through the project `QUANTUM' and by MIUR via PRIN 2017 (Progetto di Ricerca di Interesse Nazionale), project QUSHIP (2017SRNBRK)\@. PF is supported by the Italian National Group of Mathematical Physics (GNFM-INdAM).
PF and SP are supported by Regione Puglia and by QuantERA ERA-NET Cofund in Quantum Technologies (GA No.\ 731473), project PACE-IN\@.

\appendix
\section*{Appendix}
\section{Some Basic Lemmas}
\label{app:BasicLemmas}
Here we prove five basic lemmas. The first one (Lemma~\ref{lem:NormCPT}) concerns a bound on the maps representing quantum operations. It gives a universal bound valid for any quantum operation.
The second one (Lemma~\ref{lem:BoundK}) is related to the relaxation to the peripheral eigenspace of quantum operation by its repeated applications.
Such bounds are known (e.g.\ Refs.\ \cite[Lemma~8.5 and Theorem~8.24]{ref:Mixing-Wolf}, \cite{ref:Szehr2015}, and \cite{ref:TerhalDiVincenzo}), but we need a generalized version of them for our purpose. 
The point is that, to apply Lemma~\ref{prop:Peripheral}, we need to bound families of maps like $(\mathcal{P}_\varphi\mathcal{E}_n\mathcal{P}_\varphi)^k$ and $(\mathcal{E}_n-\mathcal{P}_\varphi\mathcal{E}_n\mathcal{P}_\varphi)^k$ as in Eqs.\ (\ref{eqn:Bound1}) and (\ref{eqn:Bound2}) universally for any $n$ large enough.
To make the bounds in Lemma~\ref{lem:BoundK} more explicit for the particular sequence of quantum maps relevant to Theorem~\ref{thm:CPTPBB}, we prove Lemmas~\ref{lem:BBperturb} and~\ref{lem:MuN}\@.
Finally, in Lemma~\ref{lem:P1P2}, we prove some facts on the product of Hermitian projection operators, which are used to get Corollary~\ref{thm:MeasBB} from Theorem~\ref{thm:CPTPBB}.

\begin{lemma}[Norm of quantum operation \cite{ref:PrezGarcaWolfPetzRuskai-JMP2006}]\label{lem:NormCPT}
Consider the $d^2$-dimensional Hilbert space $\mathcal{T}_2$ of operators on a $d$-dimensional Hilbert space $\mathcal{H}$, with the Hilbert-Schmidt inner product $\langle X|Y \rangle_{2} =\tr(X^\dag Y)$ for $X,Y\in\mathcal{T}_2$. 
The operator norm of any quantum operation $\mathcal{E}:\mathcal{T}_2\to\mathcal{T}_2$ is bounded by
\begin{equation}
	\|\mathcal{E}\|\le\sqrt{d}.
\end{equation}	
\end{lemma}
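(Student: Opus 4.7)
The plan is to reduce the claim to the well-known trace-norm contractivity of quantum operations and chain it with standard inequalities relating the Schatten $1$-, $2$- and $\infty$-norms on the $d^2$-dimensional space $\mathcal{T}_2$. Concretely, for any $X\in\mathcal{T}_2$ with singular values $s_1,\ldots,s_d$, one has $\|X\|_2\le \|X\|_1$ (from $\sum_i s_i^2\le(\sum_i s_i)^2$) and $\|X\|_1\le \sqrt{d}\,\|X\|_2$ (from Cauchy--Schwarz applied to $\sum_i s_i$). Thus, once the trace-norm contraction $\|\mathcal{E}(X)\|_1\le \|X\|_1$ is established, the chain
\begin{equation}
\|\mathcal{E}(X)\|_2\le \|\mathcal{E}(X)\|_1\le \|X\|_1\le \sqrt{d}\,\|X\|_2
\end{equation}
immediately yields $\|\mathcal{E}\|\le \sqrt{d}$ via the definition~\eqref{eqn:OpNorm}.

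For the trace-norm contraction I would invoke duality, $\|A\|_1=\sup\{|\tr(AY)|:\|Y\|_\infty\le 1\}$. The Kraus representation~\eqref{eq:Kraussrep} together with the cyclicity of the trace gives the identity $\tr(\mathcal{E}(X)Y)=\tr\!\bigl(X\,\mathcal{E}^\dagger(Y)\bigr)$, and then H\"older's inequality $|\tr(X\,\mathcal{E}^\dagger(Y))|\le \|X\|_1\,\|\mathcal{E}^\dagger(Y)\|_\infty$ reduces the problem to showing that $\mathcal{E}^\dagger$ is contractive in the operator norm, i.e., $\|\mathcal{E}^\dagger(Y)\|_\infty\le \|Y\|_\infty$ for all $Y$.

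The operator-norm contraction of $\mathcal{E}^\dagger$ is the only nontrivial ingredient and is where I expect the real work. Since $\mathcal{E}^\dagger$ is CP and subunital ($\mathcal{E}^\dagger(I)\le I$), I would use the classical $2\times 2$ block trick: for $\|Y\|_\infty\le 1$ one has $\bigl(\begin{smallmatrix} I&Y\\Y^\dagger&I\end{smallmatrix}\bigr)\ge 0$; applying $\mathcal{E}^\dagger$ entrywise preserves positivity by $2$-positivity (inherited from complete positivity), producing a block matrix with diagonal blocks $\mathcal{E}^\dagger(I)\le I$ and off-diagonal blocks $\mathcal{E}^\dagger(Y)$ and $\mathcal{E}^\dagger(Y)^\dagger$ (using Hermiticity preservation of $\mathcal{E}^\dagger$). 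Adding the PSD block $\mathrm{diag}(I-\mathcal{E}^\dagger(I),\,I-\mathcal{E}^\dagger(I))$ yields $\bigl(\begin{smallmatrix} I&\mathcal{E}^\dagger(Y)\\\mathcal{E}^\dagger(Y)^\dagger&I\end{smallmatrix}\bigr)\ge 0$, which is equivalent to $\|\mathcal{E}^\dagger(Y)\|_\infty\le 1$.

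The factor $\sqrt{d}$ in the final bound reflects purely the gap between the trace and Hilbert--Schmidt norms on $\mathcal{T}_2$ and is sharp: for the state-preparation channel $\mathcal{E}(X)=|\psi\rangle\langle\psi|\tr X$ (with $|\psi\rangle$ any pure state) evaluated at the maximally mixed $X=I/d$, one gets $\|\mathcal{E}(I/d)\|_2/\|I/d\|_2=\sqrt{d}$.
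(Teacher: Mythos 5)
Your proof is correct, and it is essentially dual to the paper's rather than identical to it. Both arguments ultimately rest on the same nontrivial ingredient, the operator-norm contractivity of the subunital CP adjoint, $\|\mathcal{E}^\dagger(Y)\|_\infty\le\|Y\|_\infty$. You reach the Hilbert--Schmidt bound through the chain $\|{\cdot}\|_2\le\|{\cdot}\|_1\le\sqrt{d}\,\|{\cdot}\|_2$ combined with trace-norm contractivity of $\mathcal{E}$ itself, which you derive from the $\infty$-norm contraction of $\mathcal{E}^\dagger$ by trace-norm duality and H\"older; the paper instead works directly with $\mathcal{E}^\dagger$, sandwiches via $\|{\cdot}\|_\infty\le\|{\cdot}\|_2\le\sqrt{d}\,\|{\cdot}\|_\infty$, and concludes with $\|\mathcal{E}\|=\|\mathcal{E}^\dagger\|$. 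The other genuine difference is how the key contraction is obtained: the paper simply cites the Russo--Dye theorem (Corollary 2.9 of Paulsen) to get $\|\mathcal{E}^\dagger(Y)\|_\infty\le\|\mathcal{E}^\dagger(I)\|_\infty\|Y\|_\infty$ for the positive subunital map $\mathcal{E}^\dagger$, whereas you prove what is needed from scratch via the $2\times2$ block-positivity trick, which uses only $2$-positivity, Hermiticity preservation, and $\mathcal{E}^\dagger(I)\le I$. This makes your argument self-contained (and formally a bit more general, since full complete positivity is not required). Your closing observation that the constant $\sqrt{d}$ is saturated by the state-preparation channel acting on the maximally mixed state is a worthwhile addition that the paper does not include.
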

\begin{proof}
Recall the definition of the operator norm of $\mathcal{A}:\mathcal{T}_2\to\mathcal{T}_2$ in Eq.\ (\ref{eqn:OpNorm}):
\begin{equation}
\|\mathcal{A} \| = \sup_{\|X\|_2 = 1} \| \mathcal{A} (X)\|_2,
\end{equation}
where $\|X\|_2 = [\tr (X^\dag X)]^{1/2}$.
We have $\|\mathcal{A}^\dag\| = \|\mathcal{A}\|$. Moreover, for any operator $X$ on the Hilbert space $\mathcal{H}$, we have that
\begin{equation}
\|X\|_\infty\leq \|X\|_2 \leq \sqrt{d}\,\|X\|_\infty,
\end{equation}
where $\|X\|_\infty=\sup_{\|v\|=1} \|X v\|$ with $v\in\mathcal{H}$.

Now, we have that $\|\mathcal{E}^\dag (X)\|_\infty \le\|\mathcal{E}^\dag (I)\|_\infty \|X\|_\infty$ by a theorem of Russo and Dye \cite[Corollary~2.9]{ref:PaulsenBook}. But the adjoint of a quantum operation $\mathcal{E}$ is subunital, $\mathcal{E}^\dag(I)\leq I$.
Therefore,
\begin{align}
\|\mathcal{E}^\dag(X)\|_2 \leq \sqrt{d}\,\|\mathcal{E}^\dag(X)\|_\infty
&\leq \sqrt{d}\,\|\mathcal{E}^\dag(I)\|_\infty \|X\|_\infty
\nonumber\\
&\leq \sqrt{d}\,\|I \|_\infty \|X\|_\infty = \sqrt{d}\,\|X\|_\infty \leq \sqrt{d}\,\|X\|_2,
\end{align}
whence
	\begin{equation}
		\|\mathcal{E}\|
		=\|\mathcal{E}^\dag\| \leq \sqrt{d}.
	\end{equation}
\end{proof}
We note that a more natural choice of norm for CP maps would be the trace norm, in which they are contractive. 
However, to prove the next Lemma~\ref{lem:BoundK}, the operator norm is more useful since we need to deal with spectral radius. 
A universal bound on quantum operation in the operator norm is required in Lemma~\ref{lem:BoundK}, and therefore, we have derived it in Lemma~\ref{lem:NormCPT}. 
Using the trace norm does not simplify matters and we stick to the operator norm.

The next lemma is a variant of Refs.\ \cite[Lemma~8.5 and Theorem~8.24]{ref:Mixing-Wolf} and \cite{ref:Szehr2015}.
\begin{lemma}\label{lem:BoundK}
Let $(\mathcal{E}_n)$ be a convergent sequence of quantum operations on
a $d$-dimensional quantum system  
with  
\begin{equation}
\mathcal{E}_n\to\mathcal{E}\quad  \text{as}\quad n\to+\infty.
\end{equation}
Let $\mathcal{P}_\varphi$ be the peripheral spectral projection of the quantum operation $\mathcal{E}$, and $\mu_0= r(\mathcal{E}- \mathcal{E}_\varphi) < 1$  the spectral radius of its nonperipheral part.
Then, for any $\mu\in(\mu_0,1)$ there exists an integer $n_0>0$ and a positive number $K>0$ such that 
\begin{eqnarray}
\|(\mathcal{E}_n-\mathcal{P}_\varphi\mathcal{E}_n\mathcal{P}_\varphi)^k\| \le K \mu^k,
\label{eqn:PowerBoundIII}
\end{eqnarray}
for all $k\in\mathbb{N}$ and for all $n>n_0$.
\end{lemma}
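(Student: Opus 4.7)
The plan is to reduce $\mathcal{E}_n - \mathcal{P}_\varphi \mathcal{E}_n \mathcal{P}_\varphi$ to a small $n$-dependent perturbation of the unperturbed nonperipheral part $\mathcal{E} - \mathcal{E}_\varphi$, and then to convert the spectral-radius information $r(\mathcal{E}-\mathcal{E}_\varphi)=\mu_0<1$ into an operator-norm bound that is \emph{uniform} in both $n$ and $k$. The key tool for uniformity will be an equivalent norm on the space of superoperators in which $\mathcal{E}-\mathcal{E}_\varphi$ becomes an honest strict contraction with ratio close to $\mu_0$.

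First I would use Proposition~\ref{prop:CPTP}(iii), namely $\mathcal{E}_\varphi=\mathcal{P}_\varphi\mathcal{E}=\mathcal{E}\mathcal{P}_\varphi=\mathcal{P}_\varphi\mathcal{E}\mathcal{P}_\varphi$, to write $\mathcal{E}_n=\mathcal{E}+\Delta_n$ with $\|\Delta_n\|\to 0$, and split
\begin{equation}
\mathcal{E}_n-\mathcal{P}_\varphi\mathcal{E}_n\mathcal{P}_\varphi
=(\mathcal{E}-\mathcal{E}_\varphi)+\mathcal{R}_n,\qquad
\mathcal{R}_n:=\Delta_n-\mathcal{P}_\varphi\Delta_n\mathcal{P}_\varphi,
\end{equation}
so that $\|\mathcal{R}_n\|\le(1+\|\mathcal{P}_\varphi\|^2)\|\Delta_n\|\to 0$. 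Next, given $\mu\in(\mu_0,1)$, choose an intermediate $\mu'\in(\mu_0,\mu)$ and invoke the standard Jordan-form consequence (essentially Householder's theorem on matrix norms, or Lemma 5.6.10 of Horn--Johnson): there exists a vector norm $\|{}\cdot{}\|_*$ on $\mathcal{T}_2$, related to the Hilbert-Schmidt norm by constants $c_1\|{}\cdot{}\|_2\le\|{}\cdot{}\|_*\le c_2\|{}\cdot{}\|_2$, whose induced operator norm satisfies $\|\mathcal{E}-\mathcal{E}_\varphi\|_*\le\mu'$. Concretely, one diagonalizes the semisimple part of $\mathcal{E}-\mathcal{E}_\varphi$ and then rescales its nilpotent blocks by a diagonal factor $\mathrm{diag}(1,\delta,\delta^2,\ldots)$ with $\delta$ small enough to push the off-diagonal Jordan entries below $\mu'-\mu_0$.

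Since $\|\mathcal{R}_n\|_*\lesssim\|\mathcal{R}_n\|\to 0$, there is $n_0$ such that $\|\mathcal{R}_n\|_*\le\mu-\mu'$ for every $n\ge n_0$, and the triangle inequality plus submultiplicativity give
\begin{equation}
\|(\mathcal{E}_n-\mathcal{P}_\varphi\mathcal{E}_n\mathcal{P}_\varphi)^k\|_*
\le\bigl(\|\mathcal{E}-\mathcal{E}_\varphi\|_*+\|\mathcal{R}_n\|_*\bigr)^k\le\mu^k
\end{equation}
for all $k\in\mathbb{N}$ and all $n\ge n_0$. Reverting from $\|{}\cdot{}\|_*$ to the operator norm $\|{}\cdot{}\|$ of Eq.~\eqref{eqn:OpNorm} via the (finite, $n$-independent) equivalence constant $K=c_2/c_1$ yields the asserted bound $\|(\mathcal{E}_n-\mathcal{P}_\varphi\mathcal{E}_n\mathcal{P}_\varphi)^k\|\le K\mu^k$.

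The main obstacle is precisely the construction of the equivalent norm $\|{}\cdot{}\|_*$ that makes $\|\mathcal{E}-\mathcal{E}_\varphi\|_*$ arbitrarily close to the spectral radius $\mu_0$. A pointwise Gelfand-type estimate $\|(\mathcal{E}-\mathcal{E}_\varphi)^k\|\le C_\mu\mu^k$ is not enough on its own, because the perturbation by $\mathcal{R}_n$ would make the analogous constant depend uncontrollably on $n$ (via the binomial expansion of $(A+B)^k$). Working in a norm in which the unperturbed map is already a strict contraction bypasses this issue and supplies the uniform constant $K$ required in Lemma~\ref{prop:Peripheral}.
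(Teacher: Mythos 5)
Your proof is correct, but it follows a genuinely different route from the paper's. You fix the limit operator $\mathcal{E}-\mathcal{E}_\varphi$ once and for all, build an adapted (Householder-type) norm in which it is a strict contraction with ratio $\mu'$ arbitrarily close to $\mu_0$, and then absorb $\mathcal{E}_n-\mathcal{P}_\varphi\mathcal{E}_n\mathcal{P}_\varphi$ as a small perturbation handled by plain submultiplicativity; the constant $K$ is the equivalence ratio between the adapted norm and the Hilbert--Schmidt operator norm. The paper instead works directly with $\mathcal{E}_n'=\mathcal{E}_n-\mathcal{P}_\varphi\mathcal{E}_n\mathcal{P}_\varphi$ for each $n$: it Schur-triangularizes $\mathcal{E}_n'$, controls the diagonal part by the convergence of spectral radii ($r(\mathcal{E}_n')<\mu_1$ for $n>n_0$), controls the nilpotent part \emph{uniformly in $n$} via the universal bound $\|\mathcal{E}_n\|\le\sqrt{d}$ for quantum operations (Lemma~\ref{lem:NormCPT}), and then truncates the binomial expansion of $(\Lambda_n+\mathcal{N}_n)^k$ after $d^2-1$ nilpotent factors, absorbing the resulting polynomial $k^{d^2-1}$ into the gap between $\mu_1$ and $\mu$. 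Each approach buys something: yours is cleaner, avoids the binomial bookkeeping, and in fact never uses the CP structure (it works for any convergent sequence of linear maps on a finite-dimensional space), whereas the paper's argument yields fully explicit, dimension-dependent constants such as $K=\rme^{2\sqrt{d}/\mu_1+1}\bigl(\tfrac{d^2-1}{\rme\log(\mu/\mu_1)}\bigr)^{d^2-1}$, which are then exploited in the quantitative error discussion (Lemma~\ref{lem:MuN} and the remark following Theorem~\ref{thm:CPTPBB}); your constants $c_1,c_2$ are tied to the condition number of the Jordan similarity of $\mathcal{E}-\mathcal{E}_\varphi$ and would be harder to make explicit.
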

\begin{proof}
Given $\mu\in(\mu_0,1)$, fix a $\mu_1\in(\mu_0,\mu)$ and define $\mathcal{E}_n'=\mathcal{E}_n-\mathcal{P}_\varphi\mathcal{E}_n\mathcal{P}_\varphi$.
Since 
\begin{equation}
r
(\mathcal{E}_n') \to r(\mathcal{E}-\mathcal{P}_\varphi\mathcal{E}\mathcal{P}_\varphi)=\mu_0<1, 
\end{equation}
as $n\to+\infty$, there exists an integer $n_0>0$ such that
\begin{equation}
r(\mathcal{E}_n')
<\mu_1< \mu <1,
\quad
\forall n> n_0.
\end{equation}

Recall now that $\mathcal{E}_n'$ can be transformed into an upper-triangular matrix by a unitary transformation $\mathcal{U}_n$ (Schur triangulation),
\begin{equation}
\mathcal{E}_n'
=\mathcal{U}_n^\dag(
\Lambda_n+\mathcal{N}_n
)\mathcal{U}_n,
\end{equation}
where $\Lambda_n$  is diagonal while $\mathcal{N}_n$ is a strictly upper-triangular matrix with vanishing diagonal elements, which is nilpotent,
\begin{equation}
\mathcal{N}_n^{d^2}=0.
\end{equation}

Since $\mathcal{P}_\varphi$ is a quantum operation by Proposition~\ref{prop:CPTP} (iii),  
$\mathcal{P}_\varphi\mathcal{E}_n\mathcal{P}_\varphi$ is also a quantum operation.
Then, the operator norm 
of the nilpotent part $\mathcal{N}_n$ is bounded as 
\begin{align}
\|\mathcal{N}_n\|
&=\|\mathcal{U}_n
\mathcal{E}_n' \mathcal{U}_n^\dag
-\Lambda_n\|
\nonumber\displaybreak[0]\\
&\le\|\mathcal{E}_n'\|
+\|\Lambda_n\|
\nonumber\displaybreak[0]\\
&\le\|\mathcal{E}_n\|
+\|\mathcal{P}_\varphi\mathcal{E}_n\mathcal{P}_\varphi\|
+\|\Lambda_n\|
\nonumber\displaybreak[0]\\
&\le2\sqrt{d}+\mu_1,
\end{align}
where we have used the fact that $\|\mathcal{E}_n\|, \|\mathcal{P}_\varphi\mathcal{E}_n\mathcal{P}_\varphi\| \leq \sqrt{d}$ by Lemma~\ref{lem:NormCPT}, and the fact that $\|\Lambda_n\| = r(\mathcal{E}_n')\leq \mu_1$.

Now, we estimate the norm $\|\mathcal{E}_n'^k\|=\|(\Lambda_n+\mathcal{N}_n)^k\|$.
To this end, notice that, in the binomial expansion of $(\Lambda_n+\mathcal{N}_n)^k$, the terms in which $\mathcal{N}_n$ appears more than $d^2-1$ times vanish, irrespective of the fact that $\Lambda_n$ and $\mathcal{N}_n$ do not commute in general.
Then, we can bound the norm as 
\begin{align}
\|\mathcal{E}_n'^k\|
&=
\|(\Lambda_n+\mathcal{N}_n)^k\|
\nonumber\displaybreak[0]\\
&\le
\sum_{j=0}^{\min(k,d^2-1)}
\begin{pmatrix}
k\\j
\end{pmatrix}
\|
\Lambda_n
\| ^{k-j}
\|
\mathcal{N}_n
\| ^j
\nonumber\displaybreak[0]\\
&\le
\sum_{j=0}^{\min(k,d^2-1)}
\begin{pmatrix}
k\\j
\end{pmatrix}
\mu_1^{k-j}
(2\sqrt{d}+\mu_1)^j
\nonumber\displaybreak[0]\\
&\le
\sum_{j=0}^{d^2-1}
\frac{k^j}{j!}
\mu_1^{k-j}
(2\sqrt{d}+\mu_1)^j
\nonumber\displaybreak[0]\\
&\le k^{d^2-1} \mu_1^k
\sum_{j=0}^{+\infty}
\frac{1}{j!}
\left(\frac{2\sqrt{d}}{\mu_1}+1\right)^j
\nonumber\displaybreak[0]\\
&=
\rme^{2\sqrt{d}/\mu_1+1}
k^{d^2-1} \mu_1^k,
\qquad\forall k\in\mathbb{N}, \quad \forall n>n_0,
\end{align}
that is
\begin{equation}
\|(\mathcal{E}_n-\mathcal{P}_\varphi\mathcal{E}_n\mathcal{P}_\varphi)^k\|
\le K_1 k^{d^2-1} \mu_1^k, \qquad\forall k\in\mathbb{N}, \quad \forall n>n_0,
\label{eqn:PowerBoundII}
\end{equation}
with $K_1= \rme^{2\sqrt{d}/\mu_1+1}$.

Finally, since $\mu>\mu_1$, we can always find a $K\geq K_1$ such that
\begin{equation}
K_1 k^D\mu_1^k\le K \mu^k,\qquad\forall k\in\mathbb{N},
\end{equation}
where $D=d^2-1$.
Indeed, we have
\begin{align}
\log\frac{K \mu^k}{K_1 k^D\mu_1^k}
&=\log\frac{K}{K_1}
+k\log\frac{\mu}{\mu_1}-D\log k
\nonumber\\
&\ge\log\frac{K}{K_1}
+D
-D\log\frac{D}{\log(\mu/\mu_1)}.
\end{align}
We can make it nonnegative by choosing
\begin{equation}
K\ge  K_1 \left(\frac{D}{\rme \log(\mu/\mu_1)}\right)^D =
\rme^{2\sqrt{d}/\mu_1+1} \left(\frac{d^2-1}{\rme \log(\mu/\mu_1)}\right)^{d^2-1}.
\label{eqn:Kbound}
\end{equation}
Together with Eq.~\eqref{eqn:PowerBoundII} this gives the bound (\ref{eqn:PowerBoundIII}) of the lemma.
\end{proof}
\begin{remark}
The constant in Eq.~\eqref{eqn:PowerBoundIII} can be chosen as
\begin{equation}
K= \rme^{2\sqrt{d/\mu\mu_0}+1} \left(\frac{2(d^2-1)}{\rme\log(\mu/\mu_0)}\right)^{d^2-1},
\end{equation}
by putting $\mu_1=\sqrt{\mu \mu_0}$ in the lower bound in Eq.\ (\ref{eqn:Kbound}). Note that it was assumed that $\mu_1\in(\mu_0,\mu)$ in the proof of Lemma~\ref{lem:BoundK}.
Moreover, notice that in fact in the proof of Lemma~\ref{lem:BoundK} we have obtained the tighter bound~\eqref{eqn:PowerBoundII}:
\begin{equation}
\|(\mathcal{E}_n-\mathcal{P}_\varphi\mathcal{E}_n\mathcal{P}_\varphi)^k\|
\le K  k^{d^2-1} \mu^k, \qquad\forall k\in\mathbb{N}, \quad \forall n>n_0,
\end{equation}
with $K= \rme^{2\sqrt{d}/\mu+1}$.
\end{remark}

In the proof of Theorem~\ref{thm:CPTPBB}, we use Lemma~\ref{lem:BoundK} for the sequence of quantum operations $\tilde{\mathcal{E}}_n=\mathcal{E}_m\rme^{\frac{t}{mn}\mathcal{L}}\cdots\mathcal{E}_1\rme^{\frac{t}{mn}\mathcal{L}}$ in Eq.\ (\ref{eqn:BBseq}), which converges to a quantum operation $\mathcal{E}=\mathcal{E}_m\cdots\mathcal{E}_1$ in the limit $n\to+\infty$.
The following lemma explicitly clarifies the bound on the speed of the convergence $\tilde{\mathcal{E}}_n\to\mathcal{E}$.
\begin{lemma}
\label{lem:BBperturb}
Let $\{\mathcal{E}_1,\ldots,\mathcal{E}_m\}$ be a finite set of quantum operations and $\mathcal{L}$ be a GKLS generator of a $d$-dimensional quantum system. Then, we have 
\begin{equation}
\|
\mathcal{E}_m \rme^{\frac{t}{mn}\mathcal{L}}\cdots
\mathcal{E}_1 \rme^{\frac{t}{mn}\mathcal{L}}
-\mathcal{E}
\| 
\le
d^{m/2}
\frac{t}{n}\|\mathcal{L}\| \rme^{\frac{t}{n}\|\mathcal{L}\| },\quad\forall n\in\mathbb{N},
\end{equation}
uniformly in $t$ on compact intervals of $[0,+\infty)$, where $\mathcal{E} =\mathcal{E}_m\cdots\mathcal{E}_1$, and we have chosen the operator norm.
\end{lemma}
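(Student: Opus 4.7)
My plan is to reduce the estimate to a telescoping identity in which each summand isolates a single short-time exponential increment, and then apply the universal norm bound $\|\mathcal{E}\|\le\sqrt{d}$ from Lemma~\ref{lem:NormCPT} repeatedly. For $j=0,1,\ldots,m$, I would define the interpolating products
\begin{equation}
P_j=\mathcal{E}_m\rme^{\frac{t}{mn}\mathcal{L}}\cdots\mathcal{E}_{j+1}\rme^{\frac{t}{mn}\mathcal{L}}\cdot\mathcal{E}_j\cdots\mathcal{E}_1,
\end{equation}
with the convention that the left (right) block is empty when $j=m$ ($j=0$), so that $P_0$ is the pulsed product appearing in the lemma and $P_m=\mathcal{E}=\mathcal{E}_m\cdots\mathcal{E}_1$. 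The telescoping decomposition $P_0-P_m=\sum_{j=0}^{m-1}(P_j-P_{j+1})$ factors each summand cleanly as
\begin{equation}
P_j-P_{j+1}=\mathcal{E}_m\rme^{\frac{t}{mn}\mathcal{L}}\cdots\mathcal{E}_{j+2}\rme^{\frac{t}{mn}\mathcal{L}}\cdot\mathcal{E}_{j+1}(\rme^{\frac{t}{mn}\mathcal{L}}-I)\cdot\mathcal{E}_j\cdots\mathcal{E}_1.
\end{equation}

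The next step is submultiplicativity. Since $\rme^{s\mathcal{L}}$ is CPTP for every $s\ge0$, Lemma~\ref{lem:NormCPT} yields $\|\rme^{s\mathcal{L}}\|\le\sqrt{d}$ and $\|\mathcal{E}_k\|\le\sqrt{d}$. Counting $m-j-1$ pairs $\mathcal{E}_k\rme^{\frac{t}{mn}\mathcal{L}}$ in the left flank, a middle factor of $\sqrt{d}$ from $\mathcal{E}_{j+1}$, and $j$ factors in the right flank gives
\begin{equation}
\|P_j-P_{j+1}\|\le d^{m-\frac{j+1}{2}}\bigl\|\rme^{\frac{t}{mn}\mathcal{L}}-I\bigr\|.
\end{equation}
The resulting sum over $j=0,\ldots,m-1$ is geometric with ratio $d^{-1/2}$ and collapses cleanly to $d^{m/2}\|\rme^{\frac{t}{mn}\mathcal{L}}-I\|$ when $d>1$; in the degenerate case $d=1$ one instead gets $m\|\rme^{\frac{t}{mn}\mathcal{L}}-I\|$, which will be absorbed by the $1/m$ factor appearing in the next step.

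Finally I would control the remaining short-time factor via the integral identity $\rme^{s\mathcal{L}}-I=\int_0^s\rme^{u\mathcal{L}}\mathcal{L}\,du$ and the spectral bound $\|\rme^{u\mathcal{L}}\|\le\rme^{u\|\mathcal{L}\|}$, which together yield $\|\rme^{s\mathcal{L}}-I\|\le s\|\mathcal{L}\|\rme^{s\|\mathcal{L}\|}$. Setting $s=t/(mn)$ and using $1/m\le1$ and the monotonicity of $x\mapsto x\rme^x$ on $[0,+\infty)$ produces the stated bound $d^{m/2}\frac{t}{n}\|\mathcal{L}\|\rme^{\frac{t}{n}\|\mathcal{L}\|}$. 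I do not anticipate a real obstacle: the only care required is bookkeeping of the powers of $\sqrt{d}$ so that the flanks and the middle combine geometrically to exactly $d^{m/2}$, and the deliberate choice to use the spectral bound $\rme^{s\|\mathcal{L}\|}$ (rather than the tighter but dimension-dependent Russo--Dye bound $\sqrt{d}$) on the incremental exponential, so that the $1/(mn)$ time step translates into the claimed $1/n$ in the final estimate.
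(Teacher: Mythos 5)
Your telescoping decomposition and the factorization of each difference $P_j-P_{j+1}$ are correct, but the quantitative step that follows is not: the sum $\sum_{j=0}^{m-1}d^{\,m-\frac{j+1}{2}}$ does \emph{not} collapse to $d^{m/2}$. It is a geometric sum whose terms range from $d^{m/2}$ (at $j=m-1$) up to $d^{m-1/2}$ (at $j=0$), so it equals $d^{m/2}\frac{d^{m/2}-1}{d^{1/2}-1}$, which for $m\ge 2$ and $d\ge 2$ strictly exceeds $m\,d^{m/2}$; even after dividing by the $m$ coming from $s=t/(mn)$ the prefactor still exceeds $d^{m/2}$ (e.g.\ $d=2$, $m=2$ gives $\tfrac{1}{2}(2^{3/2}+2)\approx 2.41>2=d^{m/2}$). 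The source of the loss is your decision to bound the flank exponentials by the Russo--Dye estimate $\|\rme^{\frac{t}{mn}\mathcal{L}}\|\le\sqrt{d}$: each such factor should instead be bounded by $\rme^{\frac{t}{mn}\|\mathcal{L}\|}$, which is close to $1$, exactly as you already do for the incremental factor. With that replacement each telescoping term is bounded by $d^{m/2}\,\rme^{\frac{(m-1-j)t}{mn}\|\mathcal{L}\|}\,\|\rme^{\frac{t}{mn}\mathcal{L}}-I\|\le d^{m/2}\,\rme^{\frac{(m-j)t}{mn}\|\mathcal{L}\|}\,\frac{t}{mn}\|\mathcal{L}\|\le d^{m/2}\frac{t}{mn}\|\mathcal{L}\|\,\rme^{\frac{t}{n}\|\mathcal{L}\|}$, and summing the $m$ terms gives precisely the claimed $d^{m/2}\frac{t}{n}\|\mathcal{L}\|\,\rme^{\frac{t}{n}\|\mathcal{L}\|}$; here the factor $1/m$ is genuinely needed to cancel the $m$ summands, not merely discarded.

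For comparison, the paper's proof avoids this bookkeeping by writing each $\rme^{\frac{t}{mn}\mathcal{L}}=1+(\rme^{\frac{t}{mn}\mathcal{L}}-1)$ and expanding the full product binomially: every term with $j\ge 1$ increments is bounded by $d^{m/2}\|\rme^{\frac{t}{mn}\mathcal{L}}-1\|^j$ (only the $m$ operations $\mathcal{E}_k$ contribute $\sqrt{d}$ each), and the resulting series sums to $d^{m/2}[(1+\|\rme^{\frac{t}{mn}\mathcal{L}}-1\|)^m-1]\le d^{m/2}(\rme^{\frac{t}{n}\|\mathcal{L}\|}-1)$, which the mean-value theorem turns into the stated bound. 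Your repaired telescoping is essentially the first-order truncation of that expansion with the higher-order terms absorbed into the exponential, so both routes work once the exponential factors are estimated by $\rme^{s\|\mathcal{L}\|}$ rather than $\sqrt{d}$.
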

\begin{proof}
We split each piece of the evolution as $\rme^{\frac{t}{mn}\mathcal{L}}=1+(\rme^{\frac{t}{mn}\mathcal{L}}-1)$.
The deviation from the identity map is bounded by
\begin{equation}
	\|\rme^{\frac{t}{mn}\mathcal{L}}-1\| 
	\le\rme^{\frac{t}{mn}\|\mathcal{L}\| }-1.
\end{equation}
Then, the distance is bounded by
\begin{align}
\|
\mathcal{E}_m \rme^{\frac{t}{mn}\mathcal{L}}\cdots
\mathcal{E}_1 \rme^{\frac{t}{mn}\mathcal{L}}
-\mathcal{E}
\| 
&\le
d^{m/2}
\sum_{j=1}^m\begin{pmatrix}
m\\j
\end{pmatrix}
\|\rme^{\frac{t}{mn}\mathcal{L}}-1\| ^j
\nonumber\\
&=
d^{m/2}
\left[
\left(
1+\|\rme^{\frac{t}{mn}\mathcal{L}}-1\| 
\right)^m-1\right]
\nonumber\\
&\le
d^{m/2}
\left(
\rme^{\frac{t}{n}\|\mathcal{L}\| }
-1
\right)
\nonumber\\
&=
d^{m/2}
\frac{t}{n}\|\mathcal{L}\| \rme^{\frac{t_*}{n}\|\mathcal{L}\| }
\nonumber\displaybreak[0]\\
&\le
d^{m/2}
\frac{t}{n}\|\mathcal{L}\| \rme^{\frac{t}{n}\|\mathcal{L}\| },
\end{align}
where we have used $\|\mathcal{E}_j\| \le\sqrt{d}$ ($j=1,\ldots,m$) for the operator norm (Lemma~\ref{lem:NormCPT}) and the mean-value theorem, $[F(t)-F(0)]/t=F'(t_*)$ for some $t_*\in[0,t]$, for $F(t)=\rme^{\frac{t}{n}\|\mathcal{L}\| }$, with $F'(t)$ denoting the derivative of $F(t)$ with respect to $t$.
\end{proof}

Lemma~\ref{lem:BoundK} states that for any $\mu\in(\mu_0,1)$ there exists an integer $n_0>0$ such that Eq.~\eqref{eqn:PowerBoundIII} holds for all $n>n_0$.
The smaller $\mu$ is, the faster the bound shrinks as $k$ increases.
However, if we demand that $\mu$ is (larger than but) close to $\mu_0=r(\mathcal{E}-\mathcal{P}_\varphi\mathcal{E}\mathcal{P}_\varphi)$, the spectral radius  of the nonperipheral part of the limit map $\mathcal{E}$, namely the largest among the magnitudes of its nonperipheral eigenvalues, we would need a large $n_0$.
The following lemma clarifies such a trade-off between $n_0$ and $\mu$, for the bang-bang sequence $\tilde{\mathcal{E}}_n=\mathcal{E}_m\rme^{\frac{t}{mn}\mathcal{L}}\cdots\mathcal{E}_1\rme^{\frac{t}{mn}\mathcal{L}}$ in Eq.\ (\ref{eqn:BBseq}) relevant to Theorem~\ref{thm:CPTPBB}.
\begin{lemma}
\label{lem:MuN}
Let $\{\mathcal{E}_1,\ldots,\mathcal{E}_m\}$ be a finite set of quantum operations and $\mathcal{L}$ be a GKLS generator of a $d$-dimensional quantum system.
We consider the sequence of quantum operations $\tilde{\mathcal{E}}_n=\mathcal{E}_m \rme^{\frac{t}{mn}\mathcal{L}}\cdots
\mathcal{E}_1 \rme^{\frac{t}{mn}\mathcal{L}}$, which converges to  $\mathcal{E}=\mathcal{E}_m\cdots\mathcal{E}_1$ in the limit $n\to+\infty$, and let $\mathcal{P}_\varphi$ the peripheral spectral projection of $\mathcal{E}$.
Then, to bound $\|(\tilde{\mathcal{E}}_n-\mathcal{P}_\varphi\tilde{\mathcal{E}}_n\mathcal{P}_\varphi)^k\| $ by Lemma~\ref{lem:BoundK}, we can choose $n_0$ satisfying
\begin{equation}
\mu_0 + \left(
	(1+d)d^{m/2+2}
\frac{t}{n_0}\|\mathcal{L}\| \rme^{\frac{t}{n_0}\|\mathcal{L}\| }
	\right)^{1/d^2}\Bigl(1+\|\mathcal{N}_{n_0}\| \Bigr)
	\le\mu<1,
	\label{eqn:MuN}
\end{equation}
where $\mathcal{N}_{n}$ is the nilpotent part of $\Theta_n=
(\tilde{\mathcal{E}}_n-\mathcal{E})-\mathcal{P}_\varphi(\tilde{\mathcal{E}}_n-\mathcal{E})\mathcal{P}_\varphi$.
\end{lemma}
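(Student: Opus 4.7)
The plan is to make the choice of $n_0$ in Lemma~\ref{lem:BoundK} explicit by combining an explicit bound on the perturbation $\Theta_n = \mathcal{E}'_n - \mathcal{E}'$, where $\mathcal{E}'_n := \tilde{\mathcal{E}}_n - \mathcal{P}_\varphi \tilde{\mathcal{E}}_n \mathcal{P}_\varphi$ and $\mathcal{E}' := \mathcal{E} - \mathcal{P}_\varphi \mathcal{E} \mathcal{P}_\varphi$, with a resolvent/Schur-form eigenvalue-perturbation argument showing that the sufficient condition~(\ref{eqn:MuN}) guarantees $r(\mathcal{E}'_n) \le \mu$ for $n \ge n_0$, which is precisely the spectral-radius hypothesis required by Lemma~\ref{lem:BoundK}.

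First I would bound $\|\Theta_n\|$. Because $\mathcal{P}_\varphi$ is itself a quantum operation by Proposition~\ref{prop:CPTP}(iii), Lemma~\ref{lem:NormCPT} yields $\|\mathcal{P}_\varphi\| \le \sqrt{d}$. Combining this with the triangle inequality, submultiplicativity of the operator norm, and Lemma~\ref{lem:BBperturb} gives
\[ \|\Theta_n\| \le (1 + \|\mathcal{P}_\varphi\|^2)\,\|\tilde{\mathcal{E}}_n - \mathcal{E}\| \le (1 + d)\,d^{m/2}\,\frac{t}{n}\|\mathcal{L}\|\,\rme^{(t/n)\|\mathcal{L}\|}. \]
The right-hand side is monotonically decreasing in $n$, so the bound at $n_0$ controls $\|\Theta_n\|$ for all $n \ge n_0$. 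Note that the quantity $d^2 \|\Theta_n\|$ coincides (up to this bound) with the bracketed expression inside the $1/d^2$ power in~(\ref{eqn:MuN}).

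Next I would translate an eigenvalue of $\mathcal{E}'_n$ outside $\sigma(\mathcal{E}')$ into a resolvent lower bound and then invoke Schur triangulation. Pick any $\lambda \in \sigma(\mathcal{E}'_n)$: if $\lambda \in \sigma(\mathcal{E}')$, then $|\lambda| \le \mu_0$ and nothing is to be proved; otherwise, singularity of $\lambda I - \mathcal{E}'_n = (\lambda I - \mathcal{E}') - \Theta_n$ together with invertibility of $\lambda I - \mathcal{E}'$ forces $\|(\lambda I - \mathcal{E}')^{-1}\| \ge 1/\|\Theta_n\|$. Using the Schur form $\mathcal{E}'_{n_0} = \mathcal{U}_{n_0}^\dagger(\Lambda_{n_0} + \mathcal{N}_{n_0})\mathcal{U}_{n_0}$ of Lemma~\ref{lem:BoundK} (with $\mathcal{N}_{n_0}^{d^2} = 0$), together with a matching argument that identifies each such $\lambda$ with a unique element of $\sigma(\mathcal{E}')$ for sufficiently small $\|\Theta_n\|$, the finite Neumann expansion of the resolvent gives
\[ \|(zI - \mathcal{E}'_{n_0})^{-1}\| \le \sum_{k=0}^{d^2-1} \frac{\|\mathcal{N}_{n_0}\|^k}{d(z,\sigma(\mathcal{E}'_{n_0}))^{k+1}}. \]
Setting $\delta = d(\lambda, \sigma(\mathcal{E}'_{n_0}))$, bounding the sum by $d^2 \max\bigl(1/\delta,\,\|\mathcal{N}_{n_0}\|^{d^2-1}/\delta^{d^2}\bigr)$ and performing a case analysis (according to whether $\delta \gtrless \|\mathcal{N}_{n_0}\|$) yields
\[ \delta \le \max\!\bigl(d^2\|\Theta_{n_0}\|,\,(d^2\|\mathcal{N}_{n_0}\|^{d^2-1}\|\Theta_{n_0}\|)^{1/d^2}\bigr) \le (d^2\|\Theta_{n_0}\|)^{1/d^2}(1+\|\mathcal{N}_{n_0}\|), \]
valid once $d^2\|\Theta_{n_0}\| \le 1$ by the elementary inequality $x^{(d^2-1)/d^2} \le 1 + x$ for $x \ge 0$. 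Since $|\lambda| \le \mu_0 + \delta$, substituting the Step~1 bound on $\|\Theta_{n_0}\|$ reproduces precisely~(\ref{eqn:MuN}).

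The main obstacle is the direction of the eigenvalue perturbation. The straightforward version of the argument, in which one uses the resolvent of $\mathcal{E}'$ (rather than of $\mathcal{E}'_{n_0}$) for eigenvalues of $\mathcal{E}'_n$ lying outside $\sigma(\mathcal{E}')$, would produce a bound featuring the Schur nilpotent $\|\mathcal{N}\|$ of the limit $\mathcal{E}'$ instead of $\|\mathcal{N}_{n_0}\|$. Obtaining the latter requires applying the resolvent estimate to $\mathcal{E}'_{n_0}$ itself and invoking a continuity/matching argument based on the continuous dependence of the roots of the characteristic polynomial on its coefficients, so that the roles of $\mathcal{E}'$ and $\mathcal{E}'_{n_0}$ can be swapped once $\|\Theta_n\|$ is small enough. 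This matching argument, together with the monotonicity of the bound on $\|\Theta_n\|$, then propagates the estimate from $n_0$ to all $n \ge n_0$.
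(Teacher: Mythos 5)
Your proposal follows essentially the same route as the paper's proof. Step~1 is identical: $\|\Theta_n\|\le(1+d)\|\tilde{\mathcal{E}}_n-\mathcal{E}\|$ via $\|\mathcal{P}_\varphi\|\le\sqrt{d}$ (Lemma~\ref{lem:NormCPT}) together with Lemma~\ref{lem:BBperturb}, with monotonicity in $n$ propagating the bound from $n_0$ onward. For Step~2 the paper simply cites Theorem~7.2.3 of Golub and Van Loan, which gives $\min_\ell|\tilde{\lambda}_\ell-\lambda|\le\max\{\theta_n,\theta_n^{1/p_n}\}$ with $\theta_n=\|\Theta_n\|\sum_{j=0}^{p_n-1}\|\mathcal{N}_n\|^j$ and $p_n\le d^2$, and then massages this into $(d^2\|\Theta_{n_0}\|)^{1/d^2}\bigl(1+\|\mathcal{N}_{n_0}\|\bigr)$ exactly as you do; your resolvent-plus-Schur-form derivation is precisely the proof of that theorem, so re-deriving it neither gains nor loses anything. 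The only point of divergence is the identification of $\mathcal{N}_{n_0}$: you insist on the Schur nilpotent of $\tilde{\mathcal{E}}_{n_0}'=\tilde{\mathcal{E}}_{n_0}-\mathcal{P}_\varphi\tilde{\mathcal{E}}_{n_0}\mathcal{P}_\varphi$ and invoke a rather vague matching argument to swap the roles of the perturbed and unperturbed operators, whereas the one-directional Bauer--Fike argument most naturally produces the Schur nilpotent of the fixed limit $\mathcal{E}'=\mathcal{E}-\mathcal{P}_\varphi\mathcal{E}\mathcal{P}_\varphi$ (the paper's own statement, which attaches $\mathcal{N}_{n}$ to $\Theta_n$, is itself loose on this point). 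Since any of these choices yields a sufficient condition of the advertised form and no sharpness is claimed, this is not a gap; the simpler one-directional argument with the nilpotent of $\mathcal{E}'$ would suffice and would let you drop the matching step entirely.
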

\begin{proof}
The constant $\mu>\mu_0=r(\mathcal{E}-\mathcal{P}_\varphi\mathcal{E}\mathcal{P}_\varphi)$ is to bound the spectral radius $r(\tilde{\mathcal{E}}_n')$ of $\tilde{\mathcal{E}}_n'=\tilde{\mathcal{E}}_n-\mathcal{P}_\varphi\tilde{\mathcal{E}}_n\mathcal{P}_\varphi$ for all $n>n_0$.
We wish to know how much $\mu$ should be larger than $\mu_0$ to bound $r(\tilde{\mathcal{E}}_n')$ from above.

To this end, we use Theorem~7.2.3 of Ref.\ \cite{ref:GolubVanLoanMatrixComp} concerning how much the eigenvalues can be altered by a perturbation for generic matrices (including nondiagonalizable ones).
Let $\lambda$ denote the largest (in magnitude) eigenvalue of $\mathcal{E}'=\mathcal{E}-\mathcal{P}_\varphi\mathcal{E}\mathcal{P}_\varphi$ and $\tilde{\lambda}_\ell$ the eigenvalues of $\tilde{\mathcal{E}}_n'$.
According to Theorem~7.2.3 of Ref.\ \cite{ref:GolubVanLoanMatrixComp}, we have
\begin{equation}
	\min_\ell|\tilde{\lambda}_\ell-\lambda|\le\max\{\theta_n,\theta_n^{1/p_n}\},
\end{equation}
where
\begin{equation}
\theta_n=\|\Theta_n\| \sum_{j=0}^{p_n-1}\|\mathcal{N}_n\| ^j,
\end{equation}
with $\Theta_n= \tilde{\mathcal{E}}_n'-\mathcal{E}'$, $\mathcal{N}_n$ being its nilpotent part, and 
$p_n$ being the smallest integer such that $\mathcal{N}_n^{p_n}=0$.
Caring about the worst case, we have a bound
\begin{equation}
	r(\tilde{\mathcal{E}}_n')\le r(\mathcal{E}')+\max\{\theta_n,\theta_n^{1/p_n}\}.
	\label{eqn:BoundSpecRadPrime}
\end{equation}
Using Lemma~\ref{lem:BBperturb}, $\|\Theta_n\| $ is bounded by
\begin{equation}
	\|\Theta_n\| \le(1+d)\|\tilde{\mathcal{E}}_n-\mathcal{E}\| 
\le
(1+d)d^{m/2}
\frac{t}{n}\|\mathcal{L}\| \rme^{\frac{t}{n}\|\mathcal{L}\| },
\end{equation}
and $\theta_n$ by
\begin{equation}
\theta_n\le (1+d)d^{m/2+2}
\frac{t}{n}\|\mathcal{L}\| \rme^{\frac{t}{n}\|\mathcal{L}\| }\,\Bigl(1+\|\mathcal{N}_n\| \Bigr)^{d^2}.
\end{equation}
Note that $\|\mathcal{P}_\varphi\| \le\sqrt{d}$ (Lemma~\ref{lem:NormCPT}), $\sum_{j=0}^{p_n-1}\|\mathcal{N}_n\| ^j\le p_n\max\{1,\|\mathcal{N}_n\| ^{p_n-1}\}\le d^2(1+\|\mathcal{N}_n\| )^{d^2}$, and $p_n\le d^2$.
Once this upper bound on $\theta_n$ becomes smaller than $1$ by increasing $n$, the bound (\ref{eqn:BoundSpecRadPrime}) is reduced to
\begin{align}
	r(\tilde{\mathcal{E}}_n')
	&\le r(\mathcal{E}')+\theta_n^{1/d^2}
	\nonumber\\
	&\le r(\mathcal{E}')
	+\left(
	(1+d)d^{m/2+2}
\frac{t}{n}\|\mathcal{L}\| \rme^{\frac{t}{n}\|\mathcal{L}\| }
	\right)^{1/d^2}\Bigl(1+\|\mathcal{N}_n\| \Bigr),
\end{align}
and we get the condition (\ref{eqn:MuN}) on $n_0$.
Note that $\mu$ should be strictly smaller than $1$; otherwise it does not make sense.
\end{proof}

\begin{lemma}[Peripheral part of the product of Hermitian projections]
\label{lem:P1P2}
Let $\{P_1, \ldots, P_m\}$ be a set of Hermitian projection operators, $P_j^2=P_j=P_j^\dag$ for $j=1,\ldots,m$, on a Hilbert space.
Then, the peripheral part of the product $P_m \cdots P_1$ is given by the Hermitian projection
\begin{equation}
P_\varphi=P_1\wedge \cdots \wedge P_m
\end{equation}
onto the intersection of the ranges $\ran P_1 \cap \cdots \cap \ran P_m$.
We get $P_\varphi P_j = P_j P_\varphi = P_\varphi$ for all $j=1,\ldots,m$.
\end{lemma}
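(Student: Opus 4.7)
The plan is to exploit that $E=P_m\cdots P_1$ is a contraction on the underlying Hilbert space (each $P_j$ is an orthogonal projection of norm at most $1$), so $\sigma(E)\subseteq\mathbb{D}$. I will show in sequence that (a) the only peripheral eigenvalue of $E$ is $\lambda=1$, with eigenspace exactly $\ran P_\varphi$; (b) this eigenvalue is semisimple; and (c) the associated spectral (Riesz) projection coincides with the orthogonal projection $P_\varphi$. The identities $P_j P_\varphi=P_\varphi=P_\varphi P_j$ then drop out at the end.

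For (a), suppose $Ev=\lambda v$ with $|\lambda|=1$. Chaining the trivial contractivity bounds,
\[
\|v\|=\|\lambda v\|=\|P_m\cdots P_1 v\|\le\|P_{m-1}\cdots P_1 v\|\le\cdots\le\|P_1 v\|\le\|v\|,
\]
forces equality at every step. For an orthogonal projection $P_j$, the equality $\|P_j u\|=\|u\|$ implies $P_j u=u$; applied inductively this gives $P_1 v=v$, then $P_2 v=v$, and so on up to $P_m v=v$, whence $v\in\ran P_\varphi$ and $\lambda=1$. Conversely any $v\in\ran P_\varphi$ is fixed by every $P_j$, and hence by $E$, so $\ker(E-I)=\ran P_\varphi$ is the full peripheral eigenspace.

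For (b) I use the classical contraction argument: if $Eu=u+w$ with $w\ne 0$ and $Ew=w$, then $E^n u=u+nw$, whose norm blows up, contradicting $\|E^n\|\le 1$. So the eigenvalue $1$ is semisimple. For (c) I observe that the adjoint $E^{*}=P_1\cdots P_m$ is a product of the same Hermitian projections in reverse order, so the argument in (a) applied to $E^*$ yields $\ker(E^{*}-I)=\ran P_\varphi$ as well. The general identity $\ran(E-I)=(\ker(E^{*}-I))^{\perp}$ then gives $\ran(E-I)=(\ran P_\varphi)^{\perp}$. Semisimplicity of the eigenvalue $1$ implies the whole space splits as $\ker(E-I)\oplus\ran(E-I)=\ran P_\varphi\oplus(\ran P_\varphi)^{\perp}$, so the Riesz projection onto $\ker(E-I)$ along $\ran(E-I)$ is precisely the orthogonal projection $P_\varphi$. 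Therefore the peripheral part of $E$ is $E_\varphi=1\cdot P_\varphi=P_\varphi$, and consequently also $E_\varphi^{-1}=P_\varphi$. Finally, $\ran P_\varphi\subseteq\ran P_j$ gives $P_j P_\varphi=P_\varphi$ directly, and taking Hermitian adjoints yields $P_\varphi P_j=P_\varphi$.

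The main obstacle is step (c): merely knowing $\ker(E-I)=\ran P_\varphi$ is not enough, because the Riesz projection of a non-normal operator is generically oblique rather than orthogonal, and $E$ is very far from normal. What rescues the argument is that $E$ and $E^{*}$ have the same structural form — both are products of the same Hermitian projections — so the eigenvector identification can be performed on both sides and combined via $\ran(E-I)=(\ker(E^{*}-I))^{\perp}$ to force the Riesz projection to coincide with the orthogonal projection $P_\varphi$.
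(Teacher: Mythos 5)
Your proof is correct and follows essentially the same route as the paper's: saturation of the contraction inequalities forces every $P_j$ to fix a peripheral eigenvector (so $\lambda=1$ and the eigenspace is $\ran P_\varphi$), and the fact that $E^{*}=P_1\cdots P_m$ is a product of the same Hermitian projections identifies the left and right eigenspaces. You are in fact more explicit than the paper on the two points it leaves implicit --- semisimplicity of the eigenvalue $1$ (via power-boundedness) and the identification of the Riesz projection with the orthogonal projection $P_\varphi$ (via $\ran(E-I)=(\ker(E^{*}-I))^{\perp}$) --- and you handle general $m$ by a single chain of inequalities rather than by induction from the two-projection case.
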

\begin{proof}
Suppose that $P_2P_1$ admits a peripheral eigenvalue $\lambda$, and let $u\neq 0$ be the corresponding eigenvector,
\begin{equation}	
P_2P_1 u=\lambda u
\quad\text{with}\quad
|\lambda|=1.
\label{eqn:P2P1Eigen}
\end{equation}
By the Cauchy-Schwarz inequality, we have
\begin{equation}
  \|u\|^2
  =\|P_2P_1 u \|^2
  = \langle P_1 u |  P_2P_1 u \rangle 
  \le\|P_1 u\| \|P_2P_1 u\| \le \| u \|^2,
\end{equation}
where we have used $\| P_j\| \le 1$ and Eq.\ (\ref{eqn:P2P1Eigen}).
The two inequalities are actually saturated, implying
\begin{equation}
  P_1 u=u.
\end{equation}
This simplifies Eq.\ (\ref{eqn:P2P1Eigen}) to
\begin{equation}	
P_2 u =\lambda u,
\end{equation}
and hence one gets that $\lambda=1$.
In this way, the peripheral eigenvalue of $P_2P_1$, if any, is $1$, and the corresponding eigenvector $u$ is a simultaneous eigenvector of $P_1$ and $P_2$ belonging to their nonvanishing eigenvalue $1$.
Moreover, since we have
\begin{equation}
	(P_2 P_1)^\dagger u = P_1 P_2 u=u,
\end{equation}
the conjugate of $u$ is a left-eigenvector of $P_2P_1$ belonging to the same eigenvalue $1$.
Therefore, the peripheral part of $P_2P_1$ is diagonalizable and is the Hermitian projection
\begin{equation}
P_\varphi=P_1\wedge P_2 
\end{equation}
onto $\ran P_1 \cap \ran P_2$, the intersection of the ranges of $P_1$ and $P_2$.

By induction we get that the peripheral part of $P_m \cdots P_1$ is given by the Hermitian projection
\begin{equation}
P_\varphi=P_1\wedge \cdots \wedge P_m
\end{equation}
onto the intersection of the ranges $\ran P_1 \cap \cdots \cap \ran P_m$.
\end{proof}

\section{Bounding the BCH Formula for Lemma~\ref{thm:kicktofield}}
\label{app:BCH}
It is known that in general the BCH series converges only for small operators \cite{ref:Blanes-BCH}.
This is related to the issue of the convergence of the Magnus expansion \cite{ref:Blanes-Mugnus}.
For our purpose, however, this is not a problem.
We just need contributions up to the first order in $1/n$ in the BCH formula $Z(t)=\log(\rme^A\rme^{\frac{t}{n}L})$, and the higher-order corrections are under control, as proved in Lemma~\ref{thm:kicktofield}.
Here we provide a concise and explicit bound on the corrections.

We first need to know how to bound $\|h(X+Y)-h(X)\|$ for a primary operator function $h(X)$.
If a series expansion exists for the stem function $h(z)$, e.g.
\begin{equation}
h(z)=\sum_{n=0}^\infty c_nz^n,
\label{eqn:SeriesH}
\end{equation}
it can be easily bounded as \cite[Theorem~6.2.30]{ref:MatrixAnalysisTopics-HornJohnson}
\begin{equation}
\|
h(X+Y)-h(X)
\|
\le
h_\mathrm{abs}'(\|X\|+\|Y\|)
\|Y\|,
\label{eqn:SimpleBoundHz}
\end{equation}
where 
\begin{equation}
h_\mathrm{abs}(z)=\sum_{n=0}^\infty|c_n|z^n,
\end{equation}
and $h_\mathrm{abs}'(z)$ is its derivative.
It is, however, not always the case that a good series expansion exists for $h(z)$.
For instance, suppose that the spectrum of $X$ is distributed on the complex plane as in Fig.~\ref{fig:ContourGamma} and $h(X)=\log X$.
Due to the singularity at the origin, there is no series expansion of $\log z$ defined for all the eigenvalues.
In addition, even if every eigenvalue of $X$ lies within the convergence radius of the series expansion of $h(z)$, 
the norm $\|X\|$ can exceed the convergence radius, and in such a case the bound (\ref{eqn:SimpleBoundHz}) is not applicable. 
This would happen in particular when $X$ is not diagonalizable and possesses nilpotents in its spectral representation.
We want a bound valid for any $X$, not necessarily diagonalizable.

\begin{lemma}\label{lem:MatrixFuncBound}
Let $X$ be an operator on a $D$-dimensional Banach space, whose spectral representation is
\begin{equation}
X=\sum_k(x_kP_k+N_k),
\end{equation}
with $P_k$ and $N_k$ being the spectral projection and the nilpotent belonging to the $k$th eigenvalue $x_k$ of $X$.
Let $h(z)$ be analytic on and inside a closed contour $\Gamma$ in the complex plane that encloses the $R$-neighborhood of the spectrum of $X$, $\operatorname{spec}(X) = \{x_k\}$, that is, $\operatorname{dist}(\Gamma, \operatorname{spec}(X)) 
\ge R>0$, and let 
\begin{equation}
h(X)=\frac{1}{2\pi \rmi} \oint_\Gamma \d z \, h(z) \frac{1}{z-X}.
\end{equation}
Set 
\begin{equation}
\beta= \frac{DP}{R}\frac{1-(N/R)^D}{1-N/R},
\quad\text{where} \quad
P=\max_k\|P_k\|,\quad
N=\max_k\|N_k\|,
\label{eqn:PNR}
\end{equation}
and 
\begin{equation}
M=
\frac{1}{2\pi}\oint_\Gamma|\rmd z||h(z)|.
\label{eqn:Mlemma}
\end{equation}
Then, for any  operator $Y$ satisfying $\beta \|Y\|< 1$, we have
\begin{equation}
\|h(X+Y)-h(X)\|
\le
M\frac{
\beta^2
\|Y\|
}{
1
-
\beta
\|Y\|
}.
\label{eqn:BoundHPerturb}
\end{equation}
\end{lemma}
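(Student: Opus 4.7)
The plan is to use the Dunford--Taylor integral representation
\begin{equation*}
h(X+Y)-h(X)=\frac{1}{2\pi\rmi}\oint_\Gamma\rmd z\,h(z)\left[\frac{1}{z-(X+Y)}-\frac{1}{z-X}\right]
\end{equation*}
together with a Neumann-series expansion of the perturbed resolvent. The whole argument reduces to two ingredients: (i) establishing a uniform bound $\|(z-X)^{-1}\|\le\beta$ on $\Gamma$, and (ii) using this bound, together with the hypothesis $\beta\|Y\|<1$, to control the perturbed resolvent via the Neumann series.

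For step (i), I would use the spectral representation of $X$: since $P_kN_\ell=N_\ell P_k=\delta_{k\ell}N_k$ and $N_k^{n_k}=0$, an elementary computation gives
\begin{equation*}
\frac{1}{z-X}=\sum_k\sum_{j=0}^{n_k-1}\frac{N_k^j}{(z-x_k)^{j+1}},
\end{equation*}
with the convention $N_k^0=P_k$. On $\Gamma$ we have $|z-x_k|\ge R$, while $\|P_k\|\le P$ and $\|N_k^j\|\le N^j$; recalling that $\|P_k\|\ge 1$ for any nonzero projection, these bounds combine into $\|N_k^j\|\le PN^j$ uniformly in $k,j$. Both the number of distinct eigenvalues and each nilpotency index $n_k$ are bounded by $D$, so a crude summation yields
\begin{equation*}
\left\|\frac{1}{z-X}\right\|\le\frac{DP}{R}\sum_{j=0}^{D-1}\left(\frac{N}{R}\right)^{\!j}=\beta.
\end{equation*}

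With this bound in hand, step (ii) is immediate: the Neumann series
\begin{equation*}
\frac{1}{z-(X+Y)}=\frac{1}{z-X}\sum_{n=0}^\infty\left(Y\,\frac{1}{z-X}\right)^{\!n}
\end{equation*}
converges in operator norm uniformly on $\Gamma$ whenever $\beta\|Y\|<1$, and gives
\begin{equation*}
\left\|\frac{1}{z-(X+Y)}-\frac{1}{z-X}\right\|\le\beta\sum_{n=1}^\infty(\beta\|Y\|)^n=\frac{\beta^2\|Y\|}{1-\beta\|Y\|}.
\end{equation*}
Substituting into the integral representation and invoking the definition of $M$ in Eq.~\eqref{eqn:Mlemma} yields the claim. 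The point where I expect to need the most care is justifying that the Dunford--Taylor representation of $h(X+Y)$ may be taken along the \emph{same} contour $\Gamma$: the Neumann bound already shows that $\Gamma$ lies in the resolvent set of $X+sY$ for all $s\in[0,1]$, and a standard continuity-of-spectrum argument in finite dimension then prevents any eigenvalue of $X+sY$ from crossing $\Gamma$ during the deformation, so $\operatorname{spec}(X+Y)$ remains enclosed by $\Gamma$.
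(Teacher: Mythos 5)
Your proposal is correct and follows essentially the same route as the paper's proof: Dunford--Taylor representation of the difference, the resolvent bound $\|(zI-X)^{-1}\|\le\beta$ obtained from the spectral representation term by term, and the Neumann series under $\beta\|Y\|<1$ integrated against $M$. The extra care you take to justify that $\operatorname{spec}(X+Y)$ stays enclosed by the same contour $\Gamma$ is a point the paper leaves implicit, and your continuity-of-spectrum argument for it is sound.
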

\begin{proof}
We wish to bound
\begin{align}
h(X+Y)-h(X)
&
=
\frac{1}{2\pi\rmi}
\oint_{\Gamma}\rmd z\,
h(z)
\left(
\frac{1}{zI-X-Y}
-
\frac{1}{zI-X}
\right)
\nonumber\\
&=\sum_{n=1}^\infty
\frac{1}{2\pi\rmi}
\oint_{\Gamma}\rmd z\,
h(z)
\frac{1}{zI-X}
\left(
Y
\frac{1}{zI-X}
\right)^n.
\label{eqn:PerturbedComplexIntegral}
\end{align}
Let us estimate the resolvent,
\begin{equation}
\frac{1}{zI-X}
=
\sum_k
\frac{1}{(z-x_k)I-N_k}P_k
=
\sum_k
\sum_{q=0}^{n_k-1}
\frac{1}{(z-x_k)^{q+1}}N_k^qP_k.
\end{equation}
On the contour $\Gamma$, it is bounded by
\begin{equation}
\left\|
\frac{1}{zI-X}
\right\|
\le
\sum_k
\sum_{q=0}^{n_k-1}
\frac{1}{R^{q+1}}\|N_k\|^q\|P_k\|
\le
D
\sum_{q=0}^{D-1}
\frac{1}{R^{q+1}}N^qP
=
\frac{DP}{R}
\frac{
1-(N/R)^D
}{
1-N/R
} = \beta.
\end{equation}
Using this bound, we get
\begin{align}
\|h(X+Y)-h(X)\|
&\le
\frac{1}{2\pi}\oint_\Gamma|\rmd z||h(z)|
\sum_{n=1}^\infty
\|Y\|^n
\beta^{n+1}
\le
M
\frac{
\beta^2 \|Y\|
}{
1
-
\beta
\|Y\|
},
\label{eqn:Fbound}
\end{align}
provided that $\beta \|Y\|<1$.
\end{proof}

We apply Lemma~\ref{lem:MatrixFuncBound} to $\log z$ and to $g(z)$ defined in Eq.\ (\ref{eq:g{z}def}) to bound the BCH formula (\ref{eq:inverted-1-1}).
Let us try to get more informative expressions for $M$ for these specific functions, where we see how the spectrum of the input operator and the singularities of the stem function matter.

\begin{figure}
\centering
\begin{tabular}{l@{\qquad\qquad\quad}l}
\footnotesize(a)
&
\footnotesize(b)\\
\includegraphics[scale=0.35]{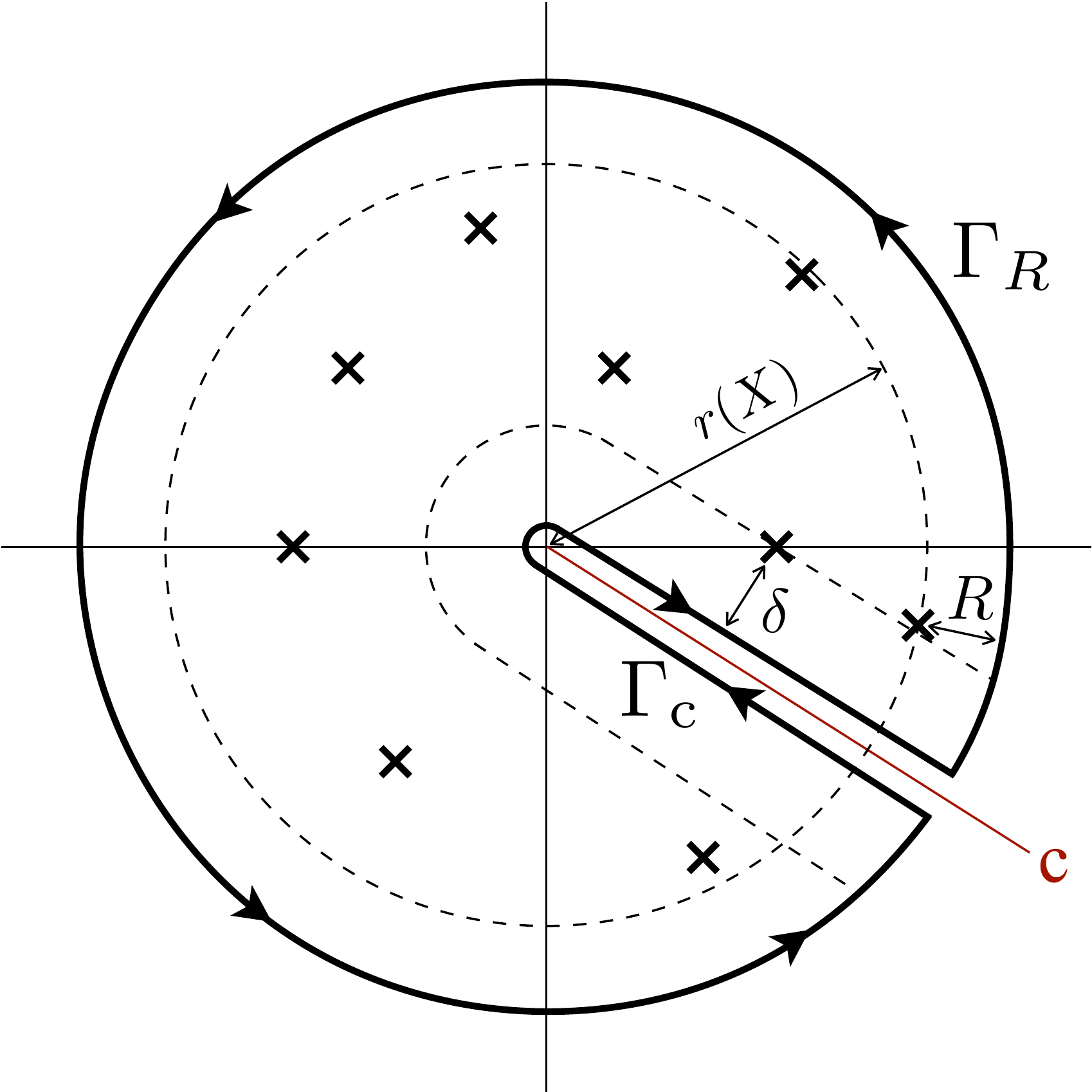}
&
\includegraphics[scale=0.35]{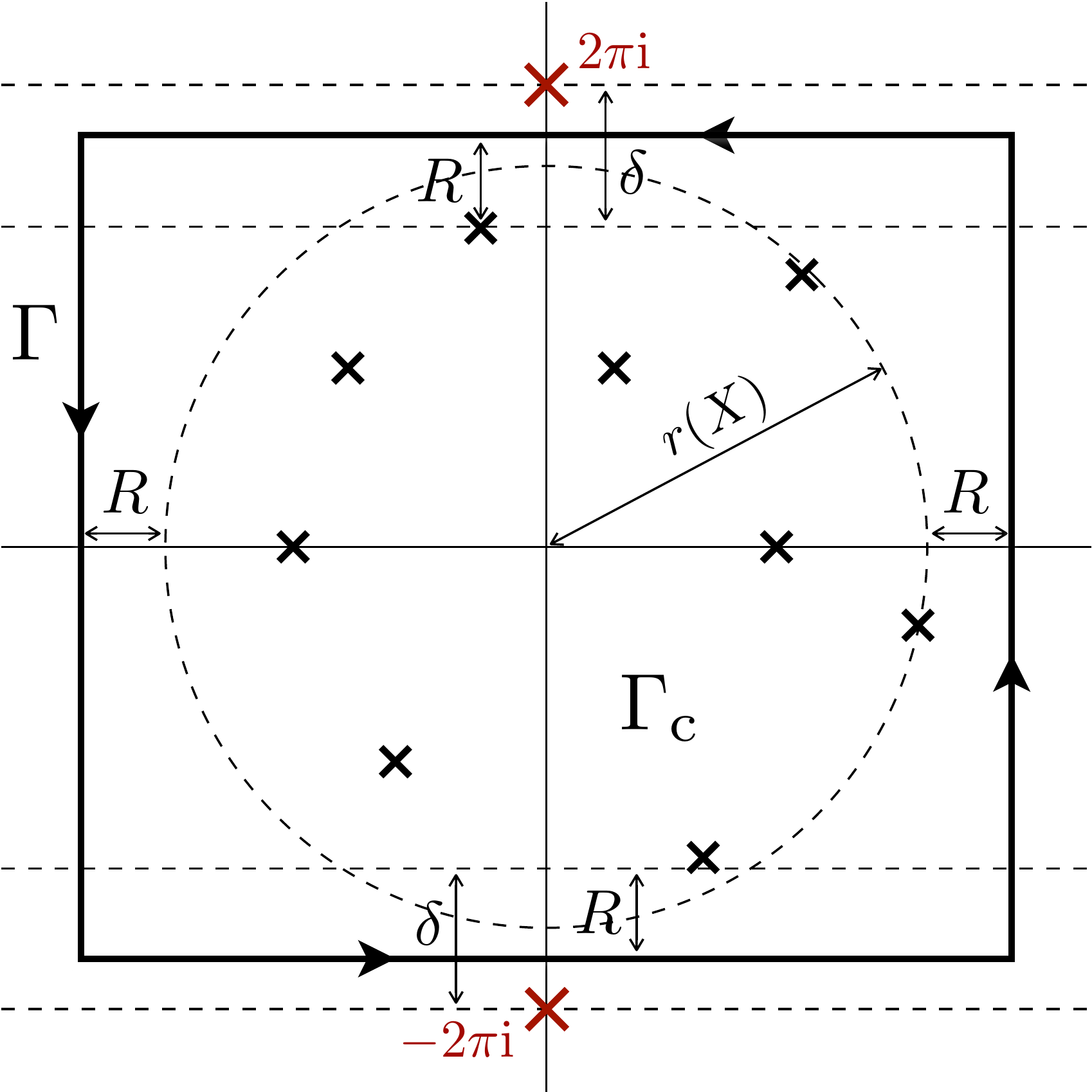}
\end{tabular}
\caption{(a) Contour $\Gamma$ to estimate a bound on $\|{\log(X+Y)-\log X}\|$. The crosses represent the eigenvalues $\{x_k\}$ of $X$, and $\mathrm{c}$ is the chosen  branch cut of $\log z$ (red thin straight line).
We take $R<\delta$ smaller than the gap $\delta$ between the spectrum of $X$ and the branch cut $\mathrm{c}$ (dashed curve running around the branch cut), and draw a contour $\Gamma$ (solid closed directed curve) which consists of a circle $\Gamma_R$ of radius $r(X)+R$, with $r(X)$ the spectral radius of $X$ (dashed circle), and a contour $\Gamma_\mathrm{c}$ going around the branch cut $\mathrm{c}$. (b) Contour $\Gamma$ to estimate a bound on $\|{g(X+Y)-g(X)}\|$ for $g(z)$ defined in Eq.\ (\ref{eq:g{z}def}). The black crosses represent the eigenvalues $\{x_k\}$ of $X$, while the red crosses are the two poles of $g(z)$ at $\pm2\pi\rmi$. For our purpose of bounding the BCH formula, the spectrum of $X$  is confined to a strip between the two poles, $\Im z\in(-2\pi +\delta ,2\pi - \delta)$, with a nonvanishing gap $\delta$ from the poles (dashed horizontal lines). We take $R<\delta$ and draw a rectangular contour $\Gamma$ with its horizontal lines running along $\Im z=\pm(2\pi-\delta+R)$ and its vertical lines along $\Re z=\pm[r(X)+R]$.}
\label{fig:ContLogCircleCut}
\end{figure}
In the case of $\log z$, we find it convenient to take care of its branch cut before applying Lemma~\ref{lem:MatrixFuncBound}.
We take the contour $\Gamma$ depicted in Fig.~\ref{fig:ContLogCircleCut}(a), which consists of a circle $\Gamma_R$ of radius $r(X)+R$, with $r(X)$ the spectral radius of $X$, and a contour $\Gamma_\mathrm{c}$ going around a branch cut $\mathrm{c}$.
Note that $R$ is bounded by the gap $\delta$ between the spectrum of $X$ 
and the branch cut $\mathrm{c}$, i.e.~$R<\delta= \operatorname{dist}(\mathrm{c},\operatorname{spec}(X))$. 
The integral along $\Gamma_\mathrm{c}$ 
simplifies, yielding
\begin{align}
\log(X+Y)-\log X
={}&
{-\sum_{n=1}^\infty}\rme^{\rmi\phi_\mathrm{c}}\int_0^{r(X)+R}\rmd y\,\frac{1}{y\rme^{\rmi\phi_\mathrm{c}}I-X}\left(Y\frac{1}{y\rme^{\rmi\phi_\mathrm{c}}I-X}\right)^n\nonumber\\
&{}+\sum_{n=1}^\infty\frac{1}{2\pi\rmi}\int_{\Gamma_R}\rmd z\log z\,\frac{1}{zI-X}\left(Y\frac{1}{zI-X}\right)^n
,
\label{eqn:LogDiff}
\end{align}
where $\phi_\mathrm{c}=\arg z$ along the branch cut $\mathrm{c}$. 
Each of the two contributions can be bounded as done in Lemma~\ref{lem:MatrixFuncBound}, and $\|{\log(X+Y)-\log X}\|$ is bounded by Eq.\ (\ref{eqn:BoundHPerturb}) with
\begin{equation}
M_{\log}(X,R)
=[r(X)+R]\left(
1+\sqrt{\log^2[r(X)+R]+\max[\phi_\mathrm{c}^2,(\phi_\mathrm{c}+2\pi)^2]}
\right)
\label{eqn:Mlog}
\end{equation}
in place of $M$.

On the other hand, for $g(z)$ defined in Eq.\ (\ref{eq:g{z}def}), let us take the rectangular contour $\Gamma$ 
shown in Fig.~\ref{fig:ContLogCircleCut}(b).
Note that the function $g(z)$ has poles at $z=2n\pi\rmi$ ($n\in\mathbb{Z}\setminus\{0\}$), while in bounding the BCH formula the spectrum of $X$ will be confined to the strip $\Im z \in(-2\pi,2\pi)$.
We take $R<\delta$, where $\delta=2\pi-\max_k|{\Im x_k}|$ is the gap between the spectral band and the poles at $\pm2\pi\rmi$, and let the horizontal lines of the rectangular contour $\Gamma$ run along $\Im z=\pm(2\pi-\delta+R)$ and the vertical lines along $\Re z=\pm[r(X)+R]$.
By using the following two bounds,
\begin{equation}
|g(x+\rmi y)|
=\sqrt{\frac{x^2+y^2}{1-2\rme^{-x}\cos y+\rme^{-2x}}}
\le
\begin{cases}
\medskip
\displaystyle
\frac{\sqrt{x^2+y^2}}{|1-\rme^{-x}|}
,\\
\displaystyle
\frac{\sqrt{x^2+y^2}}{\theta(\cos y)|{\sin y}|+\theta(-{\cos y})}
,
\end{cases}
\end{equation}
valid for any real $x$ and $y$, where $\theta(x)$ is the step function [$\theta(x)=1$ for $x>0$ and $0$ for $x<0$], 
we can easily  bound the integral in Eq.\ (\ref{eqn:Mlemma}) for $g(z)$ along the rectangular contour $\Gamma$ (the maximum of $|g(z)|$ multiplied by the path length along each of the four straight lines of the rectangular contour), 
and we get a bound (\ref{eqn:BoundHPerturb}) on $\|g(X+Y)-g(X)\|$ with
\begin{multline}
M_g(X,R)
=
\frac{2}{\pi}\sqrt{[r(X)+R]^2+(2\pi-\delta+R)^2}
\,\biggl(
\frac{2\pi-\delta+R}{2}\coth\frac{r(X)+R}{2}
\\
{}
+
\frac{r(X)+R}{\theta\bm{(}\cos(\delta-R)\bm{)}|{\sin(\delta-R)}|+\theta\bm{(}-{\cos(\delta-R)}\bm{)}}
\biggr)
\label{eqn:Mg}
\end{multline}
in place of $M$.

We can now proceed to bound the BCH formula.
To get a concise expression for the bound, we make use of a similarity transformation which brings $X$ into a Jordan normal form.
In the standard Jordan form, we put ``1''s next to the eigenvalues whose eigenspaces are not diagonalizable.
We point out that we can freely tune the similarity transformation so that the ``1''s are scaled to some positive constant $\nu$. 
Let us transform $X$ into such a Jordan form by a similarity transformation $T_\nu$,
\begin{equation}
\tilde{X}=T_\nu^{-1}XT_\nu=\sum_k(x_k\tilde{P}_k+\tilde{N}_k),
\label{eqn:TXT}
\end{equation} 
where $\{\tilde{P}_k\}$ are diagonal projections and $\{\tilde{N}_k\}$ are the nilpotents with entries $\nu$ or $0$ on the next diagonal.
Notice that the infinity norms (the largest singular values) of $\tilde{P}_k$ and $\tilde{N}_k$ are $\|\tilde{P}_k\|_\infty=1$ and $\|\tilde{N}_k\|_\infty=\nu$ or $0$, respectively.\footnote{The infinity norm of operator $X$ can be defined by $\|X\|_\infty=\sup_{\|v\|=1} \|X v\|$, through the Euclidean norm $\|v\|$ of vector $v$.}
This helps us simplify the bound.
The infinity norms before the similarity transformation are estimated as $\|P_k\|_\infty\le\|T_\nu^{-1}\|_\infty\|\tilde{P}_k\|_\infty\|T_\nu\|_\infty=\chi_\nu$ and $\|N_k\|_\infty\le\chi_\nu\nu$ or $0$, with $\chi_\nu=\|T_\nu^{-1}\|_\infty\|T_\nu\|_\infty\ge1$ called ``condition number'' of the similarity transformation $T_\nu$ \cite{ref:horn}.

We now present a bound on the BCH formula (\ref{eq:inverted-1-1}).
\begin{prop}[Bounding the BCH formula]\label{prop:BoundBCH}
Let $X$ and $Y$ be operators on a $D$-di\-men\-sion\-al Banach space, with the spectrum of $X$, $\operatorname{spec}(X)=\{x_k\}$,  confined within a strip $\Im z\in(\phi_\mathrm{c},\phi_\mathrm{c}+2\pi)$ for some $\phi_\mathrm{c}$.
We choose a primary logarithm such that $\log\rme^X=X$.
Then, for small enough $t\ge0$, the correction $W(t)$ in the BCH formula 
\begin{equation}
Z(t)
=\log(\rme^X\rme^{tY})
=
X
+
tg(\ad_X)(Y)
+W(t)
\end{equation}
is bounded by
\begin{equation}
\|W(t)\|_\infty
\le 
\frac{
(32M^2D^9\rme^{\alpha+\nu}/R^4)
t^2\chi_\nu^3\|Y\|_\infty^2\rme^{t\chi_\nu\|Y\|_\infty}
}{
1
-(1+8MD^4/R^2)
(2D^2\rme^{\alpha+\nu}/R)t\chi_\nu\|Y\|_\infty\rme^{t\chi_\nu\|Y\|_\infty}
},
\label{eqn:BoundW}
\end{equation}
where $g(z)$ is the function defined in Eq.\ (\ref{eq:g{z}def}), $R$ is a positive constant fulfilling 
\begin{equation}
R<\delta_1,\delta_2
\end{equation}
for the gaps
\begin{equation}
\delta_1=\min_k\min_{y\ge0}|\rme^{x_k}-y\rme^{\rmi\phi_\mathrm{c}}|,\qquad
\delta_2=2\pi-\max_{k,\ell}\Im(x_k-x_\ell),
\end{equation}
and
\begin{equation}
M=\max[M_{\log}(\rme^X,R),M_g(\ad_X,R)],
\end{equation}
with $M_{\log}$ and $M_g$ defined in Eqs.\ (\ref{eqn:Mlog}) and (\ref{eqn:Mg}), respectively.
$\chi_\nu$ is the condition number of the similarity transformation turning $X$ into a Jordan form with $\nu$ in the next diagonal in the nilpotents, and $\nu$ is tuned so that 
\begin{equation}
\nu\rme^{\alpha+\nu},2D\nu\le R/2,\qquad\alpha=\max_k\Re x_k
\label{eqn:CondNu}
\end{equation}
are satisfied.
$t$ should be small enough to satisfy 
\begin{equation}
(1+8MD^4/R^2)(2D^2\rme^{\alpha+\nu}/R)t\chi_\nu\|Y\|_\infty \rme^{t\chi_\nu\|Y\|_\infty}<1.
\end{equation}
\end{prop}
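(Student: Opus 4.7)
The plan is to start from the integral form established in Step~3 of the proof of Lemma~\ref{thm:kicktofield}: taking $n=1$ and $L\to Y$ there gives $Z(0)=X$, $Z'(s)=g(\ad_{Z(s)})(Y)$, and hence
\begin{equation}
W(t)=Z(t)-X-tg(\ad_X)(Y)=\int_0^t \d s\,\bigl[g(\ad_{Z(s)})-g(\ad_X)\bigr](Y).
\end{equation}
So $\|W(t)\|_\infty$ will be of order $t^2\|Y\|_\infty^2$ as long as we can show $\|g(\ad_{Z(s)})-g(\ad_X)\|_\infty=\mathcal{O}(s\|Y\|_\infty)$. The strategy is to apply Lemma~\ref{lem:MatrixFuncBound} twice: first with $h(z)=\log z$ at the base point $\rme^X$ and perturbation $\rme^X(\rme^{sY}-I)$ to bound $\|Z(s)-X\|_\infty$, and then with $g(z)$ at $\ad_X$ and perturbation $\ad_{Z(s)-X}$ to bound $\|g(\ad_{Z(s)})-g(\ad_X)\|_\infty$.

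To make the nilpotent-related $\beta$-parameters in those two applications clean, the estimates will be performed in the Jordan basis $\tilde X=T_\nu^{-1}XT_\nu$ of Eq.~\eqref{eqn:TXT}, in which each superdiagonal $1$ is rescaled to $\nu$. Setting $\tilde Y=T_\nu^{-1}YT_\nu$ and $\tilde Z(s)=T_\nu^{-1}Z(s)T_\nu=\log(\rme^{\tilde X}\rme^{s\tilde Y})$, the spectral projections of $\tilde X$ have unit infinity-norm and its nilpotents have norm at most $\nu$. Consequently, $\rme^{\tilde X}$ has spectrum $\{\rme^{x_k}\}$ (distant at least $R$ from the branch cut since $R<\delta_1$), projections of norm $1$, and nilpotent part of norm at most $\rme^\alpha(\rme^\nu-1)\le\nu\rme^{\alpha+\nu}\le R/2$ by the first of the hypotheses~\eqref{eqn:CondNu}; this forces the $\beta$ of Lemma~\ref{lem:MatrixFuncBound} for $\log$ at $\rme^{\tilde X}$ to be at most $2D/R$. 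A similar analysis of the Jordan form of the superoperator $\ad_{\tilde X}$ (eigenvalues $x_k-x_\ell$ inside the strip $|\Im z|<2\pi-\delta_2$ and nilpotent part of norm at most $2\nu$ on the $D^2$-dimensional commutator space), together with the second hypothesis $2D\nu\le R/2$, gives a $\beta$ for $g$ at $\ad_{\tilde X}$ bounded by a constant times $D^2/R$. The quantities $M_{\log}$ and $M_g$ depend only on the spectrum and on $R$, hence are similarity-invariant, so we may use $M$ as defined in the statement.

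With these in hand, I would estimate $\|\rme^{\tilde X}(\rme^{s\tilde Y}-I)\|_\infty\le\rme^{\alpha+\nu}\cdot s\|\tilde Y\|_\infty\rme^{s\|\tilde Y\|_\infty}$ from the block-diagonal bound $\|\rme^{\tilde X}\|_\infty\le\rme^{\alpha+\nu}$ and the mean-value inequality $\|\rme^{sA}-I\|_\infty\le s\|A\|_\infty\rme^{s\|A\|_\infty}$ (exactly as used in Lemma~\ref{lem:BBperturb}), and feed this into Lemma~\ref{lem:MatrixFuncBound} to obtain
\begin{equation}
\|\tilde Z(s)-\tilde X\|_\infty\le \frac{M(2D/R)^2\,\rme^{\alpha+\nu}s\|\tilde Y\|_\infty\rme^{s\|\tilde Y\|_\infty}}{1-(2D/R)\rme^{\alpha+\nu}s\|\tilde Y\|_\infty\rme^{s\|\tilde Y\|_\infty}}.
\end{equation}
Using $\|\ad_{\tilde Z(s)-\tilde X}\|_\infty\le 2\|\tilde Z(s)-\tilde X\|_\infty$, feeding this into a second application of Lemma~\ref{lem:MatrixFuncBound} for $g$ at $\ad_{\tilde X}$, multiplying by $\|\tilde Y\|_\infty$, and integrating $s$ from $0$ to $t$ yields a bound on $\|\tilde W(t)\|_\infty$ of order $t^2\|\tilde Y\|_\infty^2$, with a denominator built from the two geometric-series factors. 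Transforming back via $\|W(t)\|_\infty\le\chi_\nu\|\tilde W(t)\|_\infty$ and using $\|\tilde Y\|_\infty\le\chi_\nu\|Y\|_\infty$ converts the two numerator $\|\tilde Y\|_\infty$'s plus the final similarity into $\chi_\nu^3\|Y\|_\infty^2$; collecting all the explicit $D,R,M,\alpha,\nu$ factors reproduces~\eqref{eqn:BoundW}, with the compound denominator corresponding exactly to the smallness condition on $t$ stated at the end.

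The main obstacle is the second application of Lemma~\ref{lem:MatrixFuncBound}: deriving a clean $\mathcal{O}(D^2/R)$ bound on the $\beta$-parameter of $g$ at $\ad_{\tilde X}$ requires understanding how the Jordan structure of $\tilde X$ lifts to that of the superoperator $\ad_{\tilde X}$ on the $D^2$-dimensional commutator space, and in particular bounding the norms of its spectral projections and nilpotent parts in terms of $\nu$. This combinatorial analysis is exactly what the hypothesis $2D\nu\le R/2$ is engineered to tame. Once the two $\beta$-bounds are in place, composing the two perturbative estimates and tracking the numerical constants is purely mechanical algebra and produces the stated form~\eqref{eqn:BoundW}.
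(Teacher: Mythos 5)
Your route is the paper's route: the paper also writes $W(t)=\int_0^t\rmd s\,[g(\ad_{Z(s)})-g(\ad_X)](Y)$, passes to the rescaled Jordan basis $\tilde X=T_\nu^{-1}XT_\nu$, and applies Lemma~\ref{lem:MatrixFuncBound} twice, first to $\log$ at $\rme^{\tilde X}$ and then to $g$ at $\ad_{\tilde X}$, with the two conditions in Eq.~\eqref{eqn:CondNu} serving precisely to force the ratios $N/R$ in the two $\beta$'s below $1/2$. The one step you defer as the ``main obstacle'' is where the paper does its only real work: it constructs the spectral representation of $\ad_{\tilde X}$ explicitly from that of $\tilde X$, obtaining eigenvalues $x_k-x_\ell$, spectral projections $\tilde{\mathcal P}_m=\sum_{k,\ell}\delta_{\lambda_m,x_k-x_\ell}\tilde P_k\bullet\tilde P_\ell$ and nilpotents $\sum_{k,\ell}\delta_{\lambda_m,x_k-x_\ell}(\tilde N_k\bullet\tilde P_\ell-\tilde P_k\bullet\tilde N_\ell)$, whence $\|\tilde{\mathcal P}_m\|_{\infty-\infty}\le D$ and $\|\tilde{\mathcal N}_m\|_{\infty-\infty}\le 2D\nu$ (not $2\nu$: the sum over coincident eigenvalue differences can have up to $D$ terms). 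Because the projections are not norm-one, the resulting $\beta$ on the $D^2$-dimensional commutator space is $2D^3/R$, not $\mathcal{O}(D^2/R)$ as you guessed; this extra factor of $D$ is needed to land on the $D^4$ and $D^9$ in Eq.~\eqref{eqn:BoundW}. Similarly, the stated constant uses the cruder bound $\|\rme^{\tilde X}\|_\infty\le D\rme^{\alpha+\nu}$ rather than your $\rme^{\alpha+\nu}$, which is why the factor $2D^2\rme^{\alpha+\nu}/R$ (rather than $2D\rme^{\alpha+\nu}/R$) appears. These are bookkeeping corrections rather than conceptual ones; once they are in place your composition of the two perturbative estimates reproduces the proof.
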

\begin{proof}
We start by preparing the spectral representations of $\rme^{\tilde{X}}$ and $\ad_{\tilde{X}}$ on the basis of the spectral representation of $\tilde{X}$ in Eq.\ (\ref{eqn:TXT}).

\paragraph{Spectral representation of $\bm{\rme^{\tilde{X}}}$:}
The spectral representation of $\rme^{\tilde{X}}$ can be constructed from the spectral representation of $\tilde{X}$ in Eq.\ (\ref{eqn:TXT}) as
\begin{equation}
\rme^{\tilde{X}}
=\sum_k\rme^{x_k}\sum_{q=0}^{n_k-1}\frac{1}{q!}\tilde{N}_k^q\tilde{P}_k
=\sum_k(\rme^{x_k}\tilde{P}_k+\tilde{N}_k^{(\rme)}),
\end{equation}
where the spectral projections of $\rme^{\tilde{X}}$ are identical with those of $\tilde{X}$ while the nilpotents $\tilde{N}_k^{(\rme)}$ of $\rme^{\tilde{X}}$ are given in terms of the nilpotents $\tilde{N}_k$ of $\tilde{X}$ by
\begin{equation}
\tilde{N}_k^{(\rme)}= \rme^{x_k}\sum_{q=1}^{n_k-1}\frac{1}{q!}\tilde{N}_k^q.
\end{equation}
Note that $\rme^{x_k}\neq\rme^{x_\ell}$ for $k\neq\ell$ thanks to the restriction $\Im x_k\in(\phi_\mathrm{c},\phi_\mathrm{c}+2\pi)$.
The nilpotents $\tilde{N}_k^{(\rme)}$ are bounded by
\begin{equation}
\|\tilde{N}_k^{(\rme)}\|_\infty
\le|\rme^{x_k}|\sum_{q=1}^{n_k-1}\frac{1}{q!}\nu^q
\le|\rme^{x_k}|(\rme^{\nu}-1)
\le|\rme^{x_k}|\nu\rme^\nu
\le\nu\rme^{\alpha+\nu},
\end{equation}
while the exponential $\rme^{\tilde{X}}$ itself is bounded by
\begin{equation}
\|\rme^{\tilde{X}}\|_\infty
\le
\sum_k|\rme^{x_k}|\sum_{q=0}^{n_k-1}\frac{1}{q!}\nu^q
\le D\rme^{\alpha+\nu}.
\end{equation}

\paragraph{Spectral representation of $\bm{\ad_{\tilde{X}}}$:}
We can also construct the spectral representation of $\ad_{\tilde{X}}=[\tilde{X},{}\bullet{}]$ from the spectral representation of $\tilde{X}$ in Eq.\ (\ref{eqn:TXT}) as
\begin{align}
\ad_{\tilde{X}}
&=
\sum_{k,\ell}(x_k\tilde{P}_k
+
\tilde{N}_k)
{}\bullet{}\tilde{P}_\ell
-
\sum_{k,\ell}\tilde{P}_k{}\bullet{}(x_\ell\tilde{P}_\ell
-
\tilde{N}_\ell
)
\nonumber\\
&=
\sum_{k,\ell}(x_k-x_\ell)\tilde{P}_k{}\bullet{}\tilde{P}_\ell
+
\sum_{k,\ell}
(
\tilde{N}_k{}\bullet{}\tilde{P}_\ell
-
\tilde{P}_k{}\bullet{}\tilde{N}_\ell
)
\nonumber\\
&=
\sum_m(\lambda_m\tilde{\mathcal{P}}_m+\tilde{\mathcal{N}}_m),
\end{align}
where the spectrum $\{\lambda_m\}$ of $\ad_{\tilde{X}}$ is given by the differences $\{x_k-x_\ell\}$ among the eigenvalues $\{x_k\}$ of $\tilde{X}$, and the spectral projections $\{\tilde{\mathcal{P}}_m\}$ and the nilpotents $\{\tilde{\mathcal{N}}_m\}$ are given by 
\begin{equation}
\tilde{\mathcal{P}}_m
=\sum_{k,\ell}\delta_{\lambda_m ,x_k-x_\ell}\tilde{P}_k{}\bullet{}\tilde{P}_\ell,\quad
\mathcal{N}_m
=\sum_{k,\ell}\delta_{\lambda_m ,x_k-x_\ell}
(
\tilde{N}_k{}\bullet{}\tilde{P}_\ell
-
\tilde{P}_k{}\bullet{}\tilde{N}_\ell
).
\end{equation}
Note that $\Im\lambda_m\in(-2\pi,2\pi)$ due to the restriction $\Im x_k\in(\phi_\mathrm{c},\phi_\mathrm{c}+2\pi)$.
In this proof, we use the norm induced by the infinity norm ($\infty$--$\infty$ norm) $
\|\mathcal{A}\|_{\infty-\infty}	
=\sup_{\|X\|_\infty=1}\|\mathcal{A}(X)\|_\infty
$
for superoperators.
In this norm, the spectral projections $\tilde{\mathcal{P}}_m$ and the nilpotents $\tilde{\mathcal{N}}_m$ are bounded by
\begin{equation}
\|\tilde{\mathcal{P}}_m\|_{\infty-\infty}
\le
\sum_{k,\ell}\delta_{\lambda_m ,x_k-x_\ell}
\le D,\quad
\|\tilde{\mathcal{N}}_m\|_{\infty-\infty}
\le
2\nu\sum_{k,\ell}\delta_{\lambda_m ,x_k-x_\ell}
\le 2D\nu.
\end{equation}
Since $\nu$ is tuned to satisfy the condition (\ref{eqn:CondNu}), we have
\begin{equation}
\|\tilde{N}_k^{(\rme)}\|_\infty/R\le1/2,
\qquad
\|\tilde{\mathcal{N}}_m\|_{\infty-\infty}/R\le1/2.
\label{eqn:NNBounds}
\end{equation}

\paragraph{Bounding $\bm{\|\tilde{Z}(t)-\tilde{X}\|}$:}
We are now ready to bound the correction $W(t)$.
For $\tilde{Y}=T_\nu^{-1}YT_\nu$, we have
\begin{equation}
\|\rme^{t\tilde{Y}}-I\|
\le
\rme^{t\|\tilde{Y}\|}-1
\le
t\|\tilde{Y}\|\rme^{t\|\tilde{Y}\|},
\qquad
\|\tilde{Y}\|_\infty\le\chi_\nu\|Y\|_\infty.	
\end{equation}
Then, for $\tilde{Z}(t)=\log(\rme^{\tilde{X}}\rme^{t\tilde{Y}})$, we use Lemma~\ref{lem:MatrixFuncBound} to bound
\begin{align}
\|\tilde{Z}(t)-\tilde{X}\|_\infty
&=\|{\log(\rme^{\tilde{X}}\rme^{t\tilde{Y}})-\tilde{X}}\|_\infty
\nonumber\\
&=\|{\log[\rme^{\tilde{X}}+\rme^{\tilde{X}}(\rme^{t\tilde{Y}}-I)]-\log\rme^{\tilde{X}}}\|_\infty
\nonumber\\
&\le
M\frac{
(2D/R)^2\|\rme^{\tilde{X}}(\rme^{t\tilde{Y}}-I)\|_\infty
}{
1-(2D/R)\|\rme^{\tilde{X}}(\rme^{t\tilde{Y}}-I)\|_\infty
}
\nonumber\\
&\le
(2MD/R)\frac{
(2D^2\rme^{\alpha+\nu}/R)t\|\tilde{Y}\|_\infty\rme^{t\|\tilde{Y}\|_\infty}
}{
1-(2D^2\rme^{\alpha+\nu}/R)t\|\tilde{Y}\|_\infty\rme^{t\|\tilde{Y}\|_\infty}
},
\end{align}
where we have bounded as $[1-(N/R)^D]/(1-N/R)\le2$ for $N=\max_k\|\tilde{N}_k^{(\rme)}\|_\infty$, under the condition in Eq.\ (\ref{eqn:NNBounds}).

\paragraph{Bounding $\bm{\|g(\ad_{\tilde{Z}(t)})-g(\ad_{\tilde{X}})\|}$:}
Next, by noting
\begin{equation}
\|{\ad_{\tilde{Z}(t)}-\ad_{\tilde{X}}}\|_{\infty-\infty}
\le
2\|\tilde{Z}(t)-\tilde{X}\|_\infty,
\end{equation}
we use Lemma~\ref{lem:MatrixFuncBound} to bound
\begin{align}
\|g(\ad_{\tilde{Z}(t)})-g(\ad_{\tilde{X}})\|_{\infty-\infty}
&\le 
M\frac{
(2D^3/R)^2\|{\ad_{\tilde{Z}(t)}-\ad_{\tilde{X}}}\|_{\infty-\infty}
}{
1-(2D^3/R)\|{\ad_{\tilde{Z}(t)}-\ad_{\tilde{X}}}\|_{\infty-\infty}
}
\nonumber\\
&\le 
(2MD^3/R)\frac{
(8MD^4/R^2)
\frac{
(2D^2\rme^{\alpha+\nu}/R)t\|\tilde{Y}\|_\infty\rme^{t\|\tilde{Y}\|_\infty}
}{
1-(2D^2\rme^{\alpha+\nu}/R)t\|\tilde{Y}\|_\infty\rme^{t\|\tilde{Y}\|_\infty}
}
}{
1
-(8MD^4/R^2)
\frac{
(2D^2\rme^{\alpha+\nu}/R)t\|\tilde{Y}\|_\infty\rme^{t\|\tilde{Y}\|_\infty}
}{
1-(2D^2\rme^{\alpha+\nu}/R)t\|\tilde{Y}\|_\infty\rme^{t\|\tilde{Y}\|_\infty}
}
}.
\end{align}
Therefore, 
\begin{align}
W(t)
&=\int_0^t\rmd s\,[g(\ad_{Z(s)})-g(\ad_X)](Y)
\nonumber\\
&=T_\nu^{-1}\left(
\int_0^t\rmd s\,[g(\ad_{\tilde{Z}(s)})-g(\ad_{\tilde{X}})](\tilde{Y})
\right)
T_\nu
\end{align}
is bounded by Eq.\ (\ref{eqn:BoundW}).
\end{proof}
\begin{remark}
The conditions (\ref{eqn:CondNu}) can be relaxed to $\nu\rme^{\alpha+\nu},2D\nu<R$, but to get a simpler expression for the bound we have strengthened the conditions to Eq.\ (\ref{eqn:CondNu}). 
\end{remark}
\begin{remark}
If $X$ is diagonalizable, we can set $\nu=0$ and replace $R/2$ with $R$ in the bound (\ref{eqn:BoundW}).
If $X$ can be diagonalized by a unitary transformation, we can further set $\chi_\nu=1$.
\end{remark}

\section{Proof of Lemma~\ref{prop:Peripheral}}
\label{sec:ProofPeripheral}
Here we prove one of our key lemmas, Lemma~\ref{prop:Peripheral}.
\begin{proof}[Proof of Lemma~\ref{prop:Peripheral}]
We split $E_n$ into two parts by the projection $P$ as
\begin{equation}
E_n=PE_nP+E_n',
\end{equation}
and consider the expansion
\begin{align}
E_n^n
={}&(PE_nP+E_n')^n
\nonumber
\\
={}&(PE_nP)^n
+\sum_{\ell=1}^{\lfloor\frac{n}{2}\rfloor}R_{n,2\ell}
+\sum_{\ell=1}^{\lfloor\frac{n}{2}\rfloor}R_{n,2\ell}'
+\sum_{\ell=1}^{\lfloor\frac{n-1}{2}\rfloor}R_{n,2\ell+1}
+\sum_{\ell=1}^{\lfloor\frac{n-1}{2}\rfloor}R_{n,2\ell+1}'
+E_n'^n
\label{eqn:Expansion}
\end{align}
with
\begin{align}
R_{n,2\ell}
&=
\mathop{\mathop{\sum}_{k, k'\in\mathbb{N}^\ell}}_{|k| + |k'|=n}
(PE_nP)^{k_1}E_n'^{k_1'}(PE_nP)^{k_2}E_n'^{k_2'}\cdots(PE_nP)^{k_\ell}E_n'^{k_\ell'},
\displaybreak[0]\\
R_{n,2\ell}'
&=
\mathop{\mathop{\sum}_{k, k'\in\mathbb{N}^\ell}}_{|k| + |k'|=n}
E_n'^{k_1'}(PE_nP)^{k_1}E_n'^{k_2'}(PE_nP)^{k_2}\cdots E_n'^{k_\ell'}(PE_nP)^{k_\ell},
\displaybreak[0]\\
R_{n,2\ell+1}
&=
\mathop{\mathop{\sum}_{k \in \mathbb{N}^{\ell +1},\, k'\in\mathbb{N}^\ell}}_{|k| + |k'|=n}
(PE_nP)^{k_1}E_n'^{k_1'}(PE_nP)^{k_2}E_n'^{k_2'}\cdots E_n'^{k_\ell'}(PE_nP)^{k_{\ell+1}},
\displaybreak[0]\\
R_{n,2\ell+1}'
&=
\mathop{\mathop{\sum}_{k \in \mathbb{N}^{\ell},\, k'\in\mathbb{N}^{\ell+1}}}_{|k| + |k'|=n}
E_n'^{k_1'}(PE_nP)^{k_1}E_n'^{k_2'}(PE_nP)^{k_2}\cdots(PE_nP)^{k_\ell}E_n'^{k_{\ell+1}'},
\end{align}
where $k = (k_1, \dots ,k_j)$ and $k' = (k_1', \dots ,k_j')$ are multi-indices of  integers with $j=\ell$ or $\ell+1$, $|k| = k_1 + \cdots + k_j$ denotes the order of $k$, and $\lfloor x\rfloor$ denotes the greatest integer less than or equal to $x$.
We are going to show that the corrections to the first contribution $(PE_nP)^n$ in Eq.\ (\ref{eqn:Expansion}) accumulate only up to $\mathcal{O}(1/n)$.

We first note that, since $PE_n'P=0$ by construction, the contributions with $k_i'=1$ for some $i$ between two $(PE_nP)$'s vanish and do not contribute to $R_m$ or $R_m'$.
Therefore, $k_i'\ge2$ for any $i$ sandwiched by $(PE_nP)$'s.
Second, $P E_n=E_n P +\mathcal{O}(1/n)$ [Eq.~\eqref{eqn:FamilyAsympMatrix} in condition 1] implies 
\begin{equation}
PE_nP = P E_n + \mathcal{O}(1/n) = E_n P + \mathcal{O}(1/n),
\end{equation} 
and hence, since $E'_n = E_n - P E_n P$,
\begin{equation}
P E_n' = \mathcal{O}(1/n), \qquad E_n' P = \mathcal{O}(1/n).
\label{eqn:Contact}
\end{equation}
Therefore, each contact between $PE_nP$ and $E_n'$ yields $\mathcal{O}(1/n)$, and we are advised to count the number of the contacts in the corrections to see the orders of the contributions: 
$R_{2\ell}$ and $R_{2\ell}'$ have $(2\ell-1)$ contacts, while $R_{2\ell+1}$ and $R_{2\ell+1}'$ have $2\ell$ contacts.
Third, since $\|PE_nP\|$ and $\|E_n'\|$ can be greater than $1$, it is not helpful to bound the corrections like $\|(PE_nP)^{k_1}E_n'^{k_1'}\cdots\|\le\|PE_nP\|^{k_1}\|E_n'\|^{k_1'}\cdots$.
On the other hand, we have the bounds (\ref{eqn:Bound1}) and (\ref{eqn:Bound2}) (conditions 2 and 3), which facilitate bounding the corrections.

Let us start estimating the corrections.
Thanks to Eq.\ (\ref{eqn:Bound2}) (condition 3), the last correction in Eq.\ (\ref{eqn:Expansion}) is bounded by
\begin{equation}
	\|E_n'^n\|
	\le K \mu^n ,
\end{equation}
which is $o(1/n)$ as $n\to+\infty$.

Let us set 
\begin{equation}
\label{eq:C_ndef}
C_n = \max\{ \|E_n' P \|, \| P E_n' \|\} = \mathcal{O}(1/n),
\end{equation}
where we have used Eq.~\eqref{eqn:Contact}. By the bounds (\ref{eqn:Bound1}) and (\ref{eqn:Bound2}) (conditions~3 and~4), the correction $R_{n,2\ell}$ is bounded as
\begin{align}
\|R_{n,2\ell}\|
&\le
\mathop{\sum_{k\in \mathbb{N}^\ell , \, k' \geq (2,\ldots,2,1) }}_{|k|+|k'|=n}
	\|(PE_nP)^{k_1}\|
	\| P E_n'\|
	\|E_n'^{k_1'-2}\|
	\|E_n' P\|
	\|(PE_nP)^{k_2}\|
	\cdots{}
\nonumber\displaybreak[0]\\[-8truemm]
&\qquad\qquad\qquad\qquad\qquad\qquad\qquad
	{}\cdots
	\|E_n'^{k_{\ell-1}'-2}\|
	\|E_n' P \|
	\|(PE_nP)^{k_\ell}\|
	\| P E_n'\|
	\|E_n'^{k_\ell'-1}\|
	\vphantom{\mathop{\sum\cdots\sum}_{k_1+\cdots+k_{m+1}=n}}
\nonumber\displaybreak[0]\\
&\le
	M^\ell K^\ell
	C_n^{2\ell - 1}
\mathop{\sum_{k\in \mathbb{N}^\ell , \, k' \geq (2,2,\dots,1)}}_{|k|+|k'|=n}
	\mu^{|k'| -2 \ell + 1}
	\nonumber\displaybreak[0]\\
&=
	M^\ell K^\ell
	C_n^{2\ell - 1}
 \mathop{\sum_{k'\in \mathbb{N}^\ell}} _{|k'|\le n-2\ell+1}
	\mu^{|k'| - \ell } \mathop{\sum_{k\in \mathbb{N}^\ell}}_{|k| = n - |k'| - \ell + 1} 1
	\nonumber\displaybreak[0]\\
&=
	M^\ell K^\ell
	C_n^{2\ell - 1}
\mathop{\sum_{k'\in \mathbb{N}^\ell}} _{|k'|\le n-2\ell+1}
\begin{pmatrix}
\medskip
	n-|k'|-\ell\\
	\ell-1
\end{pmatrix}
	\mu^{|k'|- \ell}
	\nonumber\displaybreak[0]\\
&\le
	M^\ell K^\ell
	C_n^{2\ell - 1}
\frac{n^{\ell-1}}{(\ell-1)!}\mathop{\sum_{k'\in \mathbb{N}^\ell}} _{|k'|\le n-2\ell+1}
	\mu^{|k'|- \ell}
	\nonumber\displaybreak[0]\\
	&\le
	M^\ell K^\ell
	C_n^{2\ell - 1}
\frac{n^{\ell-1}}{(\ell-1)!}
\frac{1}{(1-\mu)^\ell}
	\nonumber\displaybreak[0]\\
&
\equiv a_{n,\ell},
\end{align}
where we have used the identities
\begin{equation}
\mathop{\sum_{k\in \mathbb{N}^\ell}}_{|k| = m} 1 = 
\begin{pmatrix}
	m -1 \\
	\ell-1
\end{pmatrix} ,
\qquad
\sum_{k\in \mathbb{N}^\ell} \mu^{|k|-\ell} = \biggl(\sum_{j\geq 0} \mu^j\biggr)^\ell = \frac{1}{(1-\mu)^\ell},
\end{equation}
and  the inequality
\begin{equation}
\begin{pmatrix}
	n-m \\
	j
\end{pmatrix}
\le\frac{n^{j}}{j!}.
\end{equation}

The correction $R_{n,2\ell}'$ is bounded as
\begin{align}
\|R_{n,2\ell}'\|
&\le
\mathop{\sum_{k\in \mathbb{N}^\ell , \, k' \geq (1,2,\dots,2) }}_{|k|+|k'|=n}
	\|E_n'^{k_1'-1}\|
	\|E_n' P \|
	\|(PE_nP)^{k_1}\|
	\|P E_n'\|
	\|E_n'^{k_2'-2}\|
	\cdots{}
\nonumber
\\[-8truemm]
&\qquad\qquad\qquad\qquad\qquad\qquad
	{}\cdots
	\|(PE_nP)^{k_{\ell-1}}\|
	\| P E_n'\|
	\|E_n'^{k_\ell'-2}\|
	\|E_n' P \|
	\|(PE_nP)^{k_\ell}\|
	\vphantom{\mathop{\sum\cdots\sum}_{k_1+\cdots+k_{m+1}=n}}
\nonumber\displaybreak[0]\\
&\le
	M^\ell K^\ell
	C_n^{2\ell - 1}
\mathop{\sum_{k\in \mathbb{N}^\ell , \, k' \geq (1,2,\dots,2)}}_{|k|+|k'|=n}
	\mu^{|k'| -2 \ell + 1}
	\nonumber\displaybreak[0]\\
&=
	M^\ell K^\ell
	C_n^{2\ell - 1}
 \mathop{\sum_{k'\in \mathbb{N}^\ell}} _{|k'|\le n-2\ell+1}
	\mu^{|k'| - \ell } \mathop{\sum_{k\in \mathbb{N}^\ell}}_{|k| = n - |k'| - \ell + 1} 1
	\nonumber\displaybreak[0]\\
	&
\le a_{n,\ell},
\end{align}
 as for $\|R_{n,2\ell}\|$.

The correction $R_{n,2\ell+1}$ is bounded as
\begin{align}
\|R_{n,2\ell+1}\|
&\le
\mathop{\sum_{k\in\mathbb{N}^{\ell +1},\, k' \ge (2,\ldots,2)}}_{|k|+|k'|=n}
	\|(PE_nP)^{k_1}\|
	\|P E_n'\|
	\|E_n'^{k_1'-2}\|
	\|E_n' P\|
	\|(PE_nP)^{k_2}\|
	\cdots{}
\nonumber\displaybreak[0]\\[-8truemm]
&\qquad\qquad\qquad\qquad\qquad\quad
	{}\cdots
	\|(PE_nP)^{k_\ell}\|
	\| P E_n'\|
	\|E_n'^{k_\ell'-2}\|
	\|E_n' P \|
	\|(PE_nP)^{k_{\ell+1}}\|
	\vphantom{\mathop{\sum\cdots\sum}_{k_1+\cdots+k_{m+1}=n}}
\nonumber\displaybreak[0]\\
&\le
	M^{\ell+1}K^\ell
	C_n^{2\ell}
\mathop{\sum_{k\in\mathbb{N}^{\ell +1},\, k' \ge (2,\dots,2)}}_{|k|+|k'|=n}
	\mu^{|k'|- 2\ell}
\nonumber\displaybreak[0]\\
&=
	M^{\ell+1} K^\ell
	C_n^{2\ell}
 	\mathop{\sum_{k'\in \mathbb{N}^\ell}}_{|k'|\le n-2\ell-1}
	\mu^{|k'| - \ell } \mathop{\sum_{k\in \mathbb{N}^{\ell+1}}}_{|k| = n - |k'| - \ell} 1
	\nonumber\displaybreak[0]\\
&=
	M^{\ell+1}K^\ell
	C_n^{2\ell}
	\mathop{\sum_{k'\in \mathbb{N}^\ell}}_{|k'|\le n-2\ell-1} 
	\begin{pmatrix}
	\medskip
	n-|k'|-\ell-1\\
	\ell
	\end{pmatrix}
	\mu^{|k'| - \ell }
	\nonumber\displaybreak[0]\\
&\le
	M^{\ell+1}K^\ell
	C_n^{2 \ell}
	\frac{n^\ell}{\ell!}
	\frac{1}{(1-\mu)^\ell}
	\nonumber\displaybreak[0]\\
	&
\equiv b_{n,\ell}.
\end{align}

Finally, the correction $R_{n,2\ell+1}'$ is bounded as
\begin{align}
\|R_{n,2\ell+1}'\|
&\le
\mathop{\sum_{k\in\mathbb{N}^\ell, \, k' \geq (1,2,\dots,2,1)}}_{|k| + |k'| = n}
	\|E_n'^{k_1'-1}\|
	\|E_n' P\|
	\|(PE_nP)^{k_1}\|
	\| P E_n'\|
	\|E_n'^{k_2'-2}\|
	\cdots
\nonumber
\\[-8truemm]
&\qquad\qquad\qquad\qquad\qquad\qquad\quad
	{}\cdots
	\|E_n'^{k_\ell'-2}\|
	\|E_n' P \|
	\|(PE_nP)^{k_\ell}\|
	\| P E_n'\|
	\|E_n'^{k_{\ell+1}'-1}\|
	\vphantom{\mathop{\sum\cdots\sum}_{k_1+\cdots+k_{m+1}=n}}
\nonumber\displaybreak[0]\\
&\le
	M^{\ell} K^{\ell+1}
	C_n^{2\ell}
	\mathop{\sum_{k\in\mathbb{N}^\ell, \, k' \geq (1,2,\dots,2,1)}}_{|k| + |k'| = n}	
	\mu^{|k'|- 2 \ell }
\nonumber\displaybreak[0]\\
&=
	M^{\ell} K^{\ell+1}
	C_n^{2\ell}
 	\sum_{k'\in \mathbb{N}^{\ell+1}} 
	\mu^{|k'| - \ell  -1 } \mathop{\sum_{k\in \mathbb{N}^\ell}}_{|k| = n - |k'| - \ell +1 } 1
	\nonumber\displaybreak[0]\\
&=
	M^\ell K^{\ell+1}
	C_n^{2\ell}
	\sum_{k' \in \mathbb{N}^{\ell +1}}
	\begin{pmatrix}
	\medskip
	n-|k'|-\ell\\
	\ell-1
	\end{pmatrix}
	\mu^{|k'| - \ell -1 }
\nonumber\displaybreak[0]\\
&\le
	M^\ell K^{\ell+1}
	C_n^{2\ell}
	\frac{n^{\ell-1}}{(\ell-1)!}
	\frac{1}{(1-\mu)^{\ell+1}}
\nonumber\displaybreak[0]\\
&
\equiv c_{n,\ell}.
\end{align}

Collecting all these bounds, the correction to the leading contribution $(PE_nP)^n$ in Eq.\ (\ref{eqn:Expansion}) is bounded by
\begin{align}
&\Biggl\|
\sum_{\ell=1}^{\lfloor\frac{n}{2}\rfloor}R_{n,2\ell}
+\sum_{\ell=1}^{\lfloor\frac{n}{2}\rfloor}R_{n,2\ell}'
+\sum_{\ell=1}^{\lfloor\frac{n-1}{2}\rfloor}R_{n,2\ell+1}
+\sum_{\ell=1}^{\lfloor\frac{n-1}{2}\rfloor}R_{n,2\ell+1}'
+E_n'^n
\Biggr\|
\nonumber\displaybreak[0]\\
&\qquad
\le
	2\sum_{\ell=1}^\infty a_{n,\ell}
	+\sum_{\ell=1}^\infty b_{n,\ell}
	+\sum_{\ell=1}^\infty c_{n,\ell}
	+K \mu^n
\nonumber\displaybreak[0]\\
&\qquad
=
M\left[
\left(
1+
\frac{K}{1-\mu} C_n
	\right)^2
\rme^{
	\frac{MK}{1-\mu} n C_n^2
	}-1
	\right]
	+K \mu^{n}
\nonumber\\
&\qquad
=\mathcal{O}(1/n),\vphantom{\Bigl(}
\label{eqn:TotalCorrection}
\end{align}
since $C_n=\mathcal{O}(1/n)$ by Eq.~\eqref{eq:C_ndef}.
This proves the lemma.
\end{proof}


\end{document}